\renewcommand{\mathbb}[1]{\mathbbm{#1}}
\newcommand{\Cl}{\mathbbm{C}}
\newcommand{\Rl}{\mathbb{R}}
\newcommand{\Nl}{\mathbb{N}}
\definecolor{lightgray}{rgb}{0.8,0.8,0.8}
\newcommand{\Om}{\Omega}
\newcommand{\te}{\theta}
\newcommand{\la}{\lambda}
\newcommand{\eps}{\varepsilon}
\newcommand{\B}{\mathcal{B}}
\newcommand{\K}{\mathcal{K}}
\newcommand{\W}{\mathcal{W}}
\newcommand{\Oold}{\mathcal{O}}
\newcommand{\Ss}{\mathscr{S}}   % Schwartz space
\newcommand{\frA}{\frak A}
\newcommand{\frF}{\frak F}
\newcommand{\bSigma}{\boldsymbol{\Sigma}}
\newcommand{\CCR}{\text{CCR}\,}
\newcommand{\CAR}{\text{CAR}\,}
\DeclareMathOperator{\supp}{supp}
\newcommand{\Sol}{{\rm Sol}}
\newtheorem{theorem}{Theorem}[section]
\newtheorem{proposition}[theorem]{Proposition}
\newtheorem{lemma}[theorem]{Lemma}
\newtheorem{corollary}[theorem]{Corollary}
\newtheorem{definition}[theorem]{Definition}
\newtheorem{example}[theorem]{Example}
\numberwithin{equation}{section}
\newlength{\dinwidth}
\newlength{\dinmargin}
\title{Linear hyperbolic PDEs with non-commutative time}
\author{
	 Gandalf Lechner and Rainer Verch\\
	\vspace*{-0mm}\\
	{\small Institut f\"ur Theoretische Physik, Universit\"at Leipzig, 04103 Leipzig, Germany}\\
	{\small\tt gandalf.lechner@uni-leipzig.de,  rainer.verch@itp.uni-leipzig.de}
}
\date{December 10, 2013}
\newcommand{\Cin}{{\mathscr{C}^\infty}}
\newcommand{\Ccin}{{\mathscr{C}^\infty_0}{}}
\newcommand{\Ltwon}{{\mathscr{L}^2}}
\begin{document}
\maketitle
%%===================================================

%%===================================================
\begin{abstract}
	Motivated by wave or Dirac equations on noncommutative deformations of Minkowski space, linear integro-differential equations of the form $(D+\la W)f=0$ are studied, where $D$ is a normal or prenormal hyperbolic differential operator on $\Rl^n$, $\la\in\Cl$ is a coupling constant, and $W$ is a regular integral operator with compactly supported kernel. In particular, $W$ can be non-local in time, so that a Hamiltonian formulation is not possible. It is shown that for sufficiently small $|\la|$, the hyperbolic character of $D$ is essentially preserved. Unique advanced/retarded fundamental solutions are constructed by means of a convergent expansion in $\la$, and the solution spaces are analyzed. It is shown that the acausal behavior of the solutions is well-controlled, but the Cauchy problem is ill-posed in general. Nonetheless, a scattering operator can be calculated which describes the effect of $W$ on the space of solutions of $D$. 
	
	It is also described how these structures occur in the context of noncommutative Minkowski space, and how the results obtained here can be used for the analysis of classical and quantum field theories on such spaces.
\end{abstract}
%%===================================================

%%%%%%%%%%%%%%%%%%%%%%%%%%%%%%%%%%%%
%% Contents
%%%%%%%%%%%%%%%%%%%%%%%%%%%%%%%%%%%%
%%===================================================
\section{Introduction}
%%===================================================

Hyperbolic partial differential equations play a prominent role in many areas of physics, particularly in quantum field theory. They provide the dynamics for linear quantum field models which can be viewed as starting points, or building blocks, of any quantum field theory --- most importantly, of quantum field models describing interactions of
elementary particle physics (see e.g.\ \cite{Haag:1996,StreaterWightman:1964,BogolubovLogunovOksakTodorov:1990} for a synopsis). 

Moreover, hyperbolic partial differential equations are of similarly prominent importance when studying quantum fields  propagating on curved spacetime manifolds in semiclassical approaches to understanding the interplay of elementary particle physics and gravity \cite{Wald:1994,HollandsWald:2010,BeniniDappiaggiHack:2013}. The mathematical questions related to such differential equations, such as theorems on the global well-posedness of the Cauchy problem and on the causal propagation character of solutions of hyperbolic partial differential operators, and Dirac operators, on globally hyperbolic Lorentzian manifolds, are well understood (see, for example \cite{BarGinouxPfaffle:2007}).

However, the situation is quite different when considering non-commutative (or non-local) modifications of hyperbolic equations, for example
\begin{align*}
	\Box f+w\star f=0,
\end{align*}
where $\Box$ is the d'Alembertian and $\star$ a non-commutative product between $w$ and $f$, typically given by an integral expression. Similar differential equations have been studied before, even in the non-linear case \cite{DurhuusGayral:2010}. However, the novel point here is that we consider the case where the product $\star$ also involves integration in the time coordinate (``non-commutative time''). Then a Hamiltonian formulation is not possible, and the usual theorems referred to above do not apply. Nonetheless, such equations appear in the context of field theories on noncommutative spacetimes; and their analysis is the topic of the present paper. 

To explain our setting and motivation in more detail, it is instructive to first recall some well-known facts on the solutions and dynamics of linear hyperbolic differential equations without non-local perturbations. 

One of the easiest ways of introducing an interaction in a field theory --- be it a classical field or a quantum field --- is to couple the field to an external potential. Given sufficient regularity conditions that the external potential has to obey, one finds under very general circumstances that there is a scattering operator which compares the dynamics
of solutions to a hyperbolic wave equation with an external potential to those without. To explain this more specifically, assume that we are given some second order linear hyperbolic differential operator
\begin{align*}
	D:C^\infty(\Rl^n,\Cl^N)\to C^\infty(\Rl^n,\Cl^N)
\end{align*}
whose principal part is the d'Alembertian $\Box = \partial^2/\partial x_0^2- \sum_{k=1}^{n-1} \partial^2/\partial x_k^2$.
\footnote{The operator is implicitly regarded as a second order partial differential operator on Minkowski spacetime with metric principal part, hence we use the convention to denote elements of $\Rl^n$ as $x = (x_0,x_1,\ldots,x_{n-1})$
where $x_0$ is viewed as time-coordinate and the $x_k$, $k=1,\ldots,n-1$ are spatial coordinates.} We regard $D$ as the ``unperturbed'' hyperbolic differential operator. It defines a linear wave equation
\begin{align*}
	Df=0
	\,,
\end{align*}
and any $f\in C^\infty(\Rl^n,\Cl^N)$ fulfilling this equation and having the property that its restrictions to hyperplanes of constant time $x_0$ have compact support will hence be called a solution of the unperturbed wave equation with compactly supported Cauchy data.
We collect all such solutions in a set $\Sol_0$ (which naturally is a vector space). 
Next, we can ``perturb'' the hyperbolic differential operator $D$ by adding some other linear map $W :C^\infty(\Rl^n,\Cl^N) \to C^\infty(\Rl^n,\Cl^N)$ to obtain perturbed operators
\begin{align*}
	D_\lambda = D + \lambda W
	\,,
\end{align*}
parametrized by $\lambda \in \Cl$. Thus, $W$ is viewed as an external potential or ``perturbation'', and $\lambda$ is a parameter which (if taken real and positive) scales the coupling strength of the fields. At this stage, $W$ can be very general, but
let us first restrict to the case that $W$ acts as a pointwise (matrix-valued) multiplication operator, i.e. $(Wf)(x) = w(x)f(x)$ with $w\in C^\infty(\Rl^n,\Cl^{N\times N})$. In this case, $D_\lambda$ is again a second order linear hyperbolic partial differential operator; we call $D_\lambda f = 0$ the perturbed wave equation (if $\lambda \ne 0$) and we denote by $\Sol_\lambda$ the set of smooth solutions to this differential equation which have compactly supported Cauchy data, in complete analogy to the case of $D$ before.% Clearly, it holds that $D = D_0$. 

Under these assumptions, the Cauchy problem for the wave equation $D_{\lambda}f = 0$ is well-posed, i.e. for any $C_0^\infty$-Cauchy data, it has a unique solution. Moreover, there are uniquely determined advanced and retarded Green's operators $R_\lambda^\pm: C_0^\infty(\Rl^n,\Cl^N) \to C^\infty(\Rl^n,\Cl^N)$ such that $D_\lambda R_\lambda^\pm g = g = R_\lambda^\pm D_\lambda g$ for all $ g \in C_0^\infty(\Rl^n,\Cl^N)$ and $\supp(R_\lambda^\pm g) \subset J^{\pm}(\supp g)$, where $J^\pm({\rm supp}\,g)$ is the causal future(+), resp.\ causal past$(-)$ of $\supp g$. Defining the causal propagator $R_\lambda = R^-_\lambda - R_\lambda^+$, it then holds that  $R_{\lambda}$ maps $C_0^\infty(\Rl^n,\Cl^N)$ onto $\Sol_\lambda$, see Sec.~\ref{Section:Basics} and the literature cited there for details. As an aside, we remark that this holds more generally for the case that $D_\lambda$ is a second order 
linear operator with metric principal part on a globally hyperbolic Lorentzian manifold, or that $D_\lambda$ is a Dirac-type operator, see \cite{Dimock:1982}. Therefore,
our restriction here to the case of $n$-dimensional Minkowski spacetime is mainly for the sake of simplicity.

With these very general statements about the solutions to the wave equation $D_\lambda f = 0$, one is also in the position to define a scattering operator alluded to above.  To this end, one first introduces the M\o ller operators. The M\o ller operators and correspondingly the scattering operator are known to exist under quite general conditions on the potential $w$, a sufficient (but not necessary) condition is $w\in C_0^\infty(\Rl^n,\Cl^{N\times N})$. In this situation, there are $\tau_\pm \in \Rl$ so that $w$ vanishes on the two future/past regions $\Sigma_{\tau_{\pm}}^\pm:=\{ x \in \Rl^n: \pm x_0 > \pm \tau_\pm \}$. The M\o ller operators are then defined by 
\begin{align*}
		\Omega_{\lambda,\pm}: \Sol_\lambda \to \Sol_0\,, \quad R_\lambda g  \mapsto R_0 g
		\,,\qquad
		\supp g\subset\Sigma^\pm_{\tau_\pm}\,,
\end{align*}
and the scattering operator by
\begin{align*}
	S_\lambda = \Omega_{\lambda,+} (\Omega_{\lambda,-})^{-1} : \Sol_0 \to \Sol_0 \,.
\end{align*}
The action of the scattering operator compares solutions of the perturbed wave equation to solutions of the unperturbed wave equation in the following way. First, a given solution of the unperturbed wave equation is propagated ``backwards in time'' to the spacetime region $\Sigma_{\tau_-}^-$ where the perturbation $W$ vanishes. Therefore, in this spacetime region, the solution of the unperturbed wave equation is also a solution to the perturbed wave equation, and thus (by the well-posedness of the Cauchy problem) it determines a solution to the perturbed wave equation
on all of Minkowski spacetime. That solution to the perturbed wave equation is propagated ``forward in time'' to the spacetime region $\Sigma_{\tau^+}^+$ where again, $W$ vanishes. The solution to the perturbed wave equation therefore determines a new solution to the unperturbed wave equation on all of Minkowski spacetime. The scattering operator assigns this new solution to the unperturbed wave equation to the one initially given. Equivalently, $S_\la$ can be viewed as mapping the past asymptotics of a solution of $D_\la$ to its future asymptotics, and thus provides partial information about the dynamics described by $D_\la$. We note that this way of describing the scattering process is analogous to the concept of ``relative Cauchy evolution'' which was 
studied in the context of local general covariant quantum field theory \cite{BrunettiFredenhagenVerch:2001,FewsterVerch:2012,HollandsWald:2001}.

Upon choosing a Cauchy data formulation of the wave equation, the scattering operator can, in fact, be cast in a perhaps more familiar form, similar to the way a scattering operator appears in quantum mechanics. Let $u \in C_0^\infty(\Rl^{n-1},\Cl^N) \times   C_0^\infty(\Rl^{n-1},\Cl^N)$ denote some Cauchy data for any of the wave equations at $x_0 = 0$. Let $f_\lambda[u]$ be the solution of the
wave equation, $D_\lambda f_\lambda[u] = 0$ having these Cauchy data at $x_0 = 0$. Then we denote by 
\begin{align*}
	T_{\la,t}:C_0^\infty(\Rl^{n-1},\Cl^N) \times   C_0^\infty(\Rl^{n-1},\Cl^N)
	\to C_0^\infty(\Rl^{n-1},\Cl^N) \times   C_0^\infty(\Rl^{n-1},\Cl^N)
\end{align*}
the operator which assigns to $u$ the Cauchy data of $f_\lambda[u]$ at $x_0 = t$. In this setting, the M\o ller operators are
\[ \Omega_{\lambda,\pm} = \lim_{t \to \pm \infty} T_{0,t}(T_{\lambda,t})^{-1} \]
(in fact, the limits are already assumed for finite $t$ such that $\pm t > \pm \tau_{\pm}$), and again the scattering operator is $S_\lambda =  \Omega_{\lambda,+} (\Omega_{\lambda,-})^{-1}$.
  
One can therefore see that the class of perturbations $W$ to the linear hyperbolic wave operators can be considerably enlarged when one adopts the point of view to put the wave equation into its Cauchy data form (or, as it is sometimes called, in Hamiltonian form): Then the wave equation is re-formulated as a one-parametric evolution equation
\begin{align} \label{eq:evoleq}
	\frac{d}{dt}u_t + A_t u_t & = 0\,,% \quad (t \in \Rl) 
\end{align}
where now $u_t = T_{0,t}u$ is in the Cauchy data space $C_0^\infty(\Rl^{n-1},\Cl^N) \times C_0^\infty(\Rl^{n-1},\Cl^N)$ and for any real $t$, $A_t$ is a linear operator on that space. Clearly, this kind of evolution equation can immediately be generalized to the case where the Cauchy data space is replaced by a suitable Sobolev space or Hilbert space, or a more general type of topological vector space. If \eqref{eq:evoleq} is the evolution equation corresponding to a wave equation, then the $A_t$ are partial differential operators and $A_t$ is local in the sense that $\supp(A_t u) \subset \supp(u)$, but that is in fact not necessary in order to obtain statements on the well-posedness of the initial value problem for such evolution equations. This circumstance makes the class of problems which can be treated in form of an evolution equation of the form \eqref{eq:evoleq} very large, as is well known (see, e.g. \cite{SellYou:2002} and literature cited therein as just one sample reference).

In particular, assuming that \eqref{eq:evoleq} corresponds to the evolution equation of the wave equation $Df = 0$, then one can treat perturbation operators $W$ whose corresponding evolution equations assume the form
\begin{align} \label{eq:evoleqpert}
 \frac{d}{dt}u_{\lambda,t} + (A_t + \lambda B_t) u_{\lambda,t} & = 0 \,,
\end{align}
where the perturbation operators $B_t$ can be fairly arbitrary, apart from the requirement that one would indeed still like to assert the well-posedness of the initial value problem for this equation (if possible globally, i.e.\ with solutions being defined, and being sufficiently regular, for all $t \in \Rl$). This means that the $B_t$ need not be local operators, but can be quite general pseudodifferential operators, or integro-differential operators,
for example.

However, what is clearly required in this approach is that $W$ acts ``locally in time'' so as to allow an equivalent rewriting of the wave equation $D_\lambda f = 0$ in terms of \eqref{eq:evoleqpert} where the perturbation $W$ can be re-expressed by a family of operators $B_t$, each acting on the Cauchy data $u_{\lambda,t}$ of the solution $f$ at time $x_0 = t$. There exist perturbations where $W$ is not local in time in this sense and where, hence, the perturbed wave equation $D_\lambda f = 0$ cannot be cast into the form of an evolution equation of the type \eqref{eq:evoleqpert}. Typical examples are integral operators involving integration over the time variable.

It is our main aim to analyze such non-local in time hyperbolic differential equations. While we are restricting ourselves to the mathematical core of the problem in this article, it might be useful to summarize here together with our results also our motivations, which are related to quantum physics, and more particularly quantum field theory on noncommutative spacetimes. In fact, examples of non-local in time perturbations are given by certain noncommutative multiplication operators, such as the Moyal product \cite{GayralGraciaBondiaIochumSchuckerVarilly:2004}. Let us suppose that $n$ is even, and that $\theta$ is some fixed, antisymmetric real invertible $(n \times n)$ matrix. Then let
$(\alpha_z f)(x) = f(x+z)$ be the action of the translations on measurable functions $g$ on $\Rl^n$. The Moyal-Rieffel
product between smooth functions is given by
\begin{align}\label{eq:RieffelProduct}
	w \star f = \int_{\Rl^{n}}dp\, \int_{\Rl^n}dz\, e^{2\pi i(p,z)} \alpha_{\theta p}w \cdot \alpha_z f
	\,,
\end{align}
understood as an oscillatory integral on, say, Schwartz space. This product between functions is (owing to $\theta$ having full rank) non-local in time (and also in space) even in the sense that, if $w$ and $f$ both have compact support, then in general $w \star f$ will not have compact support in any of its $x_\mu$-variables. 

The Moyal-Rieffel product and variants thereof play an important role in deformation quantization \cite{Waldmann:2007}, and also in the context noncommutative Minkowski spacetime, see for example the review \cite{Szabo:2003}. However, before presenting our vantage point as to why we think that wave equations with non-local in time perturbations of the type provided by the Moyal-Rieffel product are useful when studying quantum fields on non-commutative spacetimes, it may be more instructive to first present the assumptions made and results achieved in the present work. In describing this, we will also summarize the content of this article.

Our starting point is a wave operator $D$ as before, or a first oder linear partial differential operator $D$ for which there is another first order linear partial differential operator $D'$ such that $D'D$ is a wave operator. Precise definitions appear in Sec.~\ref{Section:Basics}. Known results on the existence and uniqueness of advanced and retarded Green's operators $R^\pm$ for $D$ will also be summarized there. Then we consider a perturbation $W$ of $D$ which is a $C_0^\infty$-kernel operator, i.e.
\[ (Wf)(x) = \int dy\, w(x,y) f(y) \]
with $w \in C_0^\infty(\Rl^n \times \Rl^n, \Cl^{N\times N})$, and as before, we consider the perturbed operators
\[ D_\lambda = D + \lambda W \,.\]
The main result of Sec.~\ref{Section:GreenAndCauchy} is that, provided that $|\lambda|$ is sufficiently small, there exist fundamental solutions (Green's operators) $R_\lambda^\pm$ for $D_\lambda$, distinguished by the property that 
\[ D_\lambda R_\lambda^\pm f = f = R_\lambda^\pm D_\lambda f \]
for all $f \in C_0^\infty(\Rl^n,\Cl^N)$, and a localization property of $\supp(R_\la^\pm f)$ which is dominated by the causal propagation of the unperturbed fundamental solutions $R^\pm$. Moreover, we show that such Green's operators for $D_\lambda$ are unique. For Cauchy data imposed outside of the support region of $w$, i.e. outside the causal closure of some set $K$ such that $\supp w\subset K\times K$, the Cauchy problem for $D_\lambda f = 0$ turns out to be well-posed. In contrast, we will also show that there are $C_0^\infty$-kernels $w$ such that the Cauchy problem for $D_\lambda f=0$ is ill-posed; both the existence and uniqueness of its solutions can break down in this situation.

Nonetheless, we will establish that, provided $|\lambda|$ is sufficiently small, it is possible to define a scattering
operator, relying on the well-posedness of $D_\lambda f = 0$ for Cauchy data imposed outside of the support region of $w$.  

These findings allow, from our point of view, to draw the following conclusion: If the perturbation $W$ is non-local in time, one can in general not expect that the resulting wave equation $D_\lambda f = 0$ permits a well-posed Cauchy problem, or that its solutions propagate strictly causally --- in particular, not for arbitrary $\lambda$. Only for small coupling, $\lambda W$ can be considered as
a small perturbation of the wave operator $D$ so that the dynamics is still mainly determined by the hyperbolic character
of $D$ and therefore admits unique advanced and retarded Green's operators $R_\lambda^\pm$. When $|\lambda|$ is made
large, it may happen that the hyperbolic character of $D$ is no longer the dominating contribution to the dynamics. Therefore, we think that the scattering of Cauchy data in the past of $K$ to the future of $K$, described by the scattering operator $S_\lambda$, which exists for sufficiently small $|\lambda|$, should actually be seen as the generalization, or replacement, of the well-posedness of the Cauchy problem for the wave equation $(D + \lambda W)f = 0$ when $W$ is non-local in time.

We also show that the ``generator'' of the scattering operator $S_\lambda$ with respect to variation of the coupling parameter
$\lambda$ is given by $RW$, i.e.
\[ \left. \frac{d(S_\lambda f_0)}{d \lambda} \right|_{\lambda = 0} = RW f_0 \]
for all $f_0 \in {\rm Sol}_0$. While this quantity might seem a quite weak replacement for the full dynamics as described by the Cauchy problem, it is actually an important object in the context of field theory, as recalled below.

In Sec.~\ref{Section:Quantization}, we consider the quantization of (solutions of) the wave equations, or Dirac-type equations, in terms of assigning abstract CCR-algebras or CAR-algebras to the corresponding solution spaces. This procedure is entirely standard;
we establish that, under suitable --- very general --- conditions, the scattering operators $S_\lambda$ defined on
${\rm Sol}_0$ for underlying $C_0^\infty$-kernel operators $W$ and for sufficiently small $|\lambda|$ induce $C^*$-algebra morphisms
$s_\lambda$ on the associated CCR or CAR algebras of the quantized fields. They are also called ``scattering morphisms'' or
``scattering Bogoliubov transformations''. 

In Sec.~\ref{Section:StarProducts}, we consider perturbation terms given by certain star products, i.e. $Wf=w\star f$ with a product $\star$ of the form \eqref{eq:RieffelProduct}. Such perturbations are limits of $C_0^\infty$-kernel operators, and we discuss two examples: The Moyal product itself, corresponding to the canonical translation action $\alpha$, and a ``locally noncommutative star product'' as discussed in \cite{HellerNeumaierWaldmann:2006,BahnsWaldmann:2007,LechnerWaldmann:2011}, in which the action $\alpha$ only acts non-trivially in a bounded region. Both these perturbations do not have smooth and compactly supported integral kernels (in each example, one of the two properties fails), but one always has a family $W_\eps$ of $C_0^\infty$-kernel operators such that $W_\eps\to W$ for $\eps\to0$. One thus obtains, for each positive $\eps$, scattering operators $S_{\eps,\lambda}$ provided $|\lambda|$ is sufficiently small (possibly depending on $\eps$). Nevertheless, the ``generators'' $d(S_{\eps,\lambda})/d\lambda |_{\lambda = 0} = RW_\eps$ are independent of $\lambda$, and one can argue that in the context of quantum field theory, these are mainly the objects one is interested in. For the case of Dirac operators and the Moyal product, such an investigation can also be found in the thesis \cite{Borris:2011}.

This brings us back to the motivation for considering perturbation operators $W$ arising from Moyal-Rieffel-type products
in the context of quantum fields on non-commutative spacetimes; here we refer to discussion in \cite{BorrisVerch:2010} and \cite{Verch:2011} where such ideas are discussed in more detail.  Suppose that we have a quantum field on Minkowski
spacetime, i.e. an operator-valued distribution $\psi(f)$, $f \in C_0^\infty(\Rl^n,\Cl^N)$, fulfilling the wave equation
$\psi(Df) = 0$; concretely, in line with \cite{BorrisVerch:2010,Verch:2011}, we think of the Dirac field, so that it fulfills the CAR relations (cf. Sec.~\ref{Section:Quantization})
\[ \psi(f_1)^*\psi(f_2) + \psi(f_2)\psi(f_1)^* = i \langle f_1,\gamma^0 R f_2 \rangle \cdot 1\,. \]
A natural way of setting up a quantum field like the Dirac field on a ``non-commutative'' spacetime, such as the Moyal-Rieffel deformed Minkowski-spacetime, is to replace the standard product of field operators by the corresponding Rieffel-deformed product with respect to some action of the translation group. Either on a classical or quantum level, such a procedure is contained in most approaches to quantum field theory on noncommutative spacetime, see for example the articles \cite{DoplicherFredenhagenRoberts:1995,DouglasNekrasov:2001,Szabo:2003,GrosseWulkenhaar:2005,Rivasseau:2007,Soloviev:2008,GrosseLechner:2008} and the literature cited therein. On the quantum level, this is equivalent to ``deforming'' the quantum field operators by the procedure of ``warped convolution'' \cite{GrosseLechner:2007,BuchholzSummers:2008,GrosseLechner:2008,BuchholzLechnerSummers:2011}. 

There are however many different approaches to the task of defining quantum field theories on noncommutative spacetimes, one other notable possibility being to regard Moyal-Rieffel deformed Minkowski
spacetime in the light of a ``Lorentzian spectral triple'', in the spirit of Connes' spectral geometry \cite{Connes:1990,GraciaBondiaVrillyFigueroa:2001}. Also here, one of the essentials is to replace the commutative algebra of scalar ``test-functions'' on Minkowski spacetime $\Rl^n$, endowed with the usual pointwise product $h\cdot g$, by the non-commutative algebra of ``test-functions'', where the pointwise product is replaced by the Moyal-Rieffel product $h \star g$, or a variant thereof. One may wonder how that changing of products takes effect on the quantum field operators, and one possible response to that question is: by different forms of scattering of the quantized Dirac field by an external, scalar potential.

On the usual (``undeformed'') Minkowski spacetime, the coupling of the field to an external potential
is by the commutative pointwise product of functions, e.g.\ the corresponding perturbation operator for the Dirac equation is 
$Wf=w\cdot f$, $f \in C_0^\infty(\Rl^n,\Cl)$, for some function $w \in C_0^\infty(\Rl^n,\Rl)$, say. Then one gets the scattering operator $S_\la$ for the corresponding $D_\la$ and a scattering morphism $s_\la$ on the algebra of Dirac field operators $\psi(f)$, which results in \cite{BorrisVerch:2010}
\begin{align*}
	\left. \frac{d}{d\lambda} s_\lambda(\psi(f))\right|_{\lambda=0} =
	\psi(w\cdot Rf) = i[:\psi^+\psi:(w),\psi(f)]
	\,,
\end{align*}
where on the right hand side, $:\psi^+\psi:(w)$ is the Wick-product (in vacuum representation) of the adjoint  quantized Dirac field and the quantized Dirac field --- which is a scalar quantum field, hence observable, with the interpretation of a quantized field density.

If we consider an external scalar potential on the Moyal-Rieffel deformed Minkowski spacetime, then the perturbation
operator $Wf = w \star f$ is the limit as $\eps\to0$ of a sequence of $C_0^\infty$-kernel operators $W_\eps$. In the case of a Moyal product with commutative time (where a Hamiltonian formulation of the differential equation is at hand), it was shown in \cite{BorrisVerch:2010} that the scattering morphism $s_\lambda$ takes the form
 \[ \left. \lim_{\eps \to 0}\frac{d}{d\lambda} s_{\eps,\lambda}(\psi(f))\right|_{\lambda = 0} = \psi(w \star Rf) = i[:\psi^+\psi:(w)\,\overset{\star}{,}\,\psi(f)] \]
where on the right hand side, there is a suitably Moyal-Rieffel deformed version of the commutator. More interesting, however, is the fact that one can expect that there are (symmetric) operators $X(w)$ such that
   \[  \left. \lim_{\eps \to 0}\frac{d}{d\lambda} s_{\eps,\lambda}(\psi(f))\right|_{\lambda = 0} = i[X(w),\psi(f)]\,, \]
and these operators would obviously be different from the operators $:\psi^+\psi:(w)$; in particular, they have completely different localization properties. In the spirit of ``Bogoliubov's formula'' \cite{BogolubovShirkov:1980}, the operators $X(w)$ should actually be regarded
as ``observables'' of the quantum field associated with $w$ regarded as an element of the Moyal-Rieffel deformed ``test-functions'' over Minkowski spacetime, see \cite{BorrisVerch:2010} for further discussion. Following this line of argument offers a systematic way to associate quantum field observables to elements of a ``algebra of non-commutative coordinates'' seen, at least tentatively, from
a perspective of Connes' spectral geometry. 

With the analysis of non-local in time perturbations carried out in the present paper, one can expect to obtain similar results also in the case of fully noncommutative spacetime, including noncommutative time. In fact, since the warped convolution defines a representation of the Rieffel-deformed product \cite{BuchholzLechnerSummers:2011}, the $X(w)$ should be related to warped convolutions of the $:\psi^+\psi(w):$. We therefore expect that our results will be helpful in comparing different approaches to quantum field theory on noncommutative spacetimes, and provide tools to extract the relevant physical effects. These questions will be investigated elsewhere.
% \\
% \\
% The reader should note that, while complete proofs are given for all statements presented in this work, the present article is
% still to be regarded as being in a preliminary state, since several statements can, and should, be generalized to
% somewhat extended settings, and some relations to other, possibly complementary, approaches in the literature remain as yet to
% be discussed in an extended version of the present work which should be forthcoming.    

%========================================================
\section{(Pre-)normally hyperbolic differential operators with smoothing compactly localized perturbations}\label{Section:Basics}
%========================================================

%========================================================
\subsection{Basic Definitions and Preliminaries}
%========================================================

In this subsection, we set up our notation and introduce the objects of our investigations. Starting with the basic geometric data, Minkowski space $\Rl^n$, $n\geq2$, is endowed with its standard metric of signature $+-\ldots -$, and we put $s:=n-1$. The causal future $(+)$/past $(-)$ of a set $B\subset\Rl^n$ is denoted $J^\pm(B)$. In particular, $V^\pm:=J^\pm(\{0\})$ is  the forward/backward light cone. A subset $M$ of Minkowski space is called {\em causally convex} if for any causal curve in $\Rl^n$ with endpoints in $M$, the whole curve lies in $M$. The term {\em Cauchy hyperplane} is used to refer to Cauchy surfaces of the form $\Sigma=\Sigma_t=\{t\}\times\Rl^s$, $t\in\Rl$, and we also employ the short hand notations $\Sigma_t^\pm$ for the interior of $J^\pm(\Sigma_t)$.

We write
\begin{align*}
	\Cin := C^\infty(\Rl^n,\Cl^N)
	\,,\quad
	\Ccin := C_0^\infty(\Rl^n,\Cl^N)
	\,,\quad
	\Ltwon := L^2(\Rl^n,\Cl^N)
	\,,
\end{align*}
for the space of smooth functions, smooth functions of compact support, and (equivalence classes of) Lebesgue square integrable functions $f:\Rl^n\to\Cl^N$, respectively, where $N\in\Nl$ is some integer. For subsets $K\subset\Rl^n$, we write $\Cin(K)$ for the subspace of $\Cin(K)$ of functions with support in $K$, and analogously for the other spaces. All these spaces are endowed with their standard topologies, see e.g. \cite{Treves:1967}. The scalar product in $\Ltwon$ is denoted $\langle\,\cdot\,,\,\cdot\,\rangle$, and the one in $\Cl^N$ by $(\,\cdot\,,\,\cdot\,)$, i.e. $\langle f,g\rangle =\int dx\,(f(x),g(x))$. The associated norms are $\|\cdot\|_2$ and $|\cdot|$, respectively. 

By ``differential operator'' we shall always mean linear finite order differential operator with smooth coefficients, and by ``Cauchy data $u$ on some Cauchy hyperplane $\Sigma$'', we shall always mean smooth compactly supported Cauchy data (sometimes also called $\Ccin$-Cauchy data). In the context of a first order (in time) differential operator, this is a function $u\in \Ccin(\Sigma)$, whereas in the context of a second order operator, this is a pair of functions, $u\in \Ccin(\Sigma)\times\Ccin(\Sigma)$.
\\\\
\indent In the following, we will study differential operators of the form $D_\la=D+\la\,W$, where $D$ is a (pre-)normally hyperbolic differential operator, $\la\in\Cl$ a coupling constant, and $W$ some non-local perturbation term. These objects are defined next.

%========================================================
\begin{definition}\label{definition:PreNormallyHyperbolic}{\bf ((Pre-)normally hyperbolic differential operators)}
	\begin{enumerate}
		\item A linear differential operator $D$ on $\Cin$ is called normally hyperbolic if there exist smooth matrix-valued functions $U^0,...,U^s,V:\Rl^n\to\Cl^{N\times N}$ such that
		\begin{align}
			D
			=
			\frac{\partial^2}{\partial x_0^2}
			-
			\sum_{k=1}^s\frac{\partial^2}{\partial x_k^2}
			+
			\sum_{\mu=0}^s U^\mu(x)\frac{\partial}{\partial x_\mu}
			+V(x)
			\,.
		\end{align}
		\item A linear differential operator $D$ on $\Cin$ is called prenormally hyperbolic if $D$ is of first order, and there exists another first order differential operator $D'$ on $\Cin$ such that $D'D$ is normally hyperbolic.
	\end{enumerate}
\end{definition}
%========================================================

Whereas the first definition is standard in the context of wave equations (see for example \cite{BarGinouxPfaffle:2007, Waldmann:2012}), the second one is taken from \cite[Def.~1]{Muhlhoff:2011}, and basically tailored towards a convenient description of Dirac operators. In fact, the Dirac operator $D=-i\sum_{\mu=0}^s\gamma^\mu\frac{\partial}{\partial x_\mu}+V(x)$ is prenormally hyperbolic when the matrices $\gamma^\mu$ generate an irreducible representation of the complexified Clifford algebra $\Cl l_{1,s}$ \cite{Thaller:1991,Coquereaux:1988}, and $V:\Rl^n\to\Cl^{N\times N}$ is smooth. We also mention that with $D$, also $D'$ is prenormally hyperbolic.

For our analysis, both the normally hyperbolic and the prenormally hyperbolic case are equally well suited, and we will thus consider an arbitrary (pre-)normally hyperbolic differential operator $D$.

In \cite{Muhlhoff:2011}, it was shown that prenormally hyperbolic operators inherit many of the well-known properties of normally hyperbolic ones \cite{BarGinouxPfaffle:2007}, see also \cite{Dimock:1982, Sanders:2009-2, BorrisVerch:2010} for corresponding arguments for Dirac operators. We summarize here the characteristic features of (pre-)normally hyperbolic operators we will rely on in this article.

%========================================================
\begin{theorem}\label{Theorem:PropertiesOfPreNormallyHyperbolicOperators}{\bf (Properties of (pre-)normally hyperbolic differential operators)}\\
	Let $D$ be a (pre-)normally hyperbolic differential operator. Then 
	\begin{enumerate}
		\item[D1)]\label{item:Hyp-LinCont} $D:\Cin\to\Cin$ is a linear continuous map.
		\item[D2)]\label{item:Hyp-Local} $D$ is local in the sense that $\supp(Df)\subset\supp f$ for any $f\in\Cin$.
		\item[D3)]\label{item:Hyp-Cauchy} For any Cauchy hyperplane $\Sigma$ and any Cauchy data $u$ on $\Sigma$, there exists unique $f_0[u]\in\Cin$ such that $Df_0[u]=0$ and $f_0[u]|_\Sigma=u$ (for prenormally hyperbolic $D$), respectively $f_0[u]|_\Sigma=u_1$, $\partial_0 f_0[u]|_\Sigma=u_2$, $u=(u_1,u_2)$ (for normally hyperbolic $D$).
		\item[D4)]\label{item:Hyp-Green} $D$ has unique advanced and retarded fundamental solutions, i.e. continuous linear maps $R^\pm:\Ccin\to\Cin$ uniquely determined by the conditions that $DR^\pm f=f=R^\pm D f$ and $\supp (R^\pm f)\subset J^\pm(\supp f)$ for all $f\in\Ccin$.
	\end{enumerate}

	\noindent We define the causal propagator as\footnote{Also called ``Green's operator''. Note that the sign convention used here differs from the one in the previous article \cite{BorrisVerch:2010}.}
	\begin{align}
		R:=R^--R^+
		\,,
	\end{align}
	and the space of all solutions with compact support on Cauchy hyperplanes as
	\begin{align}
		\Sol_0
		:=
		\{f\in\Cin\,:\,D f=0\,,\quad \supp f \subset (V^++a_+)\cup(V^-+a_-)\,\text{ for some } a_\pm\in\Rl^n\}
		\,.
	\end{align}

	\noindent We then furthermore have
	\begin{enumerate}
		\item[D5)]\label{item:Hyp-Adjoint} The formal adjoint of $D$ w.r.t. $\langle\,\cdot\,,\,\cdot\,\rangle$, denoted $D^*$, is also (pre-)normally hyperbolic and thus satisfies D1)--D4) as well. The retarded/advanced fundamental solutions of $D^*$, denoted $S^\pm$, are related to $R^\pm$ by $S^\pm=(R^\mp)^*$.
		\item[D6)]\label{item:Hyp-Sol} Let $\Sigma$ be a Cauchy hyperplane and $\bSigma$ an open causally convex neighborhood of $\Sigma$. Then 
		\begin{align}
			\Sol_0 = R\Ccin(\bSigma)\,.
		\end{align}
		\item[D7)]\label{item:Hyp-Restrict} For any causally convex subset $M\subset\Rl^n$, the restriction of $D$ to $M$ satisfies properties analogous to D1)--D6). The retarded and advanced fundamental solutions $R_M^\pm$ of $D|_M:\Cin(M)\to\Cin(M)$ are the restrictions (in domain and range) of the $R^\pm$, $R^\pm_M:\Ccin(M)\to\Cin(M)$.
	\end{enumerate}
\end{theorem}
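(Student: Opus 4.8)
The statement to prove is Theorem \ref{Theorem:PropertiesOfPreNormallyHyperbolicOperators}, which collects D1)--D7) for (pre-)normally hyperbolic operators. The plan is to reduce everything to the well-documented normally hyperbolic case (\cite{BarGinouxPfaffle:2007}) and the prenormally hyperbolic case treated in \cite{Muhlhoff:2011}, and to assemble the restriction and adjoint statements from those building blocks.

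\medskip

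\noindent\textbf{Approach and main steps.} First I would dispose of the normally hyperbolic case: D1)--D4) and D6) are exactly the classical results of B\"ar--Ginoux--Pf\"affle on globally hyperbolic manifolds, specialized to $\Rl^n$ with the Minkowski metric, which is globally hyperbolic with the $\Sigma_t$ as Cauchy surfaces; D1) is elementary (a differential operator with smooth coefficients is continuous $\Cin\to\Cin$ and order-preserving on supports, giving D2)). For the prenormally hyperbolic case I would invoke \cite[Def.~1 and the subsequent results]{Muhlhoff:2011}: if $D$ is first order and $D'D$ is normally hyperbolic, one sets $R^\pm := R'^\pm \circ D'$ where $R'^\pm$ are the fundamental solutions of $D'D$; then $D R^\pm f = f$ follows from $D' D R^\pm f = D' f$ (need $D'$ injective on the relevant range, or argue via uniqueness) and $R^\pm D f = R'^\pm D' D f = f$ is immediate; the support property is inherited because $R'^\pm$ has it and $D'$ is local. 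The key point to check carefully here is that $D$ restricted to solutions of $D'D=0$ still has the Cauchy property D3) with first-order data $u$ — this is where the first-order structure is genuinely used, and it is the content of \cite{Muhlhoff:2011}; I would cite it rather than reprove it. Uniqueness of the $R^\pm$ in D4) follows from the support condition together with the usual argument: if $R^\pm$ and $\tilde R^\pm$ both work, $(R^\pm - \tilde R^\pm)f$ solves $D(\cdot)=0$ with support in $J^\pm(\supp f)$, hence has compactly supported Cauchy data on a Cauchy hyperplane in the past (resp.\ future) of $\supp f$ where it vanishes, so by D3) it is zero.

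\medskip

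\noindent\textbf{The derived statements D5)--D7).} For D5), I note that the formal adjoint $D^*$ of a (pre-)normally hyperbolic operator is again of the same type: for normally hyperbolic $D$ the principal part is (formally) self-adjoint since $\Box^* = \Box$, and the lower-order terms stay smooth; for prenormally hyperbolic $D$ with $D'D$ normally hyperbolic, one has $(D')^* \! \cdot \! D^{**}$... more precisely $D^* (D')^* = (D' D)^*$ is normally hyperbolic, so $D^*$ is prenormal with witness $(D')^*$. The relation $S^\pm = (R^\mp)^*$ is the standard Green's-operator duality: for $f,g\in\Ccin$ one checks $\langle S^\pm f, g\rangle = \langle f, R^\mp g\rangle$ by inserting $g = D R^\mp g$ and $f = D^* S^\pm f$ and integrating by parts, the boundary terms vanishing because the supports of $S^\pm f$ and $R^\mp g$ are causally separated off to infinity in opposite time directions and one intersects them with a Cauchy slab — this is exactly the argument in \cite{BarGinouxPfaffle:2007} and \cite{Dimock:1982}. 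D6), $\Sol_0 = R\,\Ccin(\BU)$, is obtained as follows: $R\,\Ccin \subset \Sol_0$ from $DR=0$ on $\Ccin$ and the support property $\supp(R f)\subset J^+(\supp f)\cup J^-(\supp f)$; conversely, given $f\in\Sol_0$ one uses a partition-of-unity / cutoff $\chi$ equal to $1$ in the past of $\Sigma$ and $0$ in the future, writes $f = \chi f + (1-\chi)f$ and computes that $g := -D(\chi f) = D((1-\chi)f)$ is smooth, compactly supported (the time-derivatives of $\chi$ localize it near $\Sigma$, the spatial support is controlled by $f\in\Sol_0$), supported in $\BU$ after shrinking, and $R g = f$ — the standard B\"ar--Ginoux--Pf\"affle surjectivity argument, which I would cite. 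D7) is a localization statement: for causally convex $M$, $D|_M$ is still (pre-)normally hyperbolic as a differential operator on $\Cin(M)$, $M$ with the induced metric is globally hyperbolic (causal convexity ensures the Cauchy surfaces of $\Rl^n$ restrict to Cauchy surfaces of $M$, or at least that global hyperbolicity is inherited), and the fundamental solutions restrict because of the support property combined with uniqueness: $R^\pm_M := R^\pm|_{\Ccin(M)}$ lands in $\Cin(M)$ since $\supp(R^\pm f)\subset J^\pm(\supp f)\subset M$ for $f\in\Ccin(M)$ by causal convexity, and it satisfies the defining identities, hence equals the fundamental solution of $D|_M$ by the uniqueness part of D4) applied on $M$.

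\medskip

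\noindent\textbf{Expected main obstacle.} The routine parts are the continuity/locality (D1), D2)) and the uniqueness arguments. The genuinely delicate point is D3) (and hence D4), D6)) for the \emph{prenormally} hyperbolic case, i.e.\ establishing that solving $D f = 0$ with first-order Cauchy data is equivalent, in a controlled way, to solving $D'Df=0$ with appropriately constructed second-order data — one must show the extra solutions introduced by passing to $D'D$ do not spoil existence/uniqueness and that the data map is well-posed. Since this is precisely what is proved in \cite{Muhlhoff:2011}, the honest plan is to cite that reference for the prenormal case and to reprove only the (standard) consequences D5)--D7) in the unified language, taking care in D7) that ``restriction to a causally convex $M$'' is compatible with the prenormal structure (one needs $D'|_M$ to serve as the witness, which is immediate since $D'$ is a differential operator and $M$ is open). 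A secondary technical care point is the boundary-term bookkeeping in the duality $S^\pm=(R^\mp)^*$ and in the cutoff argument for D6), but these follow the template in \cite{BarGinouxPfaffle:2007,Dimock:1982} verbatim.
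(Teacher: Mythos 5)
Your proposal is correct and follows essentially the same route as the paper, which for this theorem gives no independent proof at all but simply cites \cite{BarGinouxPfaffle:2007} for the normally hyperbolic case and \cite{Muhlhoff:2011} for the prenormal case, remarking that D5)--D7) ``can be quickly extracted'' from the latter's construction of fundamental solutions --- precisely the plan you lay out, with your construction $R^\pm = R'^\pm D'$ and the uniqueness/support argument for $DR^\pm = \mathrm{id}$ being the standard extraction. One small slip worth fixing: to conclude that $D^*$ is prenormal with witness $(D')^*$ you need $(D')^*D^*$ (not $D^*(D')^*$) to be normally hyperbolic, i.e.\ you should invoke $(D')^*D^* = (DD')^*$ and the fact that $DD'$ is normally hyperbolic --- which holds because $D'$ is itself prenormal with witness $D$, as the paper records separately --- rather than the identity $D^*(D')^* = (D'D)^*$ that you wrote.
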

%========================================================

Items {\em D1)} and {\em D2)} hold for all linear differential operators, and the proofs of {\em D3)--D7)} for the normally hyperbolic case can be found in \cite{BarGinouxPfaffle:2007}. For the prenormal case, {\em D3)--D4)} have been proven explicitly by M\"uhlhoff \cite{Muhlhoff:2011}, and also {\em D5)--D7)} can be quickly extracted from his construction of fundamental solutions.

The properties {\em D1)--D7)} are not all independent of each other. For example, {\em D7)} can be deduced from the uniqueness of the fundamental solutions, and leads to a functorial assignment from the category of all globally hyperbolic sub-spacetimes of $\Rl^n$, with isometric embeddings as arrows, to the corresponding solution spaces \cite{BrunettiFredenhagenVerch:2001,BarGinouxPfaffle:2007,FewsterVerch:2012}.
\\
\\
The second essential input in our analysis is a non-local perturbation term. As explained in the Introduction, we are interested in describing star product multipliers, or approximate version thereof, which suggests to consider integral operators with $\Ccin$-kernels. 

Let us now motivate this choice also from a mathematical perspective. Given a linear map $W:\Cin\to\Cin$, a smooth function $f\in\Cin$ can be a solution of the perturbed operator $D_\la=D+\la\,W$ only if $Wf\in D\Cin$. In the extreme case that $W\Cin\cap D\Cin=\{0\}$, $f$ is a solution of $D_\la$ if and only if $Df=0$ and $Wf=0$ separately. Such solutions can exist, and there are even examples where any solution of $Df=0$ automatically also satisfies $Wf=0$. However, these solutions are uninteresting from our point of view, as they are just solutions of the unperturbed equation and in particular do not depend on the coupling $\la$. In this situation, $D$ and $W$ completely decouple, and the scattering at $W$ will be trivial.

We will therefore rather consider situations where $W\Cin\subset D\Cin$, where an interesting solution theory for $D_\la$ is not ruled out from the beginning. As $\Ccin\subset D\Cin$ by the existence of Green's operators $R^\pm$ postulated above, this will in particular be the case when $W\Cin\subset\Ccin$. In this situation, any $f\in\Cin$ satisfies $D_\la f=Df+\la Wf=D(f+\la R^\pm Wf)$, which vanishes if $(1+\la R^+W)f=Rh$ for some $h\in\Ccin$. Hence many solutions will exist if $(1+\la R^\pm W)$ can be inverted. Formally, the inverse is given by $(1+\la R^\pm W)^{-1}=\sum_{k=0}^\infty (-\la R^\pm W)^k$, but convergence of this series in a useful topology is not automatic. 

In the next section, we will study this question in an $\Ltwon$-setting, and to this end, it is necessary that $W$ is regular enough to make $R^\pm W$ bounded in an $\Ltwon$-operator norm. These requirements can most easily be met when taking $W$ to be a $\Ccin$-kernel operator.

%========================================================
\begin{definition}\label{Definition:CInftyIntegralOperator}{\bf ($\Ccin$-kernel operator)}\\
	A $\Ccin$-kernel operator is a mapping $W:\Cin\to\Cin$ which can be represented as
	\begin{align}\label{eq:CinftyIntegralOperator}
		(Wf)(x)
		:=
		\int dy\,w(x,y)f(y)
		\,,\qquad f\in\Cin,
	\end{align}
	where $w\in\Ccin(\Rl^n\times\Rl^n,\Cl^{N\times N})$. The family of all $\Ccin$-kernel operators will be denoted $\W$.
\end{definition}
%========================================================

The relevant properties of $\Ccin$-kernel operators that we will use are the following.

%========================================================
\begin{lemma}\label{Lemma:CInftyIntegralOperator}{\bf (Properties of $\Ccin$-kernel operators)}\\
	Let $W$ be a $\Ccin$-kernel operator. Then 
	\begin{enumerate}
		\item[W1)] There exists a compact set $K\subset\Rl^n$ such that $W\Cin\subset\Ccin(K)$, and $Wf=0$ for all $f$ with $\supp f\cap K=\emptyset$,
		\item[W2)] $W$ extends to a continuous linear map $W:\Ltwon\to\Ccin(K)$.
		\item[W3)] The adjoint $W^*$ of $W$ w.r.t. $\langle\,\cdot\,,\,\cdot\,\rangle$ is also a $\Ccin$-kernel operator,
		\item[W4)] For any differential operator $Q$, also $WQ$ and $QW$ satisfy W1)--W3).
	\end{enumerate}
\end{lemma}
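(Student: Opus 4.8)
The plan is to verify each of the four properties W1)--W4) in turn, essentially by unwinding the definition of a $\Ccin$-kernel operator and using elementary properties of smooth compactly supported functions.

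For \textbf{W1)}, I would argue as follows. Since $w\in\Ccin(\Rl^n\times\Rl^n,\Cl^{N\times N})$, its support is compact, so there is a compact set $K\subset\Rl^n$ with $\supp w\subset K\times K$ (take $K$ to be the projection of $\supp w$ onto either factor, or a common enlargement). Then for $f\in\Cin$, the integrand $w(x,y)f(y)$ vanishes unless $y\in K$, so $(Wf)(x)=\int_K w(x,y)f(y)\,dy$, which is an integral over a compact set of a smooth integrand depending smoothly on the parameter $x$; differentiation under the integral sign (justified by compactness of $K$ and smoothness of $w$) shows $Wf\in\Cin$, and it vanishes for $x\notin K$ since $w(x,y)=0$ there, so $Wf\in\Ccin(K)$. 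Finally, if $\supp f\cap K=\emptyset$ then $w(x,y)f(y)=0$ for all $x,y$ (either $y\notin K$ so $w(x,y)=0$, or $y\in K$ so $f(y)=0$), hence $Wf=0$.

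For \textbf{W2)}, the same formula $(Wf)(x)=\int_K w(x,y)f(y)\,dy$ makes sense for $f\in\Ltwon$, since $w(x,\cdot)\in\Ccin\subset\Ltwon$ and one may apply the Cauchy--Schwarz inequality on $K$. One checks smoothness in $x$ again by differentiation under the integral, using that $\partial_x^\alpha w$ is continuous and compactly supported, together with a dominated-convergence/Cauchy--Schwarz argument; support in $K$ is as before. Continuity $\Ltwon\to\Ccin(K)$ then follows because each seminorm $\sup_x|\partial_x^\alpha(Wf)(x)|$ is bounded by $\|\partial_x^\alpha w\|_{L^2(K\times K)}\,\|f\|_2$ (Cauchy--Schwarz in $y$, then sup over the compact $x$-range). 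For \textbf{W3)}, writing out $\langle Wf,g\rangle=\int dx\,(\,\int dy\,w(x,y)f(y),\,g(x))$ and applying Fubini (legitimate since everything is compactly supported and smooth) identifies $W^*$ as the $\Ccin$-kernel operator with kernel $w^*(x,y):=w(y,x)^\dagger$, where $\dagger$ is the matrix adjoint in $\Cl^N$; this kernel is manifestly again in $\Ccin(\Rl^n\times\Rl^n,\Cl^{N\times N})$. For \textbf{W4)}, if $Q$ is a differential operator then $QW$ has kernel $(Q_x w)(x,y)$, still smooth and compactly supported, so $QW\in\W$; and $WQ$ has kernel obtained by integrating $w(x,y)$ against $Q$ applied to $f$ in $y$ --- integrating by parts moves $Q$ onto $w$ in the $y$-variable (no boundary terms, by compact support), producing the kernel $((Q^{t})_y w)(x,y)$ where $Q^t$ is the formal transpose of $Q$, again in $\Ccin$. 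Hence both $QW$ and $WQ$ are $\Ccin$-kernel operators and thus inherit W1)--W3) by the parts already proved.

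I do not expect any genuine obstacle here; the only points requiring a little care are the repeated justifications of differentiation under the integral sign and of Fubini, but these are completely routine given that all integrations effectively take place over the compact set $K$ (or $K\times K$) and all integrands are smooth. The mild ``trickiest'' step is bookkeeping the matrix structure in W3) and the integration by parts in W4) so that the transposed/adjoint kernels are correctly identified, but this is notational rather than substantive.
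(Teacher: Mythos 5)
Your argument follows the paper's proof essentially line by line: identify $K$ from the projection of $\supp w$, estimate seminorms by Cauchy--Schwarz for W2), take the kernel $w^*(x,y)=w(y,x)^*$ for W3), and for W4) observe that $Q_x$ acting on the left gives a new $\Ccin$-kernel while $WQ$ is handled by integration by parts in $y$. One small slip: the bound $\sup_{x}|\partial_x^\alpha(Wf)(x)|\leq\|\partial_x^\alpha w\|_{L^2(K\times K)}\,\|f\|_2$ is not correct as written, since the $L^2(K\times K)$ norm does not control the supremum over $x$ of $\|\partial_x^\alpha w(x,\cdot)\|_{L^2(K)}$; the correct constant is $C_a=\sup_{|\alpha|\le a}\sup_{x\in K}\|\partial_x^\alpha w(x,\cdot)\|_{L^2(K)}$, finite by continuity and compactness, which is what the paper's ``routine estimate'' produces.
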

%========================================================
\begin{proof}
	{\em W1)} is clearly satisfied for any compact $K$ such that $\supp w\subset K\times K$. 
	
	{\em W2)} For compact $B\subset\Rl^n$, and every $a\in\Nl_0$, we find by a routine estimate
	\begin{align*}
		\sup_{x\in B\atop|\alpha|\leq a}|\partial_x^\alpha W_w f(x)|
		&\leq
		C_a\,\|f\|_2
		\,.
	\end{align*}
	Thus $W_w$ extends to a continuous map $W_w:\Ltwon\to\Ccin(K)$. Since ${W_w}^*=W_{w^*}$ with $w^*(x,y)=w(y,x)^*$, the same holds for the adjoint ${W_w}^*$, i.e. we have also shown {\em W3)}. Finally, acting with a differential operator $Q$ from the left on $W$ just results in a different $\Ccin$-kernel, as multiplication and differentiation preserve $\Ccin$. For the action from the right, $WQ$, one has to use integration by parts to arrive at the same conclusion. This shows {\em W4)}.
\end{proof}
%========================================================

The following investigations will be based on a (pre-)normally hyperbolic differential operator $D$ and a $\Ccin$-kernel operator $W$. By $K$, we will always refer to its ``support'', i.e. a compact subset of $\Rl^n$ as in {\em W1)}. Our requirements on $W$ can probably be relaxed (see also the remarks in Section~\ref{Section:StarProducts}), in particular regarding the smoothing property {\em W2)}. However, for the sake of simplicity, we stick to $\Ccin$-kernel operators for now.

%==================================================
\subsection{Fundamental Solutions and the Cauchy Problem}\label{Section:GreenAndCauchy}
%==================================================

Having fixed a (pre-)normally hyperbolic $D$ and a $\Ccin$-kernel operator $W\in\W$, we now consider, $\la\in\Cl$,
\begin{align}\label{eq:Dlambda}
	D_\la
	:=
	D+\la\,W\,,
\end{align}
which is defined as a continuous linear map $\Cin\to\Cin$. The first main step in our investigation will be the construction of advanced and retarded fundamental solutions for $D_\la$ (for small enough $|\la|$). These fundamental solutions will first be constructed in a suitable neighborhood of the support $K$ of the perturbation, and then on all of $\Rl^n$.

\begin{wrapfigure}{r}{0.36\textwidth}
  \begin{center}
    \includegraphics[width=0.33\textwidth]{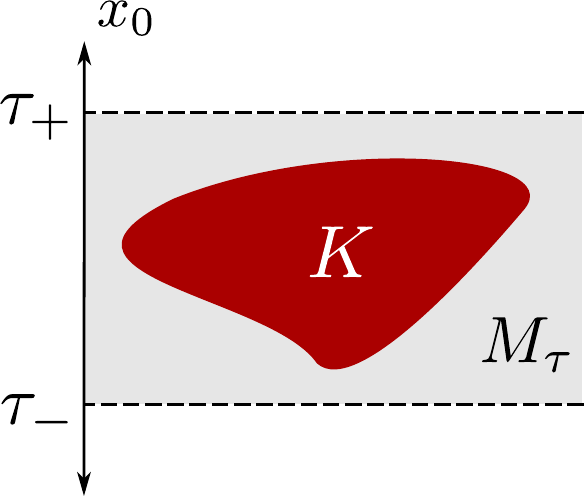}
  \end{center}
\end{wrapfigure}

To introduce this neighborhood, pick two numbers $\tau_-<\tau_+$ such that the time slice 
\begin{align*}
	M_\tau:=\{x\,:\,\tau_-<x_0<\tau_+\}=\Sigma_{\tau_-}^+\cap\Sigma_{\tau_+}^-
\end{align*}
contains $K$, where  $x=(x_0,...,x_{n-1})$ are the standard Cartesian coordinates of $\Rl^n$. This notation will be used throughout the article.

Restrictions to $M_\tau$ will generally be denoted by a subscript~$\tau$. For example, $D,D_\la$ naturally restrict to $\Cin(M_\tau)$, and these restrictions are denoted $D_\tau$, $D_{\tau,\la}$. For the fundamental solutions $R^\pm:\Ccin\to\Cin$, we denote the restriction in domain and range by $R^\pm_\tau$, i.e. $R_\tau^\pm:\Ccin(M_\tau)\to\Cin(M_\tau)$. As $Wf=0$ for $\supp f\cap M_\tau=\emptyset$, we omit the index $\tau$ when considering $W$ as restricted to $\Ltwon(M_\tau)$. Finally, the causal future/past of a set $B\subset M_\tau$ in $M_\tau$ is denoted $J_\tau^\pm(B)=J^\pm(B)\cap M_\tau$.
\\
\\
\indent As the time slice $M_\tau$ is causally convex,  {\em D7)} applies, the restricted differential operator $D_\tau$ has the unique advanced and retarded fundamental solutions $R_\tau^\pm$, i.e. $R_\tau^\pm:\Ccin(M_\tau) \to\Cin(M_\tau)$ are continuous and linear and satisfy $D_\tau R_\tau^\pm f=f=R_\tau^\pm D_\tau f$ and $\supp(R_\tau^\pm f)\subset J^\pm_\tau(\supp f)$ for any $f\in\Ccin(M_\tau)$. Due to the final extension of $M_\tau$ in time direction, we have the following additional statement. 

%========================================================
\begin{lemma}\label{lemma:RestrictedGreenOperators}
	The fundamental solutions $R_\tau^\pm$ of $D_\tau$ satisfy $R^\pm_\tau(\Ccin(M_\tau))\subset\Cin(M_\tau)\cap\Ltwon(M_\tau)$, and $R^\pm_\tau:\Ccin(M_\tau)\to\Ltwon(M_\tau)$ is continuous.
\end{lemma}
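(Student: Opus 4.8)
The plan is to establish the $\Ltwon(M_\tau)$-statement by exploiting the fact that $M_\tau$ is a *bounded-in-time* slice, so that the causal support property of $R_\tau^\pm$ forces the image of a compactly supported function to again have support in a time-bounded region, and then to interpolate that support control with the already-known smoothness (Theorem~\ref{Theorem:PropertiesOfPreNormallyHyperbolicOperators}, items \emph{D1)} and \emph{D7)}) to obtain square-integrability together with continuity. First I would fix $f\in\Ccin(M_\tau)$ and recall from \emph{D7)} that $\supp(R_\tau^\pm f)\subset J_\tau^\pm(\supp f)=J^\pm(\supp f)\cap M_\tau$. Since $\supp f$ is compact and $M_\tau$ is the open time slice $\tau_-<x_0<\tau_+$, the set $J^\pm(\supp f)\cap M_\tau$ is contained in a set of the form $\{x : \tau_-\le x_0\le \tau_+\}\cap(\text{a truncated cone of fixed opening})$, i.e.\ it sits in a region whose $x_0$-extent is bounded (by $\tau_\pm$) and whose spatial extent grows only linearly with the time available — but the time available is bounded, so the whole thing is in fact contained in a *fixed compact set* $K_f\subset\overline{M_\tau}$ depending only on $\supp f$ (and $\tau_\pm$). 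Thus $R_\tau^\pm f\in\Cin(M_\tau)$ with support in a compact subset of $\overline{M_\tau}$, hence in particular $R_\tau^\pm f\in\Ltwon(M_\tau)$: a smooth function with compact support is square-integrable.

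For continuity, I would argue as follows. The map $R_\tau^\pm:\Ccin(M_\tau)\to\Cin(M_\tau)$ is continuous by \emph{D7)} (inherited from \emph{D4)}). The space $\Ccin(M_\tau)$ carries the usual LF-topology, and a linear map out of it is continuous iff its restriction to each $\Ccin(B)$, $B\subset M_\tau$ compact, is continuous. Now by the support-propagation statement above, $R_\tau^\pm$ maps $\Ccin(B)$ into $\Cin(K_B)$ where $K_B:=J^\pm(B)\cap\overline{M_\tau}$ is a *fixed* compact set (bounded in time by $\tau_\pm$, bounded in space because the time-extent is finite). On $\Cin(K_B)$ the Fréchet topology is generated by the seminorms $\|\cdot\|_{\mathscr{C}^a}=\sup_{x\in K_B,|\alpha|\le a}|\partial^\alpha(\cdot)(x)|$, and on functions supported in the fixed compact set $K_B$ one has the elementary estimate $\|g\|_2\le \vol(K_B)^{1/2}\sup_{x\in K_B}|g(x)|$. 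Hence the composition $\Ccin(B)\xrightarrow{R_\tau^\pm}\Cin(K_B)\hookrightarrow\Ltwon(M_\tau)$ is continuous for every compact $B$, and therefore $R_\tau^\pm:\Ccin(M_\tau)\to\Ltwon(M_\tau)$ is continuous.

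The main obstacle — really the only nontrivial point — is the geometric claim that $J^\pm(\supp f)\cap M_\tau$ is *relatively compact* in $\Rl^n$, equivalently that it is bounded. This is precisely where the hypothesis ``finite extension of $M_\tau$ in the time direction'' enters (as the text flags just before the lemma): if the slice were a half-space or all of $\Rl^n$, the causal future of a compact set would be unbounded, and $R_\tau^\pm f$ would generally fail to be in $\Ltwon$. With $\tau_\pm$ finite, a point $x\in J^+(\supp f)$ satisfies $x_0\ge y_0$ and $|\vx-\vy|\le x_0-y_0$ for some $y\in\supp f$; intersecting with $x_0<\tau_+$ bounds $x_0-y_0$ by $\tau_+-\tau_-$, hence bounds $|\vx|$ as well, giving boundedness. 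I would present this estimate explicitly (it is short) and then the rest follows by the routine functional-analytic bookkeeping sketched above; I would not belabor the seminorm estimates, citing the analogous ``routine estimate'' already used in the proof of Lemma~\ref{Lemma:CInftyIntegralOperator}.
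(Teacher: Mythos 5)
Your proof follows essentially the same route as the paper's: use \emph{D7)} for the causal support property, note that the finite time-extent of $M_\tau$ forces $J_\tau^\pm(\supp f)$ to be bounded, pass from sup-norm to $L^2$-norm on that bounded region, and finish with the inductive-limit characterization of continuity on $\Ccin(M_\tau)$. The only point you gloss over is why $R_\tau^\pm f$ is actually \emph{bounded} on $M_\tau$ (a smooth function on an open set with bounded support could a priori blow up near $\partial M_\tau$); the paper states explicitly that $R_\tau^\pm f$ extends continuously to $\overline{M_\tau}$, which follows from \emph{D7)} identifying $R_\tau^\pm f$ with the restriction of the globally defined $R^\pm f\in\Cin(\Rl^n)$ — worth adding one clause to make your sup-estimate over $K_B\subset\overline{M_\tau}$ legitimate.
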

%========================================================
\begin{proof}
	Consider a compact set $B\subset M_\tau$. As $R_\tau^\pm:\Ccin(M_\tau)\to\Cin(M_\tau)$ is continuous, we have continuity of $R_\tau^\pm:\Ccin(B)\to \Cin(J^\pm_\tau(B))$. In view of the finite extension of $M_\tau$ in time direction, $J^\pm_\tau(B)$ is bounded, and furthermore, for any $f\in\Ccin(B)$, the smooth function $R_\tau^\pm f$ extends continuously to $\overline{M_\tau}$. This implies that we also have a continuous inclusion $R^\pm\Ccin(B)\xhookrightarrow{}\Ltwon(M_\tau)$. Thus $R_\tau^\pm(\Ccin(M_\tau))\subset\Cin(M_\tau)\cap\Ltwon(M_\tau)$, the map $R_\tau^\pm:\Ccin(B)\to\Ltwon(M_\tau)$ is continuous, and by definition of the inductive limit topology of $\Ccin(M_\tau)$, also the continuity $R_\tau^\pm:\Ccin(M_\tau)\to\Ltwon(M_\tau)$ follows.
\end{proof}
%========================================================

As a consequence of this lemma, we have the following result, which will be important in the sequel.

%========================================================
\begin{proposition}\label{proposition:WRbounded}
	The operators $WR^\pm_\tau$ and $R^\pm_\tau W$ extend from $\Ccin(M_\tau)$ to bounded operators on $\Ltwon(M_\tau)$. Furthermore, $R_\tau^\pm W(\Ltwon(M_\tau))\subset \Cin(M_\tau)\cap\Ltwon(M_\tau)$ and $WR_\tau^\pm(\Ltwon(M_\tau))\subset\Ccin(K)\subset\Ccin(M_\tau)$.
\end{proposition}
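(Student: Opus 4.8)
The plan is to compose the mapping properties established in Lemma~\ref{Lemma:CInftyIntegralOperator} (properties W1--W2) and Lemma~\ref{lemma:RestrictedGreenOperators}, and then to use a duality argument (via property {\em D5)}) to handle the composite $R_\tau^\pm W$, for which the naive composition goes the wrong way. Throughout, I work on the fixed time slice $M_\tau$, which is bounded in the time direction, so that all the relevant sets $J_\tau^\pm(B)$ are bounded.

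\emph{Step 1 (the easy composite $WR_\tau^\pm$).} First I would treat $WR_\tau^\pm$. By Lemma~\ref{lemma:RestrictedGreenOperators}, $R_\tau^\pm:\Ccin(M_\tau)\to\Cin(M_\tau)\cap\Ltwon(M_\tau)$ is continuous into $\Ltwon(M_\tau)$. By property W2, $W:\Ltwon(M_\tau)\to\Ccin(K)$ is continuous (and, by W1, lands in $\Ccin(K)\subset\Ccin(M_\tau)$). Hence the composite $WR_\tau^\pm:\Ccin(M_\tau)\to\Ccin(K)$ is defined, and composing with the continuous inclusion $\Ccin(K)\hookrightarrow\Ltwon(M_\tau)$ shows $WR_\tau^\pm:\Ccin(M_\tau)\to\Ltwon(M_\tau)$ is continuous. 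Since $\Ccin(M_\tau)$ is dense in $\Ltwon(M_\tau)$, and the image already lies in the closed ball of $\Ltwon(K)$ controlled by the $\Ltwon(M_\tau)$-norm of the argument, $WR_\tau^\pm$ extends to a bounded operator on $\Ltwon(M_\tau)$; and by continuity of $W:\Ltwon(M_\tau)\to\Ccin(K)$ this extension still maps all of $\Ltwon(M_\tau)$ into $\Ccin(K)\subset\Ccin(M_\tau)$. This gives the stated range inclusion $WR_\tau^\pm(\Ltwon(M_\tau))\subset\Ccin(K)$.

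\emph{Step 2 (the composite $R_\tau^\pm W$ via duality).} The composite $R_\tau^\pm W$ cannot be handled the same way, because $W$ maps into $\Ccin$ but $R_\tau^\pm$ is only known to be defined on $\Ccin(M_\tau)$ — so we do have $R_\tau^\pm W$ defined on $\Ltwon(M_\tau)$ already (by W2 it lands in $\Ccin(K)$, on which $R_\tau^\pm$ acts), giving $R_\tau^\pm W(\Ltwon(M_\tau))\subset R_\tau^\pm\Ccin(K)\subset\Cin(M_\tau)\cap\Ltwon(M_\tau)$ by Lemma~\ref{lemma:RestrictedGreenOperators}; that is the second range inclusion. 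What is not immediate is $\Ltwon$-\emph{boundedness}. The clean way is duality: by {\em D5)} and {\em D7)}, the formal adjoint $D_\tau^*$ is again (pre-)normally hyperbolic on $M_\tau$ with fundamental solutions $S_\tau^\pm=(R_\tau^\mp)^*$, and by W3 the operator $W^*$ is again a $\Ccin$-kernel operator with support in $K$. Applying Step~1 to the pair $(D_\tau^*, W^*)$ shows that $W^*S_\tau^\pm = W^* (R_\tau^\mp)^*$ is bounded on $\Ltwon(M_\tau)$. Taking Hilbert-space adjoints, $\big(W^*(R_\tau^\mp)^*\big)^* = R_\tau^\mp W$ is bounded on $\Ltwon(M_\tau)$, which is exactly the claim for $R_\tau^\pm W$. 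One has to check that the Hilbert-space adjoint of the bounded operator from Step~1 indeed restricts on $\Ccin(M_\tau)$ to the operator $R_\tau^\pm W$ built there, i.e.\ that $\langle W^*(R_\tau^\mp)^* g, h\rangle = \langle g, R_\tau^\mp W h\rangle$ for $g,h\in\Ccin(M_\tau)$; this is just the defining adjoint relations $S_\tau^\pm=(R_\tau^\mp)^*$ and $(W_w)^*=W_{w^*}$ from {\em D5)} and the proof of Lemma~\ref{Lemma:CInftyIntegralOperator}, combined with Fubini.

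\emph{Main obstacle.} The only genuinely delicate point is bridging between the $\Ccin$-topologies (in which the fundamental solutions are originally continuous) and the $\Ltwon$-operator norm, and making the extension/adjoint bookkeeping consistent — in particular, verifying that the bounded extension obtained by density agrees with the operator defined pointwise on $\Ccin(M_\tau)$, and that taking the Hilbert-space adjoint in Step~2 reproduces $R_\tau^\pm W$ rather than some a priori different extension. The bounded time extent of $M_\tau$ is what makes this work: it forces $J_\tau^\pm(B)$ to be bounded, so $\Ccin$-bounds on compact sets upgrade to $\Ltwon(M_\tau)$-bounds, which is precisely the content of Lemma~\ref{lemma:RestrictedGreenOperators} that I am leaning on. Everything else is a routine composition of continuous maps.
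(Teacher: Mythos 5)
Your overall strategy (direct composition for one composite, duality for the other) is the right idea, but you have the two composites reversed, and this creates a genuine gap in Step~1.

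The composite that is ``easy'' by direct composition is $R_\tau^\pm W$, not $WR_\tau^\pm$. The reason is purely about the order of topologies: $W:\Ltwon(M_\tau)\to\Ccin(M_\tau)$ is continuous (W2) and $R_\tau^\pm:\Ccin(M_\tau)\to\Ltwon(M_\tau)$ is continuous (Lemma~\ref{lemma:RestrictedGreenOperators}), so the chain $\Ltwon\to\Ccin\to\Ltwon$ composes to a continuous, hence bounded, map $R_\tau^\pm W$ on $\Ltwon(M_\tau)$. Your Step~2 establishes exactly this chain to get the range inclusion, but then declares that ``What is not immediate is $\Ltwon$-boundedness'' --- in fact it \emph{is} immediate, by this very composition. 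Conversely, in Step~1 the chain for $WR_\tau^\pm$ is $\Ccin(M_\tau)\xrightarrow{R_\tau^\pm}\Ltwon\xrightarrow{W}\Ccin(K)\hookrightarrow\Ltwon$: the domain carries the $\Ccin$-topology, so all you get is a continuous map from $\Ccin(M_\tau)$ into $\Ltwon(M_\tau)$, which is \emph{not} an $\Ltwon$-norm estimate. Your sentence ``the image already lies in the closed ball of $\Ltwon(K)$ controlled by the $\Ltwon(M_\tau)$-norm of the argument'' is not a proof but a restatement of the boundedness to be shown; nothing earlier in the paper gives an $\Ltwon\to\Ltwon$ bound for $R_\tau^\pm$ alone. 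Since Step~2 then derives boundedness of $R_\tau^\pm W$ by dualizing Step~1, the gap propagates.

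The fix is to swap the roles, which is what the paper does: first prove $R_\tau^\pm W$ bounded by direct composition, then obtain $WR_\tau^\pm = (S_\tau^\mp W^*)^*$ bounded by applying the same argument to $(D_\tau^*, W^*)$ and taking the Hilbert space adjoint. Your duality mechanics in Step~2 are correct in form (using $S_\tau^\pm = (R_\tau^\mp)^*$ from D5/D7 and $W^*\in\W$ from W3), so structurally you only need to reverse which composite goes first. A secondary, smaller issue: your claim that the $\Ltwon$-bounded extension of $WR_\tau^\pm$ still takes values in $\Ccin(K)$ is not justified --- the bounded extension a priori lands only in the $\Ltwon$-closure of $\Ccin(K)$, i.e.\ $\Ltwon(K)$. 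The paper closes this with the regularity bootstrap $WR_\tau^\pm = (1-\Delta)^{-k}\cdot (1-\Delta)^k WR_\tau^\pm$ using W4, valid for all $k$; some argument of this type is needed to upgrade $\Ltwon$-valued to smooth.
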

%========================================================
\begin{proof}
	By Lemma~\ref{lemma:RestrictedGreenOperators}, $R_\tau^\pm:\Ccin(M_\tau)\to\Ltwon(M_\tau)$ is continuous, with image contained in $\Cin(M_\tau)\cap\Ltwon(M_\tau)$, and by the smoothing property {\em W2)} of the perturbation, also $W:\Ltwon(M_\tau)\to\Ccin(M_\tau)$ is continuous. Hence the compositions 
	\begin{align*}
		R_\tau^\pm W:\Ltwon(M_\tau)\to\Ltwon(M_\tau)
		\,,\qquad
		WR_\tau^\pm:\Ccin(M_\tau)\to\Ccin(M_\tau)\,,	\end{align*}
	are well-defined and continuous, and the first one has image contained in $\Cin(M_\tau)\cap\Ltwon(M_\tau)$. Since continuity and boundedness are the same for linear maps on $\Ltwon(M_\tau)$, this already gives $\|R_\tau^\pm W\|<\infty$, where $\|\cdot\|$ denotes the norm of $\B(\Ltwon(M_\tau))$.

	To show the same for $WR^\pm_\tau$, recall that the adjoint differential operator $D_\tau^*$ also has continuous advanced and retarded fundamental solutions, which we denote here by $S^\pm_\tau$, and which are related to $R^\pm_\tau$ by $(S^\pm_\tau)^*=R^\mp_\tau$, cf. {\em D5)} and {\em D7)}. Taking also into account that $W^*\in\W$, it follows as above that also $S_\tau^\mp W^*$ extends to a bounded operator on $\Ltwon(M_\tau)$. Thus its adjoint $(S_\tau^\mp W^*)^*=W R_\tau^\pm$ is bounded as well.
	
	It remains to show $WR_\tau^\pm(\Ltwon(M_\tau))\subset\Ccin(K)$. To this end, let $\Delta=\partial_0^2+\partial_1^2+...+\partial_{n-1}^2$ denote the Laplace operator and recall that $(1-\Delta)^kW$ also extends to a bounded operator $\Ltwon(M_\tau)\to\Ccin(M_\tau)$, for any $k\in\Nl$ (Lemma~\ref{Lemma:CInftyIntegralOperator}~{\em W4)}). Thus $WR^\pm_\tau=(1-\Delta)^{-k}\cdot(1-\Delta)^kWR^\pm_\tau$ maps to smooth functions. Finally, since $Wf=0$ for $\supp f\cap K=\emptyset$, it is also clear that the image of $WR^\pm_\tau$ consists of functions with support in $K\subset M_\tau$, i.e. $WR_\tau^\pm(\Ltwon(M_\tau))\subset\Ccin(M_\tau)$.
\end{proof}
%========================================================

We shall now construct advanced and retarded fundamental solutions for $D_{\tau,\la}$ on the time slice $M_\tau$. As explained earlier, such Green's operators can be expected to be of the form $(1+\la R^\pm W)^{-1}R^\pm=\sum_{k=0}^\infty (-\la R^\pm W)^kR^\pm$. Thanks to the restriction to $M_\tau$, the convergence of the geometric series can be controlled. It turns out to be advantageous to also discuss the series with $R^\pm$ and $W$ interchanged, i.e., for $\lambda \in\Cl$, we introduce the series expressions
\begin{align} \label{eq:NeumannSeries}
	N^\pm_{\tau,\la} := \sum_{k=0}^\infty (-\lambda R^\pm_\tau W)^k
	\,,\qquad	
	{\tilde{N}}^\pm_{\tau,\la} = \sum_{k=0}^\infty (-\lambda WR^\pm_\tau)^k
	\,.
\end{align}
For these series, we can assert the following properties.
\newpage
%========================================================
\begin{proposition}\label{proposition:NeumannSeries}
	There exists $\la_0>0$ such that for all $\la\in\Cl$ with $|\la|<\la_0$,
	\begin{enumerate}
		\item the right hand sides of \eqref{eq:NeumannSeries} converge in the operator norm of the bounded linear operators on $\Ltwon(M_\tau)$, and therefore define bounded linear operators $N^\pm_{\tau,\la},\tilde{N}^\pm_{\tau,\la} : \Ltwon(M_\tau) \to \Ltwon(M_\tau)$,
		\item $N^\pm_{\tau,\la}$ and $\tilde{N}^\pm_{\tau,\la}$ are the inverse operators to $1+ \lambda R_\tau^\pm W$ and $1+ \lambda W R_\tau^\pm$, respectively (in the algebra of bounded  linear operators on $\Ltwon(M_\tau)$), i.e.,
		\begin{align}
			N^\pm_{\tau,\la} = (1 + \la R_\tau^\pm W)^{-1}
			\,,\qquad
			\tilde{N}^\pm_{\tau,\la} & = (1 + \la WR^\pm_\tau)^{-1}\,, 
		\end{align}
		\item $\tilde{N}^\pm_{\tau,\la}$ restricts to a continuous map $\tilde{N}^\pm_{\tau,\la}:\Ccin(M_\tau)\to\Ccin(M_\tau)$, and  $N^\pm_{\tau,\la}(\Cin(M_\tau)\cap\Ltwon(M_\tau))\subset\Cin(M_\tau)\cap\Ltwon(M_\tau)$.
		\item For $f\in\Ccin(M_\tau)$, we have $\supp(\tilde{N}^\pm_{\tau,\la}f)\subset\supp f\cup K$. If $\supp f\cap K=\emptyset$, then $N^\pm_{\tau,\la}f=f$.
		\item For $f\in\Ccin(M_\tau)$,
		\begin{align}\label{eq:Left=RightInverse}
			N^\pm_{\tau,\la}R_\tau^\pm f
			&=
			R_\tau^\pm \tilde{N}^\pm_{\tau,\la} f
			\,.
		\end{align}
	\end{enumerate}
\end{proposition}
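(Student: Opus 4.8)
\textbf{Proof plan for Proposition~\ref{proposition:NeumannSeries}.}

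The plan is to establish the five items more or less in the stated order, since each one is essentially a consequence of Proposition~\ref{proposition:WRbounded} together with standard Neumann-series reasoning, with the bulk of the genuine work already done there. For item~(a), I would invoke Proposition~\ref{proposition:WRbounded} to get finite operator norms $\|R_\tau^\pm W\|$ and $\|WR_\tau^\pm\|$ on $\Ltwon(M_\tau)$, set $\la_0 := (\max\{\|R_\tau^\pm W\|,\|WR_\tau^\pm\|\})^{-1}$ (with the convention that $\la_0=\infty$ if all these norms vanish), and observe that for $|\la|<\la_0$ the geometric series $\sum_k(-\la R_\tau^\pm W)^k$ and $\sum_k(-\la WR_\tau^\pm)^k$ are absolutely convergent in $\B(\Ltwon(M_\tau))$, which is a Banach algebra, hence convergent to bounded operators. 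Item~(b) is then the textbook fact that a convergent Neumann series $\sum_k(-\la T)^k$ yields the two-sided inverse of $1+\la T$: multiply the partial sums by $1+\la T$ from either side, telescope, and pass to the limit using continuity of multiplication. Apply this with $T=R_\tau^\pm W$ and $T=WR_\tau^\pm$ to get the two displayed identities.

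For item~(c), continuity of $\tilde N_{\tau,\la}^\pm$ on $\Ccin(M_\tau)$ should follow by combining the $\Ccin$-mapping properties already isolated in Proposition~\ref{proposition:WRbounded} (namely $WR_\tau^\pm:\Ccin(M_\tau)\to\Ccin(M_\tau)$ continuous, with image in $\Ccin(K)$) with item~(a): each partial sum $\sum_{k=0}^m(-\la WR_\tau^\pm)^k$ is continuous on $\Ccin(M_\tau)$, and the series converges there because, by item~(d) (proved independently of~(c), see below), all terms with $k\ge1$ have support inside the fixed compact $K$, so the tail is controlled in a single Fréchet space $\Ccin(K)$ where the $\Ltwon$-convergence upgrades to $\Ccin$-convergence using the smoothing estimate W2) — more precisely, one applies $(1-\Delta)^{-j}$ to powers of $(1-\Delta)^jWR_\tau^\pm$ as in the last paragraph of the proof of Proposition~\ref{proposition:WRbounded} to get uniform control of all derivatives. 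For the claim $N_{\tau,\la}^\pm(\Cin(M_\tau)\cap\Ltwon(M_\tau))\subset\Cin(M_\tau)\cap\Ltwon(M_\tau)$, note that $R_\tau^\pm W$ maps $\Cin(M_\tau)\cap\Ltwon(M_\tau)$ into itself (again by Proposition~\ref{proposition:WRbounded}), and the series converges in $\Ltwon(M_\tau)$; one checks that the limit stays smooth because $R_\tau^\pm W\circ(\text{anything in }\Ltwon)$ lands in $\Cin(M_\tau)\cap\Ltwon(M_\tau)$, so for $g$ in that space $N_{\tau,\la}^\pm g = g + R_\tau^\pm W(-\la)N_{\tau,\la}^\pm g$ and the second summand is smooth by Proposition~\ref{proposition:WRbounded}.

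Item~(d) is a support-propagation statement: since $Wf=0$ whenever $\supp f\cap K=\emptyset$, every term $(-\la WR_\tau^\pm)^k f$ with $k\ge1$ has support in $K$ (the outermost factor is a $W$), giving $\supp(\tilde N_{\tau,\la}^\pm f)\subset\supp f\cup K$; and if $\supp f\cap K=\emptyset$ then $R_\tau^\pm W f=0$ already after the first application of $W$, so $(-\la R_\tau^\pm W)^k f=0$ for $k\ge1$ and $N_{\tau,\la}^\pm f=f$. Finally, item~(e) is the intertwining relation: formally $R_\tau^\pm(-\la WR_\tau^\pm)^k=(-\la R_\tau^\pm W)^k R_\tau^\pm$ for each $k$ by trivial reassociation, so summing over $k$ gives $R_\tau^\pm\tilde N_{\tau,\la}^\pm f = N_{\tau,\la}^\pm R_\tau^\pm f$; one must justify moving $R_\tau^\pm$ past the infinite sum, which is legitimate because $\tilde N_{\tau,\la}^\pm f$ is the $\Ccin(M_\tau)$-limit (item~(c)) of partial sums and $R_\tau^\pm:\Ccin(M_\tau)\to\Cin(M_\tau)$ is continuous, while on the other side $N_{\tau,\la}^\pm$ is $\Ltwon$-continuous and $R_\tau^\pm f\in\Ltwon(M_\tau)$ by Lemma~\ref{lemma:RestrictedGreenOperators}. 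The main obstacle I anticipate is not any single deep step but the bookkeeping in item~(c)/(e): one has to be careful about which topology each series converges in ($\Ltwon$ versus $\Ccin$) and to exploit the fact that all higher-order terms are trapped in the fixed compact set $K$, so that $\Ltwon$-convergence there can be boosted to smooth convergence via the W4)-type regularity trick. Everything else is routine Banach-algebra and support-tracking argumentation.
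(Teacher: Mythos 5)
Your plan is essentially correct and, for items~(a), (b) and (d), takes exactly the paper's route (same $\la_0$, same Neumann-series/Banach-algebra argument, same support-tracking from the outermost $W$). The genuine divergence is in items~(c) and~(e). For~(c), you propose to prove that the series $\sum_k(-\la WR_\tau^\pm)^kf$ actually converges in the Fr\'echet topology of $\Ccin(K)$, upgrading $\Ltwon$-control to $\Ccin$-control via the $(1-\Delta)^j$/Sobolev trick applied to each tail term. That works (once you write out the estimate $p_a\bigl((-\la WR_\tau^\pm)^kf\bigr)\leq C_{a,j}|\la|^k\,\|W_jR_\tau^\pm\|\,\|WR_\tau^\pm\|^{k-1}\|f\|_2$ with $W_j:=(1-\Delta)^jW\in\W$), and it even yields a stronger statement than the paper needs. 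The paper instead sidesteps series convergence in $\Ccin$ altogether: from the algebraic identity $\tilde N^\pm_{\tau,\la}=1-\la\,WR_\tau^\pm\tilde N^\pm_{\tau,\la}$ one reads off $\tilde N^\pm_{\tau,\la}$ on $\Ccin(M_\tau)$ as the composition $\Ccin(M_\tau)\hookrightarrow\Ltwon(M_\tau)\xrightarrow{\tilde N^\pm_{\tau,\la}}\Ltwon(M_\tau)\xrightarrow{-\la WR_\tau^\pm}\Ccin(K)$ plus the identity; continuity and the mapping statement for $N^\pm_{\tau,\la}$ drop out with no Sobolev bookkeeping, directly from Proposition~\ref{proposition:WRbounded}. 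For~(e), you exploit your $\Ccin$-convergence from~(c) to push $R_\tau^\pm$ through the series on one side and $\Ltwon$-convergence on the other, then identify the two limits distributionally; the paper instead pairs both sides against a test function $g\in\Ccin(M_\tau)$ and slides $R_\tau^\pm$ across using $(R_\tau^\pm)^*=S_\tau^\mp\in\Ltwon(M_\tau)$ (via {\em D5)}/{\em D7)} and Lemma~\ref{lemma:RestrictedGreenOperators}), which needs only $\Ltwon$-convergence of the Neumann series. Both routes are sound; the paper's factorization in~(c) and duality trick in~(e) buy brevity, while your route makes the $\Ccin(K)$-convergence of the full series explicit, which is an independently useful fact. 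The one place you should tighten is the phrase ``applies $(1-\Delta)^{-j}$'' — what you really mean is the Sobolev-embedding estimate on a fixed compactum, since $(1-\Delta)^{-j}$ is not being used as an operator here.
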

%========================================================
\begin{proof}
	{\em a)} We have shown in Prop.~\ref{proposition:WRbounded} that $\lambda R_\tau^\pm W$ and $\lambda W R_\tau^\pm$ extend to bounded linear operators on $\Ltwon(M_\tau)$. We follow the usual practice and identify these operators with their bounded extensions. For $|\lambda|<\min\{\|R_\tau^\pm W\|^{-1},\|W R_\tau^\pm\|^{-1}\}=:\la_0$, the operator norms of $\lambda R_\tau^\pm W$, $\lambda W R_\tau^\pm$ are strictly smaller than 1, and hence the series on the right hand sides of \eqref{eq:NeumannSeries} converge in the operator norm. From now on, we only consider such $|\la|<\la_0$.
	\\[6pt]
	{\em b)} The series on the right hand sides of \eqref{eq:NeumannSeries} are Neumann series and thus coincide with $(1 + \la R_\tau^\pm W)^{-1}$ and $(1 + \la W R_\tau^\pm)^{-1}$, respectively (see e.g. \cite{BratteliRobinson:1987}, Sec.~2.2.1).
	\\[6pt]
	{\em c)} We have
	\begin{align}
		N^\pm_{\tau,\la}
		&=
		1 - \la R_\tau^\pm W \sum_{k=0}^\infty (-\la R_\tau^{\pm} W)^k
		=
		1-\la R_\tau^\pm W N_{\tau,\la}^\pm
		=
		1-\la  N_{\tau,\la}^\pm R_\tau^\pm W
		\label{eq:N}
		\,,\\
		\tilde{N}^\pm_{\tau,\la}
		&=
		1 - \la W R_\tau^\pm \sum_{k=0}^\infty (-\la W R_\tau^{\pm})^k
		=
		1-\la W R_\tau^\pm \tilde{N}_{\tau,\la}^\pm
		\,,
		\label{eq:Ntilde}
	\end{align}
	as operators on $\Ltwon(M_\tau)$. As $R_\tau^\pm W(\Ltwon(M_\tau))\subset\Cin(M_\tau)\cap\Ltwon(M_\tau)$ and $WR_\tau^\pm(\Ltwon(M_\tau))\subset\Ccin(M_\tau)$ (Prop.~\ref{proposition:WRbounded}), the claimed restrictions of
	$N^\pm_{\tau,\la}$ and $\tilde{N}^\pm_{\tau,\la}$ follow. Furthermore, since $\Ccin(M_\tau)$ is continuously embedded in $\Ltwon(M_\tau)$, $\tilde{N}^\pm_{\tau,\la}$ is a bounded operator on $\Ltwon(M_\tau)$, and $WR^\pm_\tau:\Ltwon(M_\tau)\to\Ccin(M_\tau)$ is continuous, also the continuity of $\tilde{N}^\pm_{\tau,\la}:\Ccin(M_\tau)\to\Ccin(M_\tau)$ follows.
	\\[6pt]
	{\em d)} This follows immediately from (\ref{eq:N}, \ref{eq:Ntilde}) and the support properties {\em W1)}.
	\\[6pt]
	{\em e)} We first note that by Lemma~\ref{lemma:RestrictedGreenOperators} and part {\em c)}, the expressions $N^\pm_{\tau,\la}R_\tau^\pm f$ and $R_\tau^\pm \tilde{N}^\pm_{\tau,\la} f$ are well-defined for $f\in\Ccin(M_\tau)$. Then, with arbitrary $g\in\Ccin(M_\tau)$,
	\begin{align*}
		\langle g,\,N^\pm_{\tau,\la}R^\pm_\tau f\rangle
		&=
		\sum_{k=0}^\infty \langle g,\,(-\la R^\pm_\tau W)^k R^\pm_\tau f\rangle
		=
		\sum_{k=0}^\infty \langle g,\,R^\pm_\tau (-\la W R^\pm_\tau )^k f\rangle
		\\
		&=
		\sum_{k=0}^\infty \langle(R^\pm_\tau)^*g,\,(-\la W R^\pm_\tau )^k f\rangle
		=
		\langle (R^\pm_\tau)^*g,\,\tilde{N}^\pm_{\tau,\la} f\rangle
		=
		\langle g,\,R^\pm_\tau\tilde{N}^\pm_{\tau,\la} f\rangle
		\,.
	\end{align*}
	This implies \eqref{eq:Left=RightInverse}.
\end{proof}
%========================================================

Proposition~\ref{proposition:NeumannSeries} puts us in the position to obtain fundamental solutions of $D_{\tau,\la}$. Here and in the following, we only consider $\la$ with $|\la|<\min\{\|R^+_\tau W\|^{-1},\|R^-_\tau W\|^{-1},\|WR^+_\tau\|^{-1},\|WR^-_\tau\|^{-1}\}$, so that we can use the preceding results, and indicate that by writing ``for sufficiently small~$|\la|$''. 
\\
\\
\indent As an aside, we mention that this restriction on the coupling $\la$ can also be understood as a way of preserving the hyperbolic character of $D$. In fact, a perturbation $W\in\W$ with general coupling $\la$ can change the hyperbolic character of $D$ drastically, for example to the effect that there exist solutions of the homogeneous equation $D_\la f=0$ that have compact support.

%========================================================
\begin{example}{\bf (Compactly supported solutions)}\\
	Let $W$ be a $\Ccin$-kernel operator of the form $Wf=\langle w_1,f\rangle\cdot Dw_2$, with $w_1,w_2\in\Ccin$ such that $\langle w_1,w_2\rangle\neq0$. Then there exists $\la\in\Cl$ such that $D_\la$ has non-zero compactly supported solutions.
\end{example}
%========================================================
\begin{proof}
	One computes $(D+\la W)f=Df+\la\langle w_1,f\rangle\cdot Dw_2$, and this expression vanishes for $f=w_2\in\Ccin$ and $\la=-\langle w_1,w_2\rangle^{-1}$.
\end{proof}
%========================================================

If compactly supported solutions exist, there can be no unique fundamental solutions, and also quantization will be ambiguous. However, these compactly supported solutions do not exist for sufficiently small $|\la|$.

%========================================================
\begin{lemma}\label{lemma:NoCompactlySupportedSolutions}
	Let $|\la|$ be sufficiently small, and $f\in\Ccin(M_\tau)$. If $D_{\tau,\la}f=0$, then $f=0$.
\end{lemma}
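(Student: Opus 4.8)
The plan is to derive $f=0$ from the equation $D_{\tau,\la}f=0$ by exploiting that $D_\tau$ has unique fundamental solutions together with the Neumann-series machinery of Proposition~\ref{proposition:NeumannSeries}. First I would rewrite $D_{\tau,\la}f=0$ as $D_\tau f=-\la Wf$. Since $f\in\Ccin(M_\tau)$, the right-hand side $-\la Wf$ lies in $\Ccin(K)\subset\Ccin(M_\tau)$ (property {\em W1)}), so we may apply the retarded fundamental solution $R_\tau^+$ to both sides; using $R_\tau^+ D_\tau f=f$ this gives $f=-\la R_\tau^+ Wf$, i.e. $(1+\la R_\tau^+ W)f=0$. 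But for $|\la|$ sufficiently small, $1+\la R_\tau^+ W$ is invertible on $\Ltwon(M_\tau)$ by Proposition~\ref{proposition:NeumannSeries}b), with inverse $N^+_{\tau,\la}$; hence $f=0$.

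There is a subtlety that I expect to be the only real obstacle: the identity $R_\tau^+ D_\tau f=f$ holds for $f\in\Ccin(M_\tau)$ in the classical (smooth) sense, while the invertibility of $1+\la R_\tau^+ W$ is an $\Ltwon(M_\tau)$-statement. So I need to make sure the two are applied to the same object. This is handled by Lemma~\ref{lemma:RestrictedGreenOperators} and Proposition~\ref{proposition:WRbounded}: $f\in\Ccin(M_\tau)$ is in particular in $\Ltwon(M_\tau)$, $R_\tau^+ W$ is a bounded operator there, and the relation $f=-\la R_\tau^+ Wf$ is a genuine equality of $\Ltwon(M_\tau)$-elements (both sides are smooth functions on the bounded slice $M_\tau$, hence square-integrable). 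Applying the bounded operator $N^+_{\tau,\la}$ to $(1+\la R_\tau^+ W)f=0$ then yields $f=0$ in $\Ltwon(M_\tau)$, and since $f$ was a genuine smooth function this means $f=0$ pointwise.

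Alternatively, and perhaps more cleanly, one can argue directly via the dual series $\tilde N^\pm$: from $D_\tau f = -\la W f$ one gets $Wf = -\la WR_\tau^+ Wf$, so $(1+\la WR_\tau^+)(Wf)=0$; invertibility of $1+\la WR_\tau^+$ (Proposition~\ref{proposition:NeumannSeries}b)) forces $Wf=0$, whence $D_\tau f=0$, and then uniqueness of solutions to the unperturbed Cauchy problem on the causally convex slice $M_\tau$ ({\em D3)} via {\em D7)}, or simply $f=R_\tau^+ D_\tau f=0$) gives $f=0$. Either route works; I would present the first one as the main line since it only uses $R_\tau^+$ and the bound $|\la|<\|R_\tau^+ W\|^{-1}$, which is subsumed in the standing smallness assumption on $|\la|$. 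No step requires more than invoking the already-established boundedness and invertibility results, so the proof is short.
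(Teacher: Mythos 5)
Your proof is correct and follows essentially the same route as the paper: apply $R_\tau^+$ to $D_\tau f=-\la Wf$ to get $(1+\la R_\tau^+W)f=0$, then conclude $f=0$ from the bound $\|\la R_\tau^+W\|<1$. The extra care you take about the $\Ltwon$-versus-$\Ccin$ bookkeeping, and the alternative argument via $\tilde N^+_{\tau,\la}$, are both sound but not needed beyond what the paper already records.
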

%========================================================
\begin{proof}
	By assumption, we have $D_\tau f=-\la Wf$, with $f\in\Ccin(M_\tau)$. Applying $R^\pm_\tau$ therefore gives $R^\pm_\tau D_\tau f=f=-\la R^\pm_\tau Wf$, i.e. either the $\Ltwon(M_\tau)$-operator $-\la R^\pm_\tau W$ has the eigenvalue~$1$, or $f=0$. But we fixed $\la$ in such a way that $\|-\la R^\pm_\tau W\|<1$. Hence $f=0$.
\end{proof}
%========================================================

After this remark, we proceed to the fundamental solutions of $D_{\tau,\la}$, and introduce the operators 
\begin{align} \label{eq:GreenOperators}
	R_{\tau,\lambda}^\pm & := N^\pm_{\tau,\la}R_\tau^\pm = R_\tau^\pm \tilde{N}^\pm_{\tau,\la}: 
	\Ccin(M_\tau) \to \Cin(M_\tau) \cap \Ltwon(M_\tau)
	\,,
\end{align}
which are well defined by the properties of $R_\tau^\pm$ (Lemma~\ref{lemma:RestrictedGreenOperators}) and $N_{\tau,\la}^\pm, \tilde{N}_{\tau,\la}^\pm$  (Prop.~\ref{proposition:NeumannSeries}). By Prop.~\ref{proposition:NeumannSeries}~{\em c)}, they are also continuous as maps $\Ccin(M_\tau)\to\Cin(M_\tau)$.

%========================================================
\begin{theorem}\label{theorem:GreenOnTau} {\bf (Fundamental solutions on a time slice)}\\
	For sufficiently small $|\la|$, the operators $R_{\tau,\la}^\pm:\Ccin(M_\tau)\to\Cin(M_\tau)$ \eqref{eq:GreenOperators} exist as continuous linear maps and satisfy, $f,g\in\Ccin(M_\tau)$
	\begin{enumerate}
		\item $D_{\tau,\la}R^\pm_{\tau,\la}f=f=R^\pm_{\tau,\la}D_{\tau,\la}f$.
		\item $\supp(R^\pm_{\tau,\la} f) \subset 		J_\tau^\pm(\supp f) \cup J_\tau^\pm (K)$.
		\item $\supp(R^\pm_{\tau,\la}f-R^\pm_\tau f)\subset J_\tau^\pm(K)$.
		\item If $J^\pm_\tau(\supp f)\cap K=\emptyset$, then 
			\begin{align}\label{eq:FDoesntHitK}
				R^\pm_{\tau,\la}f
				=
				R^\pm_\tau f
				\,.
			\end{align}
		\item If $D$ and $W$ are symmetric, i.e. $D=D^*$, $W=W^*$, and $\la\in\Rl$, then one has, $f,g \in \Ccin(M_\tau)$,
			\begin{align}
				\langle g, R_{\tau,\la}^\pm f\rangle
				=
				\langle R^{\mp}_{\tau,\la}g,f\rangle
				\,.
			\end{align}
		\item If $\tau=(\tau_-,\tau_+)$ is replaced by $\tau'=(\tau_-',\tau_+')$, with $\tau_-'>\tau_-$ and $\tau_+'<\tau_+$ such that $K\subset M_{\tau'}\subset M_\tau$, the statements a)--e) still hold.
	\end{enumerate}
\end{theorem}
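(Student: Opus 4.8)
The plan is to establish each of the items a)--f) essentially by unwinding the definition $R^\pm_{\tau,\la}=N^\pm_{\tau,\la}R^\pm_\tau=R^\pm_\tau\tilde N^\pm_{\tau,\la}$ together with the already-established properties of $R^\pm_\tau$ (Theorem~\ref{Theorem:PropertiesOfPreNormallyHyperbolicOperators}, Lemma~\ref{lemma:RestrictedGreenOperators}) and of the Neumann series operators (Proposition~\ref{proposition:NeumannSeries}). For item~a), I would use the form $R^\pm_{\tau,\la}=R^\pm_\tau\tilde N^\pm_{\tau,\la}$ and compute $D_{\tau,\la}R^\pm_{\tau,\la}f=(D_\tau+\la W)R^\pm_\tau\tilde N^\pm_{\tau,\la}f=(1+\la WR^\pm_\tau)\tilde N^\pm_{\tau,\la}f=f$ using Proposition~\ref{proposition:NeumannSeries}~b) ($\tilde N^\pm_{\tau,\la}=(1+\la WR^\pm_\tau)^{-1}$); for the other order, $R^\pm_{\tau,\la}D_{\tau,\la}f=N^\pm_{\tau,\la}R^\pm_\tau(D_\tau+\la W)f=N^\pm_{\tau,\la}(1+\la R^\pm_\tau W)f=f$ using $N^\pm_{\tau,\la}=(1+\la R^\pm_\tau W)^{-1}$. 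One must check $D_\tau$ commutes with the relevant scalars and that all compositions land in the right spaces, which is exactly what Proposition~\ref{proposition:NeumannSeries}~c) and Lemma~\ref{lemma:RestrictedGreenOperators} provide.

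For the support statements b)--d) I would use $R^\pm_{\tau,\la}=R^\pm_\tau\tilde N^\pm_{\tau,\la}$ and the two support facts: $\supp(R^\pm_\tau g)\subset J^\pm_\tau(\supp g)$ and $\supp(\tilde N^\pm_{\tau,\la}f)\subset\supp f\cup K$ (Proposition~\ref{proposition:NeumannSeries}~d)). Composing gives $\supp(R^\pm_{\tau,\la}f)\subset J^\pm_\tau(\supp f\cup K)=J^\pm_\tau(\supp f)\cup J^\pm_\tau(K)$, which is b). For c), write $R^\pm_{\tau,\la}f-R^\pm_\tau f=R^\pm_\tau(\tilde N^\pm_{\tau,\la}-1)f$ and note from \eqref{eq:Ntilde} that $(\tilde N^\pm_{\tau,\la}-1)f=-\la WR^\pm_\tau\tilde N^\pm_{\tau,\la}f$ has support in $K$ by property~W1), so applying $R^\pm_\tau$ gives support in $J^\pm_\tau(K)$. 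For d), if $J^\pm_\tau(\supp f)\cap K=\emptyset$ then a fortiori $\supp f\cap K=\emptyset$, and since $R^\pm_\tau f$ already has support in $J^\pm_\tau(\supp f)$, which misses $K$, we get $WR^\pm_\tau f=0$, hence $\tilde N^\pm_{\tau,\la}f=f$ by the Neumann expansion (every higher term contains a factor $WR^\pm_\tau$ acting ultimately on $f$ — more carefully, one shows inductively $(WR^\pm_\tau)^k f=0$ for $k\ge1$ using that $\supp(R^\pm_\tau(WR^\pm_\tau)^{k-1}f)\subset J^\pm_\tau(\cdots)$ stays disjoint from $K$ because causal futures/pasts are monotone and $J^\pm_\tau(J^\pm_\tau(B))=J^\pm_\tau(B)$). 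Thus $R^\pm_{\tau,\la}f=R^\pm_\tau f$. The monotonicity/idempotency bookkeeping for $J^\pm_\tau$ in this last point is the one place requiring a little care, since a priori the intermediate supports could creep toward $K$; the resolution is that $J^\pm_\tau(\supp f)$ is already causally closed in the relevant direction and disjoint from $K$, so all iterates stay inside it.

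For item~e), assume $D=D^*$, $W=W^*$, $\la\in\Rl$. Then by D5) and D7) the advanced/retarded solutions of $D_\tau=D^*_\tau$ satisfy $(R^\pm_\tau)^*=R^\mp_\tau$, and $(WR^\pm_\tau)^*=(R^\pm_\tau)^*W^*=R^\mp_\tau W$. Taking adjoints of the Neumann series termwise (legitimate since it converges in operator norm), $(\tilde N^\pm_{\tau,\la})^*=\sum_k(-\la R^\mp_\tau W)^k=N^\mp_{\tau,\la}$. Hence, for $f,g\in\Ccin(M_\tau)$,
\begin{align*}
	\langle g,R^\pm_{\tau,\la}f\rangle
	=\langle g,R^\pm_\tau\tilde N^\pm_{\tau,\la}f\rangle
	=\langle(R^\pm_\tau)^*g,\tilde N^\pm_{\tau,\la}f\rangle
	=\langle N^\mp_{\tau,\la}R^\mp_\tau g,f\rangle
	=\langle R^\mp_{\tau,\la}g,f\rangle,
\end{align*}
where the last equality uses \eqref{eq:Left=RightInverse}/\eqref{eq:GreenOperators}. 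Finally, for item~f), the point is that $K\subset M_{\tau'}\subset M_\tau$ with $M_{\tau'}$ again a causally convex time slice containing $K$, so the entire construction (Lemma~\ref{lemma:RestrictedGreenOperators}, Propositions~\ref{proposition:WRbounded} and~\ref{proposition:NeumannSeries}, and the present theorem) applies verbatim with $\tau$ replaced by $\tau'$; since $W$ as an operator on $\Ltwon(M_{\tau'})$ is the restriction of $W$ on $\Ltwon(M_\tau)$ and $R^\pm_{\tau'}$ is the restriction of $R^\pm_\tau$ (by D7) and Theorem~\ref{Theorem:PropertiesOfPreNormallyHyperbolicOperators}~D7)), one obtains $R^\pm_{\tau',\la}=R^\pm_{\tau,\la}|_{\Ccin(M_{\tau'})}$ in domain and range, and statements a)--e) transfer. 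The main obstacle, as noted, is the careful causal-set bookkeeping in d) (and implicitly the consistency of restrictions in f)); everything else is a direct, essentially algebraic consequence of Proposition~\ref{proposition:NeumannSeries}.
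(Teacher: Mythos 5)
Your argument is correct and follows essentially the same route as the paper: items a)--c) and e) are established by unwinding $R^\pm_{\tau,\la}=N^\pm_{\tau,\la}R^\pm_\tau=R^\pm_\tau\tilde N^\pm_{\tau,\la}$ and invoking Proposition~\ref{proposition:NeumannSeries} and Lemma~\ref{lemma:RestrictedGreenOperators}, and your computations for a), b), c), e) match the paper's almost step by step.

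The one place you needlessly complicate matters is item d). You flag ``the monotonicity/idempotency bookkeeping for $J^\pm_\tau$'' as the delicate point, worrying that intermediate supports ``could creep toward $K$,'' and contemplate an induction on $k$. This concern is unfounded: once $WR^\pm_\tau f=0$ (which you correctly derive from $\supp(R^\pm_\tau f)\subset J^\pm_\tau(\supp f)$ being disjoint from $K$), every term $(-\la WR^\pm_\tau)^k f$ with $k\ge1$ vanishes trivially, being $(-\la WR^\pm_\tau)^{k-1}$ applied to the zero function $WR^\pm_\tau f$. No causal bookkeeping and no induction over iterated futures/pasts is needed. The paper phrases this slightly differently --- it applies the second statement of Proposition~\ref{proposition:NeumannSeries}~d) (that $N^\pm_{\tau,\la}g=g$ whenever $\supp g\cap K=\emptyset$) directly to $g=R^\pm_\tau f$, using $R^\pm_{\tau,\la}=N^\pm_{\tau,\la}R^\pm_\tau$ --- but both routes rest on the same kernel fact. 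For f), the paper's proof is just the observation that the operator norms of $R^\pm_\tau W$ and $WR^\pm_\tau$ can only decrease when restricted to $\Ltwon(M_{\tau'})\subset\Ltwon(M_\tau)$, so the same bound on $|\la|$ works; your added claim that $R^\pm_{\tau',\la}$ is the restriction of $R^\pm_{\tau,\la}$ is correct and perhaps more informative, but it is more than f) actually asserts and not needed for the proof.
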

%========================================================
\begin{proof} 
	{\em a)} Note that the perturbed differential operator $D_{\tau,\la}$, which is defined on $\Cin(M_\tau)$, restricts to $\Ccin(M_\tau)$ by the properties of $D_\tau$ and $W$. Hence both compositions, $D_{\tau,\la}R_{\tau,\la}^\pm$ and $R_{\tau,\la}^\pm D_{\tau,\la}$, are well-defined on $\Ccin(M_\tau)$.
	
	For $f \in \Ccin(M_\tau)$, we compute
	\begin{align*}
		R_{\tau,\lambda}^\pm D_{\tau,\la}f
		= 
		N_{\tau,\la}^\pm R_\tau^\pm (D_\tau+\la\,W)f
		=
		(1+\la R_\tau^\pm W)^{-1}\,(f+\la R_\tau^\pm Wf)
		=
		f\,,
	\end{align*}
	where we have used that $R_\tau^\pm$ is a fundamental solution of $D_\tau$, i.e. $R_\tau^\pm D_\tau f=f$. Similarly,
	\begin{align*}
		D_{\tau,\la} R_{\tau,\lambda}^\pm f
		&= 
		(D_\tau+\la\,W) R_\tau^\pm \tilde{N}_{\tau,\la}^\pm f
		=
		\tilde{N}_{\tau,\la}^\pm f+\la W R_\tau^\pm \tilde{N}_{\tau,\la}^\pm f
		\\
		&=
		(1+\la WR^\pm_\tau)(1+\la WR^\pm_\tau)^{-1}f
		=
		f.
	\end{align*}
	{\em b), c), d)}: Using the hyperbolic character of $R^\pm_\tau$ and Prop.~\ref{proposition:NeumannSeries}~{\em d)}, we get
	\begin{align*}
		\supp(R^\pm_{\tau,\la} f)
		=
		\supp(R_\tau^\pm \tilde{N}^\pm_{\tau,\la} f)
% 		\\
% 		&
		\subset
		J_\tau^\pm(\supp f \cup K)
% 		\\
% 		&
		=
		J_\tau^\pm(\supp f) \cup J_\tau^\pm(K) \,.
	\end{align*}
	Replacing $f$ by $R^\pm_{\tau,\la}f$ in the second statement of  Prop.~\ref{proposition:NeumannSeries}~{\em d)},  we also get $R^\pm_{\tau,\la}f=N^\pm_{\tau,\la}R^\pm_\tau f=R^\pm_\tau f$ in case $\supp(R^\pm_\tau f)\subset J^\pm_\tau(\supp f)$ is disjoint from $K$, i.e. eqn.~\eqref{eq:FDoesntHitK}. For {\em c)}, observe that by \eqref{eq:Ntilde}
	\begin{align*}
		R^\pm_\tau f-R^\pm_{\tau,\la}f
		=
		R^\pm_\tau(1-\tilde{N}^\pm_{\tau,\la})f
		=
		\la\,R^\pm_\tau\,WR^\pm_\tau \tilde{N}^\pm_{\tau,\la}f
		\,.
	\end{align*}
	As $WR^\pm_\tau \tilde{N}^\pm_{\tau,\la}f$ has support in $K$, we get
	$\supp(R^\pm_\tau f-R^\pm_{\tau,\la}f)\subset J_\tau^\pm(K)$ as claimed.
	
	{\em e)} Using the symmetry of $D$, and thus of $D_\tau$, we have $(R^\pm_\tau)^*=R^\mp_\tau$. With $W=W^*$, this gives
	\begin{align*}
		\langle g,R^\pm_{\tau,\la},f\rangle
		&=
		\sum_{k=0}^\infty \langle g,R^\pm_\tau(-\la W R^\pm_\tau)^k f\rangle
		=
		\sum_{k=0}^\infty \langle(-\lambda R^\mp_\tau W)^k R^\mp_\tau g,f\rangle
		=
		\langle R^\mp_{\tau,\lambda} g, f\rangle
		\,.
	\end{align*}
	{\em f)} The operator norms of the restrictions of the bounded operators $R_\tau^\pm W$, $WR_\tau^\pm$ from $\Ltwon(M_\tau)$ to the subspace $\Ltwon(M_{\tau'})$ are not larger than the norms of their unrestricted counterparts. Thus any $|\la|$ that is sufficiently small for the time cutoff $\tau$ is also sufficiently small for the sharper time cutoff $\tau'$, and hence {\em a)--e)} remain valid for $\tau'$ instead of $\tau$.
\end{proof}
%========================================================

This theorem shows that the $R^\pm_{\tau,\la}$ are quite close to advanced/retarded fundamental solutions, with possible acausal propagation in the future/past of the perturbation region $K$. Despite these differences to advanced/retarded fundamental solutions for local differential operators, we will refer to the $R^\pm_{\tau,\la}$ with the same terminology as in the local case.

Defining $R_{\tau,\la}:=R^-_{\tau,\la}-R^+_{\tau,\la}$, it is clear from part {\em a)} of the theorem that any function of the form $R_{\tau,\la}g$, $g\in\Ccin(M_\tau)$, is a solution of $D_{\tau,\la}$.
\\
\\
In the next step we will extend the fundamental solutions in the time slice $M_\tau$ to all of $\Rl^n$. As the potential $W$ vanishes outside $M_\tau$, this amounts to a ``gluing'' of advanced/retarded solutions of $Df=0$ outside $M_\tau$ with advanced/retarded solutions of $D_{\tau,\la}f=0$ in $M_\tau$.

\begin{wrapfigure}{r}{0.40\textwidth}
  \begin{center}
    \includegraphics[width=0.38\textwidth]{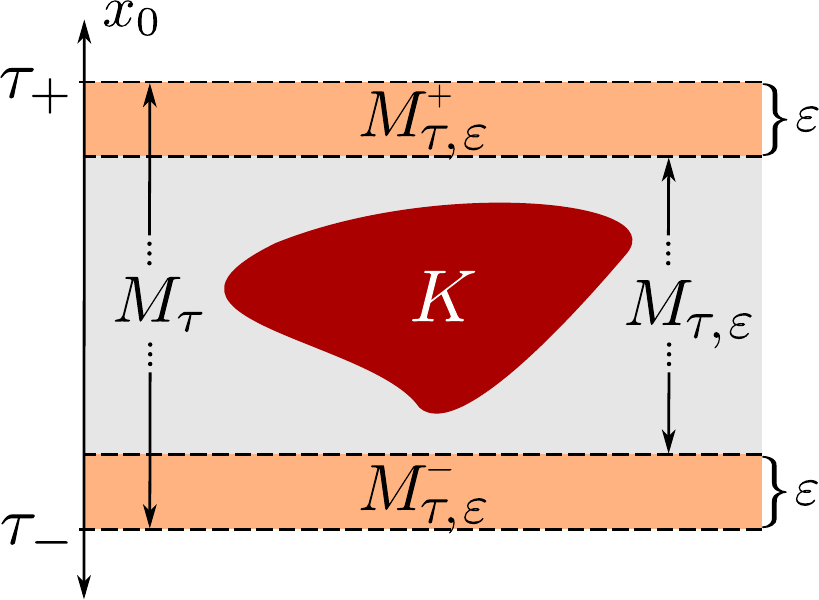}
  \end{center}
\end{wrapfigure}

We first introduce some notation. Let $\eps>0$, and define
\begin{align*}
	M_{\tau,\eps}^-
	&:=
	\Sigma_{\tau_-}^+\cap\Sigma_{\tau_-+\eps}^-
	\,,\\
	M_{\tau,\eps}^+
	&:=
	\Sigma_{\tau_+-\eps}^+\cap\Sigma_{\tau_+}^-
	\,,\\
	M_{\tau,\eps}
	&:=
	\Sigma_{\tau_-+\eps}^+\cap\Sigma_{\tau_+-\eps}^-
	\,,
\end{align*}
We require that $\eps$ is so small that $K\subset M_{\tau,\eps}$, as depicted in the figure on the right.

Given $h\in\Ccin(M_{\tau,\eps})$, the function $R^+_{\tau,\la}h$ vanishes on $M_{\tau,\eps}^-$, and is a solution of $D$ with compactly supported Cauchy data on $M^+_{\tau,\eps}$ --- this follows from Thm.~\ref{theorem:GreenOnTau}~{\em b)} and the fact that $K$ does not intersect $M^\pm_{\tau,\eps}$. We denote its Cauchy data on $\Sigma_{\tau_+-\eps}$ by $u_h^+$. Similarly, $R^-_{\tau,\la}h$ vanishes on $M_{\tau,\eps}^+$, and is a solution of $D$ with compactly supported Cauchy data on $M^-_{\tau,\eps}$; its Cauchy data on $\Sigma_{\tau_-+\eps}$ will be denoted $u_h^-$.

Given Cauchy data $u$ on some Cauchy hyperplane $\Sigma$, the corresponding solution of $D$ will always be denoted $f_0[u]$. We define
\begin{align}\label{eq:RLambda1}
	(R_\la^\pm h)(x)
	:=
	\begin{cases}
		(R_{\tau,\la}^\pm h)(x)  & x\in M_\tau\\ 
		f_0[u_h^\pm](x) & x\in\Sigma_{\tau_\pm\mp\eps}^\pm\\
		0 & x\in\Sigma_{\tau_\mp\pm\eps}^\mp
	\end{cases}
	\,,\qquad
	h\in\Ccin(M_{\tau,\eps})
	\,.
\end{align}
This assignment is well-defined in the overlap regions $M^+_{\tau,\eps}$ and $M^-_{\tau,\eps}$. In fact, $(R_{\tau,\la}^\pm h)(x)=0$ for $x\in M^\mp_{\tau,\eps}$ as recalled above, and $R_{\tau,\la}h$ is a solution of $D$ on the strip $M^\pm_{\tau,\eps}$. As this solution is uniquely fixed by its Cauchy data, it coincides with $f_0[u_h^\pm]$ in this region.

It is also clear that \eqref{eq:RLambda1} restricts to $R^\pm_{\tau,\la}h$ on $M_\tau$, and is an advanced/retarded fundamental solution of $D_\la$ in the sense that $D_\la R^\pm_\la h=h=R^\pm_\la D_\la h$, and $\supp(R_\la^\pm h)\subset J^\pm(\supp h)\cup J^\pm(K)$ --- the latter statement is a consequence of Thm.~\ref{theorem:GreenOnTau}~{\em b)} and {\em D4)}. Also the items {\em c)--d)} of Thm.~\ref{theorem:GreenOnTau} hold for $R^\pm_\la h$ when the index $\tau$ is dropped and $M_\tau$ is replaced by $\Rl^n$ throughout. 

In a similar fashion, we now want to define $R_\la^\pm$ on functions $h\in\Ccin$ whose support lies outside of $M_\tau$. For $\supp h\subset \Sigma^\pm_{\tau_\pm}$, the function $R^\pm h$ vanishes on $M_\tau$, and we therefore simply set
\begin{align}
	R_\la^\pm h
	:=
	R^\pm h
	\,,\qquad
	h\in\Ccin(\Sigma^\pm_{\tau_\pm})
	\,.
\end{align}
\begin{wrapfigure}{r}{0.40\textwidth}
  \begin{center}
    \includegraphics[width=0.38\textwidth]{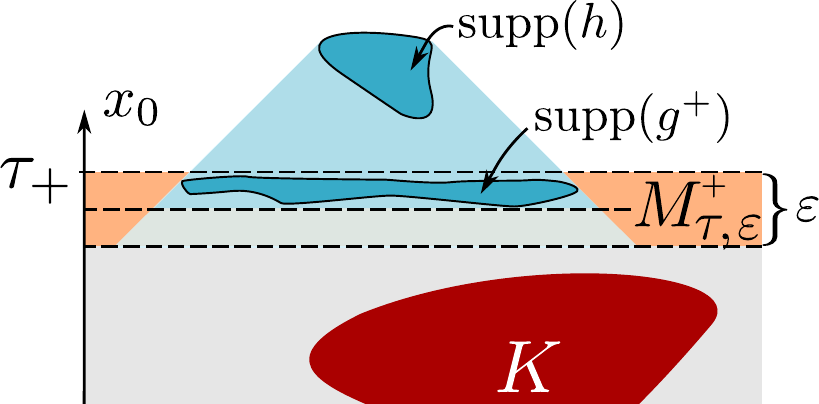}
  \end{center}
\end{wrapfigure}
To define $R_\la^\mp h$, $h\in\Ccin(\Sigma^\pm_{\tau_\pm})$, we observe that $R^\mp h$ is a solution of $D$ in the strip $M^\pm_{\tau,\eps}$. This solution has compactly supported Cauchy data, and according to {\em D6)}, we may therefore represent it in the form $(R^\mp h)(x)=\pm(Rg^\pm)(x)$, $x\in M^\pm_{\tau,\eps}$, for some $g^\pm\in\Ccin$ which is supported in the strip of half the width, $\supp(g^\pm)\subset M^\pm_{\tau,\eps/2}$. On the inner half of $M^\pm_{\tau,\eps}$, we then have $(R^\mp h)(x)=\pm(Rg^\pm)(x)=(R^\mp g^\pm)(x)=(R^\mp_{\tau,\la}g^\pm)(x)$. This implies that 
\begin{align}
	(R_\la^\mp h)(x)
	:=
	\begin{cases}
		(R^\mp h)(x) & x\in \Sigma^\pm_{\tau_\pm \mp\eps/2}
		\\
		(R^\mp_{\tau,\la}g^\pm)(x) & x\in M_\tau
	\end{cases}
	\,,\qquad 
	h\in\Ccin(\Sigma^\pm_{\tau_\pm})\,.
\end{align}
is well-defined. It remains to define $(R_\la^\mp h)(x)$ for $x\in\Sigma^\mp_{\tau_\mp}$. To do so, we proceed as in the definition of $R_\la^\pm h$ for $h\in\Ccin(M_\tau)$, and let $v$ denote the Cauchy data of $R^\mp_{\tau,\la}g^\pm$ on the Cauchy hyperplane $\Sigma_{\tau_\mp\pm\eps}$. Then we set
\begin{align}
	(R_\la^\mp h)(x)
	:=
	f_0[v](x)
	\,,\qquad
	x\in \Sigma^\mp_{\tau_\mp\pm\eps}
	,\;h\in\Ccin(\Sigma^\pm_{\tau_\pm})\,.
\end{align}
As before, the assignment is well-defined in the overlap region, and completes our definition of $R_\la^\mp h$. By construction, it is clear that again the statements Thm.~\ref{theorem:GreenOnTau} {\em a)--d)} hold for $R^\pm_\la h$ when the index $\tau$ is dropped and $M_\tau$ is replaced by $\Rl^n$.

Making use of Thm.~\ref{theorem:GreenOnTau} {\em e)}, it also becomes apparent that our construction is independent of $\eps$, and also results in the same definition of $R^\pm_\la h$ when $\tau$ is replaced by a sharper cut-off $\tau'$ such that $\tau_+'<\tau_+$, $\tau_-'>\tau_-$, $K\subset M_{\tau'}$. We can thus proceed to the definition of $R_\la^\pm h$ for $h\in\Ccin$ of arbitrary support with the help of a smooth partition of unity. In fact, let $1=\chi_++\chi_0+\chi_-$ be a smooth partition of unity, where $\chi_\pm,\chi_0$ are smooth functions on $\Rl$ with supports $\supp\chi_+\subset(\tau_+-\eps,\infty)$, $\supp\chi_0\subset(\tau_-,\tau_+)$, $\supp\chi_-\subset(-\infty,\tau_-+\eps)$. Denoting the multiplication operators with $\chi_\pm(x_0)$, $\chi_0(x_0)$ by the same letters, we then set
\begin{align}
	R_\la^\pm h
	:=
	R_\la^\pm\chi_+h + R_\la^\pm\chi_0h+R_\la^\pm\chi_-h
	\,,\qquad
	h\in\Ccin .
\end{align}
All functions on the right hand side have been defined before, and the left hand side inherits properties {\em a)--d)} of Thm.~\ref{theorem:GreenOnTau} from them. Finally, also Thm.~\ref{theorem:GreenOnTau}~{\em e)} transports to the global case: For $D=D^*$, $W=W^*$, $\la\in\Rl$, the integral $\langle f,R_\la^\pm g\rangle$, with $f,g\in\Ccin$, can be split in two parts, namely one integral over $M_\tau$ and one integral over $\Rl^n\backslash M_\tau$. On $M_\tau$, $R_\la^\pm$ restrict to $R_{\tau,\la}^\pm$, and we can use Thm.~\ref{theorem:GreenOnTau}~{\em e)} to compute the adjoint. On the complement, the same conclusion follows from exploiting the properties of $R^\pm$.

We summarize the results of our construction in the following theorem.

%========================================================
\begin{theorem} {\bf (Global fundamental solutions)}\label{theorem:GlobaleFundamentalSolutions}\\
	For sufficiently small $|\la|$, the operators $R_\la^\pm:\Ccin\to\Cin$ defined above exist as continuous linear maps and satisfy, $f,g\in\Ccin$,
	\begin{enumerate}
		\item $D_{\la}R^\pm_{\la}f=f=R^\pm_{\la}D_{\la}f$.
		\item $\supp(R^\pm_{\la} f) \subset 		J^\pm(\supp f) \cup J^\pm (K)$.
		\item $\supp(R^\pm_\la f-R^\pm f)\subset J^\pm(K)$.
		\item If $J^\pm(\supp f)\cap K=\emptyset$, then $R^\pm_\la f=R^\pm f$.
		\item If $D$ and $W$ are symmetric, i.e. $D=D^*$, $W=W^*$, and $\la\in\Rl$, then one has, $f,g \in \Ccin$,
			\begin{align}
				\langle g, R_\la^\pm f\rangle
				=
				\langle R^{\mp}_\la g,f\rangle
				\,.
			\end{align}
		{\hfill $\square$ \\[2mm] \indent}
	\end{enumerate}
\end{theorem}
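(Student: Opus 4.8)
The statement collects the properties of the operators $R^\pm_\la$ produced by the gluing construction carried out just above it, so the proof is the verification that that construction is well posed and has the asserted features. The plan is to organize it into three stages. First, define $R^\pm_\la h$ for $h$ running through a few support classes that together exhaust $\Ccin$ up to a partition of unity: (i) $h$ supported in the inner slice $M_{\tau,\eps}$; (ii) $h$ supported in a far half-space $\Sigma^\pm_{\tau_\pm}$, on which $W$ vanishes; (iii) the ``opposite'' fundamental solution $R^\mp_\la h$ for such $h$; and (iv) general $h\in\Ccin$ via $h=\chi_+h+\chi_0h+\chi_-h$ for a partition of unity in the time coordinate with $\supp\chi_+\subset(\tau_+-\eps,\infty)$, $\supp\chi_0\subset(\tau_-,\tau_+)$, $\supp\chi_-\subset(-\infty,\tau_-+\eps)$. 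Second, check that these partial definitions agree on overlaps and are insensitive to the auxiliary data $\eps$, $\tau$ and the partition. Third, deduce a)--e) and continuity by localizing each assertion to $M_\tau$, where it is exactly Theorem~\ref{theorem:GreenOnTau}, and to $\Rl^n\backslash M_\tau$, where $D_\la=D$ and the properties D1)--D6) of Theorem~\ref{Theorem:PropertiesOfPreNormallyHyperbolicOperators} apply, matching the two along the boundary Cauchy hyperplanes.

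\textbf{Building blocks.} The two structural facts that make the gluing possible are well-posedness and uniqueness of the Cauchy problem for $D$ (property D3), which lets one continue a solution of $Df=0$ given on a strip to a global solution $f_0[u]$ with the same Cauchy data, and the representation $\Sol_0=R\Ccin(\bSigma)$ (property D6), which lets one rewrite $R^\mp h$ on the strip $M^\pm_{\tau,\eps}$ as $\pm Rg^\pm$ with $g^\pm\in\Ccin$ supported in a thinner strip. Concretely: for $h\in\Ccin(M_{\tau,\eps})$ one sets $R^\pm_\la h=R^\pm_{\tau,\la}h$ on $M_\tau$; since $K$ and $\supp h$ both miss the outer strips, there $D_{\tau,\la}$ reduces to $D$ and $h$ vanishes, so $R^\pm_{\tau,\la}h$ solves $Df=0$ on $M^\pm_{\tau,\eps}$ with $\Ccin$-Cauchy data by Theorem~\ref{theorem:GreenOnTau}~b), and one extends by $f_0[u^\pm_h]$ on one side and by $0$ on the other. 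For $h\in\Ccin(\Sigma^\pm_{\tau_\pm})$ one puts $R^\pm_\la h=R^\pm h$ (legitimate because $R^\pm h$ vanishes on $M_\tau$, hence is untouched by $W$), while $R^\mp_\la h$ is built from the D6-representative $g^\pm$ by using $R^\mp_{\tau,\la}g^\pm$ on $M_\tau$ and $f_0[\,\cdot\,]$ on the remaining half-space. For general $h$ one adds the three contributions coming from the partition of unity.

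\textbf{Verifications.} Well-definedness on each overlap strip $M^\pm_{\tau,\eps}$ is uniqueness of the Cauchy problem (D3), since the competing prescriptions both solve $Df=0$ with the same Cauchy data; independence of $\eps$, of a sharper time cut-off $\tau'$ with $K\subset M_{\tau'}\subset M_\tau$, and of the partition, follows from parts d) and f) of Theorem~\ref{theorem:GreenOnTau} together with linearity. Property a) is checked separately on $M_\tau$, where it is Theorem~\ref{theorem:GreenOnTau}~a), and on the complement, where $D_\la=D$, $R^\pm$ is its fundamental solution, and the $f_0[\,\cdot\,]$-pieces are genuine solutions; the support control guarantees that $D_\la f$ and $R^\pm_\la f$ cross the boundary hyperplanes with matching Cauchy data. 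Properties b), c), d) likewise combine Theorem~\ref{theorem:GreenOnTau}~b),c),d) on $M_\tau$ with D4) outside, the essential point being that the only acausal spreading sits inside $J^\pm(K)\subset M_\tau$, and e) follows by a pairing argument that splits $\langle g,R^\pm_\la f\rangle$ into integrals over $M_\tau$ and its complement, applying Theorem~\ref{theorem:GreenOnTau}~e) to the former and $(R^\pm)^*=R^\mp$ (property D5) to the latter. Continuity of $R^\pm_\la:\Ccin\to\Cin$ comes for free, since $R^\pm_{\tau,\la}$ is continuous (Theorem~\ref{theorem:GreenOnTau}), the solution map $u\mapsto f_0[u]$ is continuous (D3) and $h\mapsto g^\pm$ is continuous (D6); hence each map $h\mapsto R^\pm_\la\chi_\bullet h$ is continuous on $\Ccin(B)$ for every compact $B$, and continuity on $\Ccin$ follows from the inductive-limit topology.

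\textbf{Main obstacle.} The only genuine difficulty — the part I expect to cost the most care — is the bookkeeping of the patching: one must be sure that every local prescription matches its neighbour on the overlap strips and is insensitive to $\eps$, $\tau$ and the partition, so that the global $R^\pm_\la$ is unambiguous; and, within property a), that the patched operator is a genuine two-sided inverse of $D_\la$ and not merely a right inverse, which forces one to track $D_\la f$ across $\partial M_\tau$ rather than only $R^\pm_\la g$. Both points are controlled by the rigidity supplied by uniqueness of the Cauchy problem (D3) together with the sharp support statements b)--d) of Theorem~\ref{theorem:GreenOnTau}; once these are invoked, the remaining work is a piece-by-piece check with no further surprises.
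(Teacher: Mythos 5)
Your proposal follows essentially the same route as the paper: the theorem is a summary of the preceding gluing construction, and the paper's "proof" is exactly that construction, case by case for $h$ supported in $M_{\tau,\eps}$, in $\Sigma^\pm_{\tau_\pm}$ (both for $R^\pm_\la$ and the opposite $R^\mp_\la$ via D6), and then for general $h$ via the temporal partition of unity, with well-definedness on overlaps resting on uniqueness of the Cauchy problem (D3), $\eps$- and $\tau$-independence on Theorem~\ref{theorem:GreenOnTau}~d),f), properties a)--d) localized to $M_\tau$ versus its complement, and e) via splitting the pairing. If anything, you are slightly more scrupulous than the paper, which asserts without elaboration that a)--d) "hold by construction": you flag explicitly that a) requires checking the left-inverse identity $R^\pm_\la D_\la f = f$ across $\partial M_\tau$ (not just $D_\la R^\pm_\la f = f$) and that the continuity claim rests on continuity of $u\mapsto f_0[u]$ and $h\mapsto g^\pm$, neither of which the paper spells out; but these are points of care, not deviations in method.
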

%========================================================

\begin{figure}[h]
\begin{center}
	\includegraphics[width=115mm]{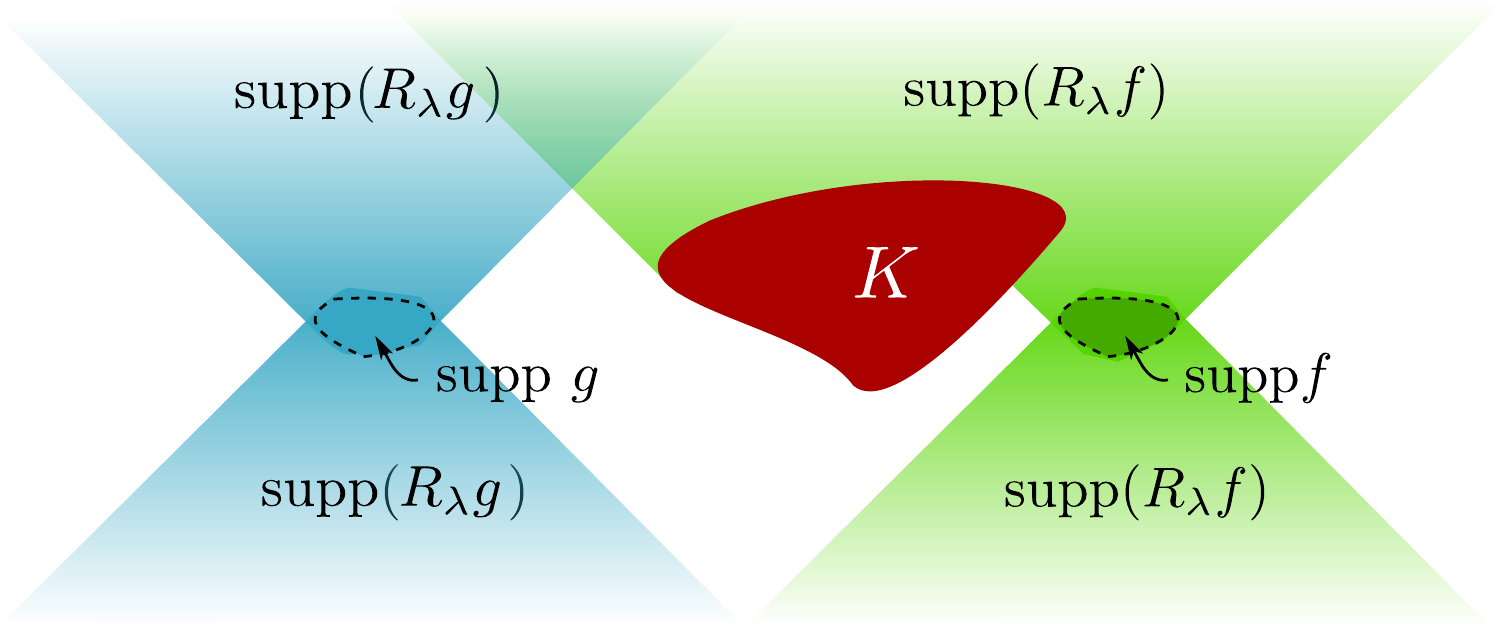}	
	\\{\em Typical supports of two solutions $R_\la f$, $R_\la g$ of $D_\la$.}
\end{center}
\end{figure}

According to Theorem~\ref{theorem:GlobaleFundamentalSolutions}, the influence of the perturbation $W$ is confined to the future/past of its support $K$, as if a source $u$ with support in $K$ would have been added to the unperturbed equation, i.e. $Df=u$. However, in contrast to the solutions of $Df=u$, the solutions of $D_\la f=0$ do not differ from the ones of $Df=0$ in case $J^\pm(\supp f)\cap K=\emptyset$, i.e. if the unperturbed wave does not collide with the potential in $K$, as depicted in the figure below.
\\
\\
\indent So far we made no claim concerning uniqueness of the fundamental solutions, and in fact, a couple of choices were made in the construction of the $R^\pm_\la$. However, we will show below that the $R^\pm_\la$ are in fact uniquely determined by properties {\em a)} and {\em b)} of the preceding theorem, and in particular independent of the choices made.

%========================================================
\begin{proposition}\label{proposition:Uniqueness}{\bf(Uniqueness properties)}
	\begin{enumerate}
		\item Let $f\in\Cin$ satisfy $D_\la f=0$ and $\supp f\subset V^++a$ or $\supp f\subset V^-+a$ for some $a\in\Rl^n$. Then $f=0$.
		\item 	The advanced/retarded fundamental solutions $R_\la^\pm$ are the unique linear maps $\Ccin\to\Cin$ satisfying properties a) and b) of Theorem~\ref{theorem:GlobaleFundamentalSolutions}.
	\end{enumerate}
\end{proposition}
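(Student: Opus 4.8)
The plan is to prove part (a) first, since it is the substantive statement, and then to obtain part (b) from it by a short argument with causal supports.

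For part (a) I would argue along the lines of Lemma~\ref{lemma:NoCompactlySupportedSolutions}, the new obstacle being that $f$ need not be compactly supported. Assume first $\supp f\subset V^++a$; the backward-cone case is symmetric, with $R^+$ and ``past'' replaced throughout by $R^-$ and ``future''. Since $D_\la f=0$ reads $Df=-\la Wf$, and by \emph{W1)} of Lemma~\ref{Lemma:CInftyIntegralOperator} the function $g:=-\la Wf$ lies in $\Ccin(K)$, the first step is to recognise $f$ as a past-compact solution of the \emph{unperturbed} equation $Df=g$ and to compare it with $R^+g$, whose support lies in $J^+(K)$: both $\supp f$ and $\supp(R^+g)$ are then contained in a half-space $\{x_0>t_0\}$ with $t_0<a_0$ and $t_0<\min\{x_0:x\in K\}$, so $f-R^+g$ solves $Dh=0$ with vanishing Cauchy data on $\Sigma_{t_0}$, and the uniqueness in \emph{D3)} gives the \emph{global} identity $f=-\la R^+Wf$. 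The second step is to restrict this identity to the time slice $M_\tau$. Since $g\in\Ccin(K)$ with $K\subset M_\tau$, property \emph{D7)} gives $(R^+g)|_{M_\tau}=R^+_\tau g$, and since the kernel of $W$ is supported in $K\times K$, $Wf$ depends only on $f|_K$; hence
\begin{align*}
	f|_{M_\tau}=-\la\,R_\tau^+W\,(f|_{M_\tau})\,.
\end{align*}
This is an identity in $\Ltwon(M_\tau)$, because $\supp f\cap M_\tau\subset(V^++a)\cap M_\tau$ is bounded (so $f|_{M_\tau}\in\Ltwon(M_\tau)$) and $R_\tau^+W$ is bounded there by Proposition~\ref{proposition:WRbounded}. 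For sufficiently small $|\la|$, $\|\la R_\tau^+W\|<1$, so $1+\la R_\tau^+W$ is invertible (Proposition~\ref{proposition:NeumannSeries}), forcing $f|_{M_\tau}=0$. But then $Wf=0$, hence $Df=0$, so $f$ is a past-compact solution of the unperturbed homogeneous equation, and a final appeal to \emph{D3)} yields $f=0$.

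Part (b) is then immediate. Let $\tilde R_\la^+:\Ccin\to\Cin$ be any linear map satisfying properties \emph{a)} and \emph{b)} of Theorem~\ref{theorem:GlobaleFundamentalSolutions}. For $f\in\Ccin$ put $h:=R_\la^+f-\tilde R_\la^+f$. By \emph{a)}, $D_\la h=f-f=0$; by \emph{b)}, $\supp h\subset J^+(\supp f)\cup J^+(K)=J^+(\supp f\cup K)$. As $\supp f\cup K$ is compact it lies in some forward cone $V^++a$, whence $\supp h\subset J^+(V^++a)=V^++a$, and part (a) gives $h=0$; thus $R_\la^+=\tilde R_\la^+$. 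The argument for $R_\la^-$ is the same with forward cones replaced by backward cones.

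I expect the only real difficulty to be the first step of part (a): since $f$ is not compactly supported one cannot feed it directly into $R_\tau^\pm$ as in Lemma~\ref{lemma:NoCompactlySupportedSolutions}. The resolution is to route the argument through the \emph{global} unperturbed fundamental solution $R^+$ --- available precisely because $\supp f\subset V^++a$ is past-compact --- so as to produce the global relation $f=-\la R^+Wf$, and only then to restrict to $M_\tau$, where $Wf$ is compactly supported and the $\Ltwon$-contraction estimate of Proposition~\ref{proposition:NeumannSeries} applies. Everything else is bookkeeping with light cones.
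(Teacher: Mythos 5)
Your proposal is correct, and part (b) is verbatim the paper's own argument. Part (a), however, takes a genuinely different route.

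The paper's proof of part (a) is by duality: it invokes Green's second identity (Gauss' theorem) together with the adjoint fundamental solutions $T_\la^\mp = (R_\la^\pm)^*$ of $D_\la^* = D^* + \overline\la W^*$ to derive an integral identity expressing $\langle h, f\rangle$ through boundary terms over $\Sigma_t$ and bulk terms involving $W$, for arbitrary $h\in\Ccin$; choosing $\Sigma_t$ so that the Cauchy data of $f$ vanish there and adding the two $\pm$-versions of the identity makes the boundary terms cancel and yields $\langle h,f\rangle=0$ for all $h$, hence $f=0$. The prenormally hyperbolic case then requires a separate reduction: compose with $D'$ to obtain the normally hyperbolic operator $D'D+\la D'W$ (noting $D'W\in\W$ by \emph{W4)}) and reuse the argument. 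Your proof instead extends the mechanism of Lemma~\ref{lemma:NoCompactlySupportedSolutions} from compactly supported $f$ to cone-supported $f$: route through the global unperturbed $R^+$ and the uniqueness in \emph{D3)} to obtain the identity $f=-\la R^+Wf$, restrict to $M_\tau$ via \emph{D7)}, and then read off $f|_{M_\tau}=0$ from the invertibility of $1+\la R_\tau^+W$ on $\Ltwon(M_\tau)$ already secured in Proposition~\ref{proposition:NeumannSeries}. This handles normally and prenormally hyperbolic operators uniformly (since \emph{D3)}, \emph{D7)}, Proposition~\ref{proposition:WRbounded} and Proposition~\ref{proposition:NeumannSeries} hold for both), sits more naturally inside the Neumann-series machinery the paper has already set up, and avoids the Green's-identity bookkeeping; the paper's duality argument, on the other hand, makes the role of the adjoint and of the identity~\eqref{eq:GaussMitRandtermen} explicit — an identity the paper then reuses in Example~\ref{Example:CauchyNoSolution} and in the same spirit in Section~\ref{Section:Quantization}. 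Both are valid; yours is the more economical proof of this particular statement.
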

%========================================================
\begin{proof}
	{\em a)} This argument is based on Green's second identity (or Gauss' theorem), and we will have to distinguish the cases where $D$ is normally hyperbolic and prenormally hyperbolic, respectively. We begin with the normally hyperbolic case, i.e. $D=\square+U^\mu(x)\partial_\mu+V(x)$, where $U^\mu, V:\Rl^n\to\Cl^{N\times N}$ are smooth functions.
	
	Let $f$ be a solution of $D_\la$, $h\in\Ccin$, and $\Sigma_t$ a Cauchy hyperplane. We set $g_\la^\pm:=T_\la^\mp h$, where $T^\mp_\la=(R^\pm_\la)^*$ are the advanced/retarded fundamental solutions of $D_\la^*=D^*+\overline{\la}W^*$, so that the support of $x\mapsto(g^\pm(x),\partial_\mu f(x))$ has compact intersection with $\Sigma^\pm$. We can thus use Green's second identity and integration by parts to compute
	\begin{align*}
		\int_{\Sigma_t^\pm}\left((D^*g^\pm,f)-(g^\pm,Df)\right)
		&=
		\int_{\Sigma_t^\pm}\left((\square g^\pm,f)-(g^\pm,\square f)\right)
		-
		\int_{\Sigma_t^\pm}\left((\partial_\mu\overline{U^\mu} g^\pm,f)-(g^\pm,U^\mu\partial_\mu f)\right)
		\\
		&=
		\mp\int_{\Sigma_t}\left\{(\partial_0 g^\pm_t,f_t)-(g^\pm_t,\partial_0 f_t)\right\}
		\pm\int_{\Sigma_t}(g_t^\pm,(U^0f)_t)
		\,,
	\end{align*}
	where an index $t$ denotes restriction to $\Sigma_t$. The left hand side can also be evaluated by using the equations $Df=-\la Wf$ (since $f$ is a solution) and $D^*g^\pm=h-\overline{\la} W^*g^\pm$ (by definition of $g^\pm$),	
	\begin{align*}
		\int_{\Sigma_t^\pm}\left((D^*g^\pm,f)-(g^\pm,Df)\right)
		&=
		\int_{\Sigma_t^\pm}(h,f)-\la\int_{\Sigma_t^\pm}(W^*T_\la^\mp h,f)+\la\int_{\Sigma_t^\pm}(T_\la^\mp h,Wf)
		\,.
	\end{align*}
	 Adding the equations for both choices of ``$\pm$'' then gives
	\begin{align}
		\langle h,f\rangle
		&-
		\la\left\{
			\int_{\Sigma_t^+}(W^*T_\la^-h,f)
			+
			\int_{\Sigma_t^-}(W^*T_\la^+h,f)
			-
			\int_{\Sigma_t^+}(T_\la^-h,Wf)
			-
			\int_{\Sigma_t^-}(T_\la^+h,Wf)
		\right\}
		\nonumber
		\\
		&=
		\int_{\Sigma_t}
		\bigg\{
			((\partial_0 R^*_\la h)_t,f_t)
			-
			((R^*_\la h)_t,(\partial_0 f)_t)
		\bigg\}
		-\int_{\Sigma_t}
		((R^*_\la h)_t,U^0_t f_t)
		\label{eq:GaussMitRandtermen}
		\,.
	\end{align}
	Suppose now that $\supp f\subset V^++a$ or $\supp f\subset V^-+a$ for some $a\in\Rl^n$. Then we can choose $\Sigma_t$ in such a way that $K\subset\Sigma_t^\pm$ and $f_t=0$, $(\partial_0 f)_t=0$. In this situation, the right hand side of the above equation vanishes, and the four terms in curly brackets cancel because in each of these integrals, the range of integration can be taken as $\Rl^n$ instead of $\Sigma_t^\pm$. Hence we arrive at $\langle h,f\rangle=0$. As $h\in\Ccin$ was arbitrary, this implies $f=0$.
	\\
	\\
	For the case that $D$ is prenormally hyperbolic, we find another prenormally hyperbolic $D'$ such that $D'D$ is again normally hyperbolic. Moreover, by Lemma~\ref{Lemma:CInftyIntegralOperator}~{\em W4)}, also $D'W\in\W$. Thus, if $f_\la$ is a solution of $D_\la$, then $0=D'D_\la f_\la=(D'D+\la D'W)f_\la$, i.e. $f_\la$ is also a solution of the normally hyperbolic operator $D'D$, perturbed by $\la D'W\in\W$. By our previous argument for normally hyperbolic operators, we then see that $f_\la=0$ if $f_\la$ has support in a light cone.

	{\em b)} If $\tilde{R}^\pm_\la$ is another linear map satisfying Thm.~\ref{theorem:GlobaleFundamentalSolutions} {\em a),b)}, then for any $f\in\Ccin$, the function $R^\pm_\la f-\tilde{R}^\pm_\la f$ is a solution of $D_\la$ (because of Thm.~\ref{theorem:GlobaleFundamentalSolutions} {\em a)}) with support in a future/past light cone (because of Thm.~\ref{theorem:GlobaleFundamentalSolutions} {\em b)}). Hence $R^\pm_\la f=\tilde{R}^\pm_\la f$ by part {\em a)}.
\end{proof}
%========================================================

Having established the basic existence and uniqueness theorem on fundamental solutions, we introduce in complete analogy to the unperturbed case the space of all solutions of $D_\la$ with compactly supported Cauchy data as
\begin{align}\label{eq:SolutionSpace}
	\Sol_\la
	:=
	\{f\in\Cin\,:\,D_\la f=0\,,\;\supp f \subset (V^++a_+)\cup(V^-+a_-)\,\text{ for some } a_\pm\in\Rl^n\}
	\,,
\end{align}
and define the propagator as
\begin{align}
	R_\la
	:=
	R^-_\la-R^+_\la
	\,.
\end{align}

%========================================================
\begin{proposition}\label{proposition:SolutionSpace}{\bf(Structure of the solution spaces)}
	\begin{enumerate}
		\item Let $\bSigma$ denote an open causally convex neighborhood of a Cauchy hyperplane $\Sigma$ such that $K\subset J^+(\bSigma)\backslash\bSigma$ or $K\subset J^-(\bSigma)\backslash\bSigma$. Then
			\begin{align}
				\Sol_\la=R_\la\Ccin(\bSigma)\,.
			\end{align}
		\item We have $\ker R_\la=D_\la\Ccin$, and hence $\Sol_\la \cong \Ccin/\ker R_\la=\Ccin/D_\la \Ccin$.
		\item If $D=D^*$, $W=W^*$, and $\la\in\Rl$, the map 
		\begin{align}
			\rho_\la : \Sol_\la\times \Sol_\la &\to\Cl\,,\\
			(R_\la f,R_\la g)
			&\mapsto
			\langle f,R_\la g\rangle
			\label{eq:Rho}
		\end{align}
		is a well-defined non-degenerate sesquilinear form satisfying
		\begin{align}\label{eq:RhoSymmetry}
			\overline{\rho_\la(R_\la f,R_\la g)}
			=
			-\rho_\la(R_\la g,R_\la f)
			\,,\qquad
			f,g\in\Ccin\,.
		\end{align}
	\end{enumerate}
\end{proposition}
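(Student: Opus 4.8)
The plan is to establish the three assertions in order, since part (a) furnishes the surjectivity of $R_\la$ onto $\Sol_\la$ that is needed in (b) and (c).

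\textbf{Part (a).} The inclusion $R_\la\Ccin(\bSigma)\subseteq\Sol_\la$ is the easy one: for $h\in\Ccin(\bSigma)$, Theorem~\ref{theorem:GlobaleFundamentalSolutions}(a) gives $D_\la R_\la h=0$, while part (b) of that theorem bounds $\supp(R_\la h)$ by $\big(J^+(\supp h)\cup J^+(K)\big)\cup\big(J^-(\supp h)\cup J^-(K)\big)$, which lies inside $(V^++a_+)\cup(V^-+a_-)$ for suitable $a_\pm$ because $\supp h$ and $K$ are compact. For the converse I would mimic the proof of property D6 in Theorem~\ref{Theorem:PropertiesOfPreNormallyHyperbolicOperators}, now for $D_\la$. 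Given $f\in\Sol_\la$, one first deduces from the hypothesis on $\bSigma$ (using its causal convexity and that $\Sigma=\Sigma_\sigma$ is a Cauchy hyperplane) that $K$ lies either strictly above or strictly below $\Sigma$; say $K\subseteq\{x_0>\sigma\}$. Then choose a cutoff $\chi=\chi(x_0)$ with $\chi\equiv0$ for $x_0\le\sigma-\delta_1$ and $\chi\equiv1$ for $x_0\ge\sigma+\delta_2$, where $\delta_1,\delta_2>0$ are taken so small that $\chi\equiv1$ on $K$ and that the transition region intersected with $\supp f$ lies in $\bSigma$ — possible since $\Sigma_\sigma\cap\supp f$ is compact and $\bSigma$ is an open neighbourhood of $\Sigma_\sigma$. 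Put $h:=D_\la(\chi f)$; using $Df=-\la Wf$ one gets $D_\la(\chi f)=[D,\chi]f+\la\big(W(\chi f)-\chi Wf\big)$, and here the two perturbation terms cancel because $\chi$ equals the constant $1$ both on $K\supseteq\supp(Wf)$ and on the $y$-support of the kernel $w$ (this is exactly where $\supp w\subseteq K\times K$ enters), so $h=[D,\chi]f\in\Ccin(\bSigma)$. Finally, $\chi f-R_\la^+h$ and $(1-\chi)f+R_\la^-h$ both solve $D_\la$ (Theorem~\ref{theorem:GlobaleFundamentalSolutions}(a)) and, by the support bounds of Theorem~\ref{theorem:GlobaleFundamentalSolutions}(b) together with the fact that the part of $\supp f$ in one cone, truncated by a half-space, is compact, they are supported in a single future, resp.\ past, light cone, hence vanish by Proposition~\ref{proposition:Uniqueness}(a). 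Adding the two identities yields $f=R_\la^+h-R_\la^-h=-R_\la h=R_\la(-h)\in R_\la\Ccin(\bSigma)$. The case $K\subseteq\{x_0<\sigma\}$ is the time reverse, with $\chi\equiv0$ on $K$, and the perturbation terms cancel in the same way.

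\textbf{Part (b).} Applying (a) to a thin slab $\bSigma=\{|x_0-\sigma|<\delta\}$ disjoint from $K$ and lying to the past of $K$ gives $\Sol_\la=R_\la\Ccin(\bSigma)\subseteq R_\la\Ccin\subseteq\Sol_\la$, so $R_\la:\Ccin\to\Sol_\la$ is onto and $\Sol_\la\cong\Ccin/\ker R_\la$. For the identity $\ker R_\la=D_\la\Ccin$: the inclusion $\supseteq$ holds because $D_\la g\in\Ccin$ ($D$ is local and $Wg$ is supported in $K$) and $R_\la D_\la g=g-g=0$ by Theorem~\ref{theorem:GlobaleFundamentalSolutions}(a); conversely, if $R_\la h=0$ then $R_\la^+h=R_\la^-h=:g$, and Theorem~\ref{theorem:GlobaleFundamentalSolutions}(b) places $\supp g$ in $\big(J^+(\supp h)\cup J^+(K)\big)\cap\big(J^-(\supp h)\cup J^-(K)\big)$, a finite union of sets $J^+(\text{cpt})\cap J^-(\text{cpt})$ and hence compact, so $g\in\Ccin$ and $h=D_\la g$.

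\textbf{Part (c).} Assume $D=D^*$, $W=W^*$, $\la\in\Rl$. Well-definedness of $\rho_\la$ means $\langle f,R_\la g\rangle$ is insensitive to altering $f$ or $g$ by an element of $\ker R_\la=D_\la\Ccin$: replacing $g$ by $g+D_\la k$ changes nothing since $R_\la D_\la k=0$; replacing $f$ by $f+D_\la k$ changes the value by $\langle D_\la k,R_\la g\rangle$, which equals $\langle R_\la^+D_\la k,g\rangle-\langle R_\la^-D_\la k,g\rangle=\langle k,g\rangle-\langle k,g\rangle=0$ after transporting $D_\la$ across the pairing with Theorem~\ref{theorem:GlobaleFundamentalSolutions}(e). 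Sesquilinearity then descends from that of $\langle\,\cdot\,,\,\cdot\,\rangle$ via $c_1R_\la f_1+c_2R_\la f_2=R_\la(c_1f_1+c_2f_2)$. For \eqref{eq:RhoSymmetry}, $\overline{\langle f,R_\la g\rangle}=\langle R_\la g,f\rangle$, and two uses of Theorem~\ref{theorem:GlobaleFundamentalSolutions}(e) turn $\langle R_\la g,f\rangle$ into $-\langle g,R_\la f\rangle=-\rho_\la(R_\la g,R_\la f)$. Non-degeneracy: if $\rho_\la(R_\la f,R_\la g)=0$ for all $g\in\Ccin$ (using surjectivity from (b)), the same manipulation gives $\langle R_\la f,g\rangle=0$ for all $g\in\Ccin$, so $R_\la f=0$; non-degeneracy in the other argument follows from \eqref{eq:RhoSymmetry}.

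I expect part (a) to be the real work. Two things must be arranged simultaneously in the cutoff construction: the exact cancellation of the two perturbation terms in $D_\la(\chi f)$ — which forces $\chi$ to be \emph{constant} on $K$ and rests on the compact localization $\supp w\subseteq K\times K$ of the kernel — and the containment $\supp([D,\chi]f)\subseteq\bSigma$, which needs the precise hypothesis relating $K$ and $\bSigma$ plus some care about how the two light cones of $\supp f$ intersect a narrow time slab. Parts (b) and (c) are then routine once (a) and Theorem~\ref{theorem:GlobaleFundamentalSolutions} are available.
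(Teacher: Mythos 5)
Your proof is correct, and the interesting difference is in part~(a). The paper's argument is a \emph{reduction}: it restricts $f_\la$ to the slab $\bSigma$, observes that there it solves the unperturbed equation $Df=0$ (since $\bSigma\cap K=\emptyset$), invokes the already-known property~\emph{D6)} to write $f_\la|_{\bSigma}=(Rg)|_{\bSigma}=(R_\la g)|_{\bSigma}$ for some $g\in\Ccin(\bSigma)$, and then uses Proposition~\ref{proposition:Uniqueness}~\emph{a)} to conclude that the difference $f_\la-R_\la g$, which vanishes on $\bSigma$ and decomposes into pieces supported in a future resp.\ past cone, must be zero. You instead redo the classical cutoff construction that underlies \emph{D6)} \emph{for $D_\la$ directly}. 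The novel ingredient you supply, and which the paper's route sidesteps, is the exact cancellation $W(\chi f)-\chi Wf=0$ forced by taking $\chi$ constant on $K$ and using $\supp w\subseteq K\times K$ on both the $x$- and $y$-variables; this is what makes $h=[D,\chi]f$ compactly supported in $\bSigma$ even though $W$ is nonlocal. Both approaches then finish with the same uniqueness argument. Yours is a bit more work but is self-contained, whereas the paper's is shorter because it leverages the unperturbed theory on a slab disjoint from $K$. (One could note that your observation ``$K$ lies strictly above or below $\Sigma$'' does require the causal convexity of $\bSigma$: if $p\in J^+(\bSigma)\setminus\bSigma$ had $p_0\le\sigma$, one could run a causal curve from a point of $\bSigma$ through $p$ up to $(\sigma,\vec p)\in\Sigma\subset\bSigma$ and conclude $p\in\bSigma$; so this step is fine, but it's worth making explicit.) Parts~(b) and~(c) follow the paper's argument up to minor rearrangement — in~(c) the paper derives well-definedness and antisymmetry from the single identity $\langle f,R_\la g\rangle=-\langle R_\la f,g\rangle$ of Theorem~\ref{theorem:GlobaleFundamentalSolutions}~\emph{e)}, whereas you apply that theorem twice separately; substantively this is the same.
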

%========================================================
\begin{proof}
	{\em a)} We carry out the proof for the case $K\subset J^+(\bSigma)\backslash\bSigma$, the other case is analogous. Let $f_\la\in\Sol_\la$ and consider the restriction $f_\la|_{\bSigma}$. As $\bSigma$ is disjoint from $K$, this restriction is a solution of $D$ on $\bSigma$, and thus there exists $g\in\Ccin(\bSigma)$ such that $f_\la|_{\bSigma}=(Rg)|_{\bSigma}=(R_\la g)|_{\bSigma}$. Hence the two solutions $f_\la$, $R_\la g$ of $D_\la$ coincide on $\bSigma$, i.e. $f_\la-R_\la g=h^++h^-$, where $h^\pm\in\Cin$ have support in $J^\pm(\bSigma)\backslash\bSigma$. As $K\subset J^+(\bSigma)\backslash\bSigma$, both, $h^+$ and $h^-$, are solutions of $D_\la$, and in view of the support properties of $f_\la$ (see \eqref{eq:SolutionSpace}) and $R_\la g$ (see Thm.~\ref{theorem:GlobaleFundamentalSolutions} {\em b)}), we have $\supp h^\pm\subset V^\pm+a_\pm$ for some $a_\pm\in\Rl^n$. Thus, by Proposition~\ref{proposition:Uniqueness} {\em a)}, $h^+=h^-=0$, and $f_\la=R_\la g$.
	
	{\em b)} By Theorem~\ref{theorem:GlobaleFundamentalSolutions} {\em a)}, we have $R_\la D_\la f=0$ for any $f\in\Ccin$, i.e. $D_\la\Ccin\subset\ker R_\la$. Conversely, for $f\in\ker R_\la$, the function $g:=R_\la^+f=R_\la^-f$ has compact support in view of Theorem~\ref{theorem:GlobaleFundamentalSolutions}~{\em b)}. Thus $f=D_\la R_\la^+ f=D_\la g\in D_\la\Ccin$, i.e. we have shown $\ker R_\la=D_\la\Ccin$. Now, by part {\em a)}, we know $\Sol_\la=R_\la\Ccin$, and thus $\Sol_\la \cong \Ccin/\ker R_\la=\Ccin/D_\la \Ccin$.
	
	{\em c)} For $f,g\in\Ccin$, we have by Theorem~\ref{theorem:GlobaleFundamentalSolutions}~{\em e)}
	\begin{align}\label{eq:RhoScalarProductSymmetry}
		\langle f,R_\la g\rangle
		=
		\langle f,(R_\la^--R_\la^+) g\rangle
		=
		\langle (R_\la^+-R_\la^-)f, g\rangle
		=
		-\langle R_\la f,g\rangle
		\,,
	\end{align}
	and thus the assignment \eqref{eq:Rho} is well-defined. Sesquilinearity and non-degenerateness is clear, and \eqref{eq:RhoSymmetry} follows directly from \eqref{eq:RhoScalarProductSymmetry}.
\end{proof}
%========================================================

Next we describe the solutions of $D_\la$ in a little more detail. This is a direct corollary of our preceding constructions.

%========================================================
\begin{corollary}% {\bf (Perturbative calculation of solutions)}
	\begin{enumerate}
		\item Let $\Sigma$ be a Cauchy hyperplane such that $K\subset\Sigma^+$ or $K\subset\Sigma^-$, and $u$ (smooth, compactly supported) Cauchy data on $u$. Then there exists precisely one solution $f_\la\in\Sol_\la$ with Cauchy data $u$ on $\Sigma$.
		\item Let $f_\la\in\Sol_\la$ be a solution of $D_\la$ and $\eps>0$ sufficiently small. Then there exist $g^\pm\in\Ccin(M^\pm_{\tau,\eps})$ such that $f_\la=R_\la g^+=R_\la g^-$ and 
		\begin{align}
			f_\la(x)
			&=
			(N_{\tau,\la}^+R^+g^-)(x)
			=
			(N_{\tau,\la}^-R^-g^+)(x)
			\,,\qquad
			x\in M_\tau\,.
		\end{align}
		The functions
		\begin{align}
			f_{\la,n}(x)
			:=
			\sum_{k=1}^n((-\la R^\pm W)^kR^\pm g^\mp)(x)
			\,,\qquad 
			x\in M_\tau,
		\end{align}
		converge to the restriction of $f_\la$ to $M_\tau$ as $n\to\infty$, in the topology of $\Ltwon(M_\tau)$.
	\end{enumerate}
\end{corollary}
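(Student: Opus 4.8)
For part~(a), both assertions follow by assembling what is already available; I treat the case $K\subset\Sigma^+$, the case $K\subset\Sigma^-$ being its time reflection. For existence, pick a thin causally convex slab $\bSigma$ about $\Sigma$ with $\bSigma\cap K=\emptyset$ and $K\subset J^+(\bSigma)\setminus\bSigma$. By D3) there is a unique $f_0[u]\in\Cin$ with $Df_0[u]=0$ and Cauchy data $u$ on $\Sigma$; it has compact Cauchy data, so $f_0[u]\in\Sol_0$, and D6) lets us write $f_0[u]=Rg$ with $g\in\Ccin(\bSigma)$. Put $f_\la:=R_\la g$. Then $f_\la\in\Sol_\la$ by Theorem~\ref{theorem:GlobaleFundamentalSolutions}~a),b) (the sets $J^\pm(\supp g)\cup J^\pm(K)$ being contained in translates of $V^\pm$), and on $\bSigma$ one has $R_\la^-g=R^-g$ (Theorem~\ref{theorem:GlobaleFundamentalSolutions}~d), since $J^-(\supp g)$ misses $K$) while $R_\la^+g=R^+g$ there (Theorem~\ref{theorem:GlobaleFundamentalSolutions}~c), since $R_\la^+g-R^+g$ is supported in $J^+(K)$, which misses $\bSigma$); hence $f_\la|_{\bSigma}=Rg|_{\bSigma}=f_0[u]|_{\bSigma}$, so $f_\la$ carries the Cauchy data $u$. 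For uniqueness, let $h\in\Sol_\la$ have vanishing Cauchy data on $\Sigma$. On (a neighbourhood of) $J^-(\Sigma)$ the perturbation is inactive, so there $Dh=0$ with zero data and D3)/D7) force $h=0$ on $J^-(\Sigma)$; thus $\supp h\subset\{x_0>t_0\}$, and intersecting the cone decomposition $\supp h\subset(V^++a_+)\cup(V^-+a_-)$ with this half space makes the backward-cone part bounded, so $\supp h\subset V^++a_+'$ for a suitable $a_+'$, whence $h=0$ by Proposition~\ref{proposition:Uniqueness}~a).

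For part~(b), I would read off $g^\pm$ directly from Proposition~\ref{proposition:SolutionSpace}~a): for $\eps$ small enough that $K\subset M_{\tau,\eps}$, the slabs $M_{\tau,\eps}^-$ and $M_{\tau,\eps}^+$ are open causally convex neighbourhoods of Cauchy hyperplanes with $K$ lying entirely in their causal future, resp.\ past, so $\Sol_\la=R_\la\Ccin(M_{\tau,\eps}^-)=R_\la\Ccin(M_{\tau,\eps}^+)$, giving $g^\mp\in\Ccin(M_{\tau,\eps}^\mp)$ with $f_\la=R_\la g^-=R_\la g^+$. Restricting to the slice $M_\tau$ and using that the $R_\la^\pm$ were built to restrict there to the slice operators $R_{\tau,\la}^\pm$ on $\Ccin(M_\tau)$ (which, if one prefers not to invoke the construction, can be re-derived from the $M_\tau$-analogue of the uniqueness argument in part~(a)), one obtains $f_\la|_{M_\tau}=R_{\tau,\la}g^\mp=N_{\tau,\la}^-R_\tau^-g^\mp-N_{\tau,\la}^+R_\tau^+g^\mp$ via \eqref{eq:GreenOperators}. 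A bookkeeping of supports---$g^-$ sits in the bottom slab, so $R_\tau^-g^-$ lives below $K$, and dually $R_\tau^+g^+$ lives above $K$---then collapses the two halves to the displayed one-term expressions on $M_\tau$. The convergence statement is immediate: by Proposition~\ref{proposition:NeumannSeries}~a) the partial sums $\sum_{k=0}^n(-\la R_\tau^\pm W)^k$ converge to $N_{\tau,\la}^\pm$ in the operator norm on $\Ltwon(M_\tau)$, and by Lemma~\ref{lemma:RestrictedGreenOperators} the vectors $R_\tau^\pm g^\mp$ lie in $\Cin(M_\tau)\cap\Ltwon(M_\tau)$, so $f_{\la,n}=\sum_{k=0}^n(-\la R_\tau^\pm W)^kR_\tau^\pm g^\mp$ converges in $\Ltwon(M_\tau)$ to $N_{\tau,\la}^\pm R_\tau^\pm g^\mp=f_\la|_{M_\tau}$.

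The step I expect to need the most care is the support bookkeeping in part~(b): deciding exactly which terms of the Neumann expansion survive on $M_\tau$ amounts to playing the sets $J^\pm(K)$, $J^\pm(\supp g^\pm)$ and the slab geometry of $M_{\tau,\eps}^\pm$ off against the fundamental-solution support properties in Theorems~\ref{theorem:GreenOnTau} and~\ref{theorem:GlobaleFundamentalSolutions}, and it rests on the identification $R_\la^\pm|_{M_\tau}=R_{\tau,\la}^\pm$ also for the arguments $g^\pm\in\Ccin(M_{\tau,\eps}^\pm)$ lying outside the central slab $M_{\tau,\eps}$. Everything else is a routine assembly of Theorem~\ref{theorem:PropertiesOfPreNormallyHyperbolicOperators}, Propositions~\ref{proposition:NeumannSeries}, \ref{proposition:Uniqueness}, \ref{proposition:SolutionSpace} and Lemma~\ref{lemma:RestrictedGreenOperators}.
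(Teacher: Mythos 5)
Part~(a) of your proof matches the paper's argument step for step: the paper also picks a time-slice neighbourhood $\bSigma$ of $\Sigma$ disjoint from $K$, writes $f_0[u]=Rg$ with $g\in\Ccin(\bSigma)$ via D6), sets $f_\la:=R_\la g$, and defers uniqueness to the argument in the proof of Prop.~\ref{proposition:SolutionSpace}~a). Your version of the uniqueness argument (forcing $\supp h$ into a single light cone and invoking Prop.~\ref{proposition:Uniqueness}~a)) is a harmless rephrasing of that same idea.

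In part~(b) there is a genuine gap: the asserted ``collapse'' does not produce what you claim. Starting from $f_\la|_{M_\tau}=R_{\tau,\la}g^-=N^-_{\tau,\la}R^-_\tau g^--N^+_{\tau,\la}R^+_\tau g^-$, the support bookkeeping you invoke ($g^-$ in the bottom slab, hence $R^-_\tau g^-$ supported below $K$, so $WR^-_\tau g^-=0$) only gives $N^-_{\tau,\la}R^-_\tau g^-=R^-_\tau g^-$; that term does \emph{not} disappear. What remains is
\[
	f_\la|_{M_\tau}=R^-_\tau g^--N^+_{\tau,\la}R^+_\tau g^-=N^+_{\tau,\la}R_\tau g^-\,,
\]
(the second equality again uses $N^+_{\tau,\la}R^-_\tau g^-=R^-_\tau g^-$), and in particular $f_\la|_{M_\tau}\neq N^+_{\tau,\la}R^+_\tau g^-$; setting $\la=0$ already shows the discrepancy. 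So your concluding identity $N^\pm_{\tau,\la}R^\pm_\tau g^\mp=f_\la|_{M_\tau}$ is false. The correct identity $f_\la|_{M_\tau}=N^\pm_{\tau,\la}R_\tau g^\mp$, with the propagator $R_\tau$ rather than $R^\pm_\tau$, is exactly what the paper derives in the proof of Theorem~\ref{theorem:ScatteringOperator}~c); the corollary's printed $R^\pm$ and the $k=1$ start of the sum appear to be typos, and you have reproduced them rather than noticing that your own chain of equalities cannot yield them. With $R_\tau g^\mp$ in place of $R^\pm_\tau g^\mp$ the rest of your argument is fine: $R_\tau g^\mp\in\Cin(M_\tau)\cap\Ltwon(M_\tau)$ by Lemma~\ref{lemma:RestrictedGreenOperators}, so the partial sums $\sum_{k=0}^n(-\la R^\pm_\tau W)^kR_\tau g^\mp$ converge in $\Ltwon(M_\tau)$ to $N^\pm_{\tau,\la}R_\tau g^\mp=f_\la|_{M_\tau}$. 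Note also that the paper's own ``proof'' of part~(b) is a one-line appeal to the construction; the computation you attempt to supply is the one actually carried out in the proof of Theorem~\ref{theorem:ScatteringOperator}~c), and that is the place to check it against.
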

%========================================================
\begin{proof}
	{\em a)} We may find a time slice neighborhood $\bSigma$ of $\Sigma$ such that $K\subset J^+(\bSigma)\backslash\bSigma$ or $K\subset J^-(\bSigma)\backslash\bSigma$. Then the unique solution $f_0[u]$ of $D$ with Cauchy data $u$ on $\Sigma$ can be written as $f_0[u]=Rg$, where $g\in\Ccin(\bSigma)$. Let $f_\la:=R_\la g$. Then $f_\la\in\Sol_\la$, and as $f_\la|_{\bSigma}=f_0[u]|_{\bSigma}$, $f_\la$ has Cauchy data $u$ on $\Sigma$. Uniqueness of this solution follows as in the proof of Proposition~\ref{proposition:SolutionSpace}~{\em a)}.
	
	{\em b)} This is immediate from our construction of the fundamental solutions $R_\la^\pm$.
\end{proof}
	
The results presented so far show a strong similarity to the well-known results in the solution theory of normally hyperbolic differential operators. We next show that despite this similarity, the Cauchy problem is in general ill-posed in the present context. This will be done with two examples.
	
%========================================================
\begin{example}\label{Example:CauchyNoSolution}{\bf (Cauchy Problem with no solution)}\\
	Let $D=\square$ be the d'Alembert operator, $\Sigma$ a Cauchy hyperplane, and $Wh:=\langle w_1,h\rangle w_2$ with $w_1,w_2\neq0$ such that $\supp w_1\subset\Oold_1$, $\supp w_2\subset \Oold_2$ with two spacelike separated double cones $\Oold_1$, $\Oold_2$ over~$\Sigma$ (see figure below). Pick Cauchy data $u$ on $\Sigma$ supported in $\Oold_1$ such that $f_0[u]$, the unique solution of $D$ with these Cauchy data, satisfies $\langle w_1,f_0[u]\rangle\neq0$, and also assume that $Rw_2\neq0$. Then there exists no $f_\la\in\Sol_\la$ with Cauchy data~$u$.
\end{example}
%========================================================
\begin{figure}[h]\label{figure:NoCauchy}
	\begin{center}
		\includegraphics[width=80mm]{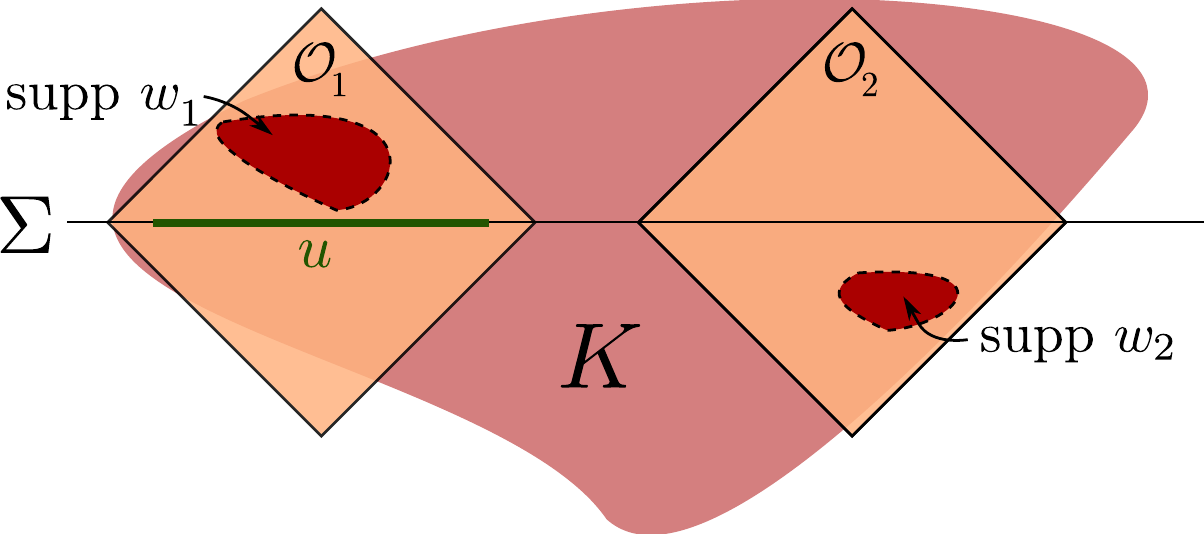}
		\\
		{\em Sketch of the geometric situation of Example~\ref{Example:CauchyNoSolution}.}
	\end{center}
\end{figure}
%========================================================
\begin{proof}
	We first observe that the assumptions made can easily be satisfied by suitably adjusting $w_1,w_2$. For a proof by contradiction, assume $f_\la\in\Sol_\la$ has Cauchy data $u=(u_0,u_1)$ supported only in $\Oold_1\cap\Sigma$. Due to the form of $Wh=\langle w_1,h\rangle \cdot w_2$ and the fact that $\supp w_2$ is disjoint from $\Oold_1$, one observes that $(Df_\la)(x)=0$, $x\in\Oold_1$. Hence the restriction of $f_\la$ to $\Oold_1$ is a solution of $D$, and as $\Oold_1$ is causally convex, this solution is uniquely determined by its Cauchy data $u$ (which are entirely contained in $\Oold_1$), i.e. $f_\la|_{\Oold_1}=f_0[u]|_{\Oold_1}$, where $f_0[u]$ is the solution of $D$ on $\Rl^n$ with Cauchy data $u$.
	
	We next determine $f_\la$ on $\Oold_2$. To this end, we use the equation \eqref{eq:GaussMitRandtermen} with the Cauchy hyperplane $\Sigma$ considered here. Inserting the special form of $W$ and $U^0=0$, and taking into account the supports of $w_1$, $w_2$, we obtain with $h\in\Ccin$ and $T_\la^\pm=R_\la^\mp$ since $D=D^*$,
	\begin{align*}
		\langle h,f_\la\rangle
		&-
		\la\left\{ 
			\overline{\langle w_2, R_\la^+h\rangle}\int_{\Sigma_t^+}dx\,( w_1(x),f_\la(x))
			+
			0
			-
			0
			-
			\langle w_1,f_\la\rangle\int_{\Sigma_t^-}dx\,((R_\la^-h)(x),w_2(x))
		\right\}
		\\
		&=
		\langle h,f_\la\rangle
		-
		\la\left\{ 
			\langle R_\la^+h,w_2\rangle \langle w_1,f_\la\rangle
			-
			\langle w_1,f_\la\rangle\langle R_\la^-h,w_2\rangle
		\right\}
		\\
		&=
		\langle h,f_\la\rangle
		+
		\la\langle h,R_\la w_2\rangle\langle w_1,f_\la\rangle
		\\
		&=
		\int_{\Sigma_t}
		\left\{((R_\la h)_t,u_1)-((\partial_0 R_\la h)_t,u_0)\right\}
		\,.
	\end{align*}
	For $h\in\Cin(\Oold_2)$, we have $W R^\pm h=0$ because $\supp w_1$ is spacelike to $\Oold_2$, and thus $R_\la h=Rh$. (This holds in particular for $h=w_2$.) Hence, by the assumption on the Cauchy data of $f_\la$, the integral over $\Sigma=\Sigma_t$ above vanishes for all such $h$. In view of the above equality, we then have $0=\langle h,f_\la+\la\langle w_1,f_\la\rangle R_\la w_2\rangle$, and since $h\in\Cin(\Oold_2)$ was arbitrary,
	\begin{align*}
		f_\la(x)
		=
		-\la\langle w_1,f_\la\rangle (R_\la w_2)(x)
		=
		-\la\langle w_1,f_0[u]\rangle (R w_2)(x)
		,\qquad
		x\in\Oold_2
		\,.
	\end{align*}
	Thus the Cauchy data of $f_\la$ and $Rw_2$ on $\Oold_2\cap\Sigma$ differ only by the (non-zero) factor $-\la\langle w_1,f_0[u]\rangle$. But by assumption, these Cauchy data are zero. Furthermore, the Cauchy data of $Rw_2$ on $\Sigma$ can have only support in $\Oold_2$ since $\supp w_2\subset\Oold_2$, and thus we conclude that $Rw_2$, as a solution of $D$, must vanish identically. This is a contradiction.
\end{proof}
%========================================================

%========================================================
\begin{example}{\bf (Cauchy Problem with non-unique solution)}\\
	Let again $W f=\langle w_1,f\rangle w_2$, with $w_1,w_2\in\Ccin$ with spacelike separated supports, and a Cauchy hyperplane $\Sigma$ such that $\supp w_2\subset \Sigma^-$. Denoting the Cauchy data of $Rw_2$ on $\Sigma$ by $u$, let $f_\la:=f_0[u]-R^+w_2$. Then $w_1$ and $\la\neq0$ can be chosen in such a way that the $R_\la^\pm$ exist, $f_\la$ is a non-zero solution of $D_\la$, and $f_\la$ has zero Cauchy data on $\Sigma$.
\end{example}
%========================================================
\begin{proof}
	We first note that by construction, both $f_0[u]$ and $R^+w_2$ have Cauchy data $u$ on $\Sigma$, and thus $f_\la$ has zero Cauchy data. As the support of $f_0[u]$ extends to infinitely late times, whereas $\supp(R^+w_2)\subset J^+(\supp w_2)$ extends only to the future, $f_\la$ is non-zero. We calculate
	\begin{align*}
		D_\la f_\la
		&=
		Df_0[u]-DR^+w_2+\la\,\langle w_1,f_0[u]-R^+ w_2\rangle w_2
		\\
		&=
		\left(-1+\la\langle w_1,f_0[u]-R^+ w_2\rangle\right)\cdot w_2
		\\
		&=
		\left(-1+\la\langle w_1,f_0[u]\rangle\right)\cdot w_2
		\,,
	\end{align*}
	and have to make sure that $\langle w_1,f_0[u]\rangle\neq0$, so that $D_\la f_\la=0$. This can be done by adjusting $w_1$ suitably. Then $f_\la$ is a non-zero solution of $D_\la$ for $\la=1/\langle w_1,f_0[u]\rangle\neq0$.
	
	Moreover, the fundamental solutions $R_\la^\pm$ exist for this value of $\la$. In fact, since $\supp w_1$ lies spacelike to $\supp w_2$, we have $WR^\pm W=0$, so that the Neumann series \eqref{eq:NeumannSeries} terminate, and thus converge for all $\la\in\Cl$.
\end{proof}
%========================================================

As these examples demonstrate, the Cauchy problem for $D_\la$ is in general ill-posed for Cauchy hyperplanes $\Sigma$ such that $K\not\subset\Sigma^+$ and $K\not\subset\Sigma^-$ --- both existence and uniqueness of solutions can fail. What can however be analyzed is the scattering of free solutions of $D$ at the perturbation $W$, closely related to the relative Cauchy evolution \cite{BrunettiFredenhagenVerch:2001,FewsterVerch:2012,HollandsWald:2001}. This is the topic of the next section.

%========================================================
\subsection{Scattering}
%========================================================
	
We have seen before that any solution $f_\la\in\Sol_\la$ restricts to free solutions $f_0^\pm\in\Sol_0$ in the future $\Sigma_{\tau_+}^+$ and past $\Sigma_{\tau_-}^-$ of the perturbation $W$. In general, it will not be possible to prescribe solutions $f_0^+$ and $f_0^-$ of $D$ such that $f_\la(x)=f^+_0(x)$ for $x_0>\tau_+$ and $f_\la(x)=f^-_0(x)$ for $x_0<\tau_-$, as $f_0^+$ is uniquely determined by $f_0^-$ and vice versa. The relation between these  ``incoming free asymptotics'' to ``outgoing free asymptotics'' -- where ``free''  refers here to the unperturbed differential operator $D$ -- is nothing but the scattering at the non-local potential $W$, which we are going to analyze next. We first define two M\o ller type operators 
\begin{align}
	\Om_{\la,\pm}&:\Sol_\la\to\Sol_0\,,\\
	\Om_{\la,\pm}&:R_\la g\mapsto R g\,,\qquad g\in\Ccin(\Sigma_{\tau_\pm}^\pm)
	.
	\label{def:Ompm}
\end{align}

%========================================================
\begin{proposition}\label{proposition:MollerOperators}{\bf (M\o ller operators)}
	\begin{enumerate}
		\item The M\o ller operators $\Om_{\la,\pm}$ are well-defined linear bijections.
		\item Given a solution $f\in\Sol_\la$, we have
		\begin{align}\label{eq:OmAndAsymptotics}
			f|_{\Sigma_{\tau_\pm}^\pm}=(\Om_{\la,\pm} f)|_{\Sigma_{\tau_\pm}^\pm}.
		\end{align}
		\item The M\o ller operators $\Om_{\la,\pm}$ intertwine the sesquilinear forms $\rho_0$ and $\rho_\la$ defined in \eqref{eq:Rho}, i.e.
		\begin{align}\label{eq:MollerIntertwiner}
			\rho_0(\Om_{\la,\pm} f_\la,\Om_{\la,\pm} g_\la)
			=
			\rho_\la(f_\la,g_\la)
			\,,\qquad
			f_\la,g_\la\in\Sol_\la\,.
		\end{align}
	\end{enumerate}
\end{proposition}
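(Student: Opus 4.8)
The plan is to deduce all three parts from the structural facts already established: the representation $\Sol_\la = R_\la\Ccin(\bSigma)$ of Proposition~\ref{proposition:SolutionSpace}~a), the support and coincidence properties of $R^\pm_\la$ versus $R^\pm$ in Theorem~\ref{theorem:GlobaleFundamentalSolutions}~b)--d), and the fact that $W$ is supported in $K\subset M_\tau=\{\tau_-<x_0<\tau_+\}$. The one geometric observation I would use repeatedly is the following: if $g\in\Ccin(\Sigma^\pm_{\tau_\pm})$, then since $\supp g$ is compact, $J^\pm(\supp g)$ stays in $\Sigma^\pm_{\tau_\pm}$ and is hence disjoint from $K$; so Theorem~\ref{theorem:GlobaleFundamentalSolutions}~d) yields the ``safe'' identity $R^\pm_\la g = R^\pm g$, whereas the opposite branch differs only by $R^\mp_\la g - R^\mp g$, which by Theorem~\ref{theorem:GlobaleFundamentalSolutions}~c) is supported in $J^\mp(K)$ and therefore disjoint from $\Sigma^\pm_{\tau_\pm}$.

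For part a), I would first note that a Cauchy hyperplane inside $\Sigma^\pm_{\tau_\pm}$ admits a causally convex neighbourhood $\bSigma\subset\Sigma^\pm_{\tau_\pm}$ with $K$ lying strictly in the past (resp.\ future) of $\bSigma$, so Proposition~\ref{proposition:SolutionSpace}~a) gives $\Sol_\la = R_\la\Ccin(\bSigma)\subset R_\la\Ccin(\Sigma^\pm_{\tau_\pm})$, and the same argument applied to $D$ via {\em D6)} gives $\Sol_0 = R\Ccin(\Sigma^\pm_{\tau_\pm})$; hence the prescription $R_\la g\mapsto Rg$ is defined on all of $\Sol_\la$ and is onto $\Sol_0$, once well-definedness is checked. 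For that, suppose $g\in\Ccin(\Sigma^\pm_{\tau_\pm})$ with $R_\la g=0$; then $h:=R^\pm_\la g=R^\mp_\la g$ has compact support (cf.\ the proof of Proposition~\ref{proposition:SolutionSpace}~b)), and $g=D_\la h$ by Theorem~\ref{theorem:GlobaleFundamentalSolutions}~a). By the geometric observation $h=R^\pm g$, so $\supp h$ is disjoint from $K$, whence $Wh=0$ and $g=D_\la h=Dh$; therefore $Rg=RDh=0$ by {\em D4)}. Injectivity is the mirror image: if $Rg=0$ then $h:=R^\pm g=R^\mp g$ is compactly supported and, being supported in $J^\pm(\supp g)$, disjoint from $K$, so $g=Dh=D_\la h$ and $R_\la g=R_\la D_\la h=0$ by Theorem~\ref{theorem:GlobaleFundamentalSolutions}~a). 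Linearity is immediate from linearity of $R$, $R_\la$ and of the passage $f_\la\mapsto g$, so $\Om_{\la,\pm}$ is a linear bijection of $\Sol_\la$ onto $\Sol_0$.

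Parts b) and c) are then short. For b), writing $f=R_\la g$ with $g\in\Ccin(\Sigma^\pm_{\tau_\pm})$, so that $\Om_{\la,\pm}f=Rg$, I would compute $R_\la g-Rg=(R^\mp_\la g-R^\mp g)-(R^\pm_\la g-R^\pm g)$; the second bracket vanishes identically since $R^\pm_\la g=R^\pm g$, and the first is supported in $J^\mp(K)$, which misses $\Sigma^\pm_{\tau_\pm}$, giving $f|_{\Sigma^\pm_{\tau_\pm}}=(\Om_{\la,\pm}f)|_{\Sigma^\pm_{\tau_\pm}}$. For c), under the symmetry hypotheses one writes $f_\la=R_\la f$, $g_\la=R_\la g$ with $f,g\in\Ccin(\Sigma^\pm_{\tau_\pm})$, so $\Om_{\la,\pm}f_\la=Rf$ and $\Om_{\la,\pm}g_\la=Rg$. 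By the definition \eqref{eq:Rho} one has $\rho_\la(f_\la,g_\la)=\langle f,R_\la g\rangle$ and $\rho_0(\Om_{\la,\pm}f_\la,\Om_{\la,\pm}g_\la)=\langle f,Rg\rangle$, so it suffices to show $\langle f,(R_\la-R)g\rangle=0$. But $(R_\la-R)g$ reduces, up to sign, to $R^\mp_\la g-R^\mp g$ (using $R^\pm_\la g=R^\pm g$), hence is supported in $J^\mp(K)$, which is disjoint from $\supp f\subset\Sigma^\pm_{\tau_\pm}$; the integrand therefore vanishes pointwise and the pairing is zero.

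The computations above are routine bookkeeping of supports; the only real point of care is the well-definedness in part a). One must make sure that two test functions in $\Ccin(\Sigma^\pm_{\tau_\pm})$ representing the same element of $\Sol_\la$ differ by an element of $D_\la\Ccin$ which, because it is supported together with its causal future (resp.\ past) away from $K$, simultaneously lies in $D\Ccin$ and hence in $\ker R$. This is exactly where the coincidence $R^\pm_\la=R^\pm$ off $J^\pm(K)$ and the vanishing of $W$ outside $K$ do the work, and no smallness of $|\la|$ beyond what is needed for the existence of $R^\pm_\la$ enters.
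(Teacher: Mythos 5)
Your proposal is correct, and parts b) and c) run essentially along the same lines as the paper's proof: in both, the crux is the observation that $R^\pm_\la g = R^\pm g$ for $g\in\Ccin(\Sigma^\pm_{\tau_\pm})$ (Theorem~\ref{theorem:GlobaleFundamentalSolutions}~d)), while $R^\mp_\la g - R^\mp g$ is supported in $J^\mp(K)$, hence away from $\Sigma^\pm_{\tau_\pm}$, so the causal propagators $R_\la$ and $R$ agree on $\Sigma^\pm_{\tau_\pm}$ when applied to such $g$.

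In part a), however, you take a genuinely different route. The paper also uses the agreement $(R_\la g)|_{\Sigma^\pm_{\tau_\pm}} = (Rg)|_{\Sigma^\pm_{\tau_\pm}}$, but then argues: if $R_\la g = 0$, then $Rg$ vanishes on the slab $\Sigma^\pm_{\tau_\pm}$, and since $Rg$ solves the \emph{unperturbed} equation, uniqueness of the Cauchy problem for $D$ forces $Rg=0$. You instead work at the level of kernels: from $R_\la g = 0$ you extract $h := R^+_\la g = R^-_\la g$, note that $h$ has compact support disjoint from $K$ (so $Wh = 0$ and $h = R^\pm g$), conclude $g = D_\la h = Dh$, and read off $Rg = RDh = 0$; the reverse implication (injectivity) is the mirror image. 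Your version has the merit of treating well-definedness and injectivity symmetrically and entirely via Theorem~\ref{theorem:GlobaleFundamentalSolutions}~a), whereas the paper states "well-defined and injective" after what is, strictly speaking, only the well-definedness direction, implicitly invoking the analogous Cauchy-uniqueness statement for $D_\la$ (Corollary~2.15~a)) for the converse. Your approach avoids any appeal to the Cauchy problem for the perturbed operator.

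One small bookkeeping slip: in part b) you write $R_\la g - Rg = (R^\mp_\la g - R^\mp g) - (R^\pm_\la g - R^\pm g)$. Since $R_\la - R = (R^-_\la - R^-) - (R^+_\la - R^+)$ independently of the sign choice, your expression has the correct overall sign only in the ``$+$'' branch; in the ``$-$'' branch it is off by an overall sign. This is immaterial for the support argument (and hence for the conclusion), but worth fixing for precision.
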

%========================================================
\begin{proof}
	{\em a)} Let $g\in\Ccin(\Sigma_{\tau_\pm}^\pm)$ and assume that $R_\la g=0$. Then in particular the restriction $(R_\la g)|_{\Sigma_{\tau_\pm}^\pm}$ vanishes. But on $\Sigma_{\tau_\pm}^\pm$, we have $0=(R_\la g)|_{\Sigma_{\tau_\pm}^\pm}=(R g)|_{\Sigma_{\tau_\pm}^\pm}$, and as $Rg$ is a solution of $D$, this implies that $Rg=0$ on all of $\Rl^n$. Hence the assignment \eqref{def:Ompm} is well-defined and injective.
	
	By Proposition~\ref{proposition:SolutionSpace}~{\em a)}, $\Sol_\la=R_\la\Ccin(\Sigma_{\tau_\pm}^\pm)$. Thus \eqref{def:Ompm} defines in fact a linear mapping from $\Sol_\la$ to $\Sol_0$, and also surjectivity is immediate from \eqref{def:Ompm} and Proposition~\ref{proposition:SolutionSpace}~{\em a)}.
	
	{\em b)} Let $f=R_\la g\in\Sol_\la$, $g\in\Ccin(\Sigma_{\tau_\pm}^\pm)$. Then $f|_{\Sigma_{\tau_\pm}^\pm}=Rg|_{\Sigma_{\tau_\pm}^\pm}$. This is the same as \eqref{eq:OmAndAsymptotics}.
	
	{\em c)} Let $f_\la,g_\la\in\Sol_\la$. By Proposition~\ref{proposition:SolutionSpace}~{\em a)}, we find $f^\pm,g^\pm\in\Ccin(\Sigma_{\tau_\pm}^\pm)$ such that $f_\la=R_\la f^+=R_\la f^-$, $g_\la=R_\la g^+=R_\la g^-$ and therefore, $\Om_{\la,\pm}f_\la=R f^\pm$, $\Om_{\la,\pm}g_\la=R g^\pm$. Thus, the left and right hand sides of \eqref{eq:MollerIntertwiner} can be written as
	\begin{align*}
		\rho_0(\Om_{\la,\pm}f_\la,\Om_{\la,\pm}g_\la)
		&=
		\rho_0(Rf^\pm,Rg^\pm)
		=
		\langle f^\pm,R g^\pm\rangle
		\,,\\
		\rho_\la(f_\la,g_\la)
		&=
		\rho_\la(R_\la f^\pm,R_\la g^\pm)
		=
		\langle f^\pm,R_\la g^\pm\rangle
		\,.
	\end{align*}
	But as in part {\em b)}, we have $(R_\la^\pm g^\pm)|_{\Sigma_{\tau_\pm}^\pm}=(R^\pm g^\pm)|_{\Sigma_{\tau_\pm}^\pm}$, and consequently $\langle f^\pm,R_\la g^\pm\rangle=\langle f^\pm,R g^\pm\rangle$.
\end{proof}
%========================================================

We can now introduce the {\em scattering operator}
\begin{align}\label{def:ScatteringOperator}
	S_\la
	:=
	\Om_{\la,+}(\Om_{\la,-})^{-1}:\Sol_0\to\Sol_0\,,
\end{align}
which maps the incoming asymptotics $\Om_{\la,-}f$ of a solution $f\in\Sol_\la$ to its outgoing asymptotics $\Om_{\la,+}f$, and thus describes the scattering by the potential term $\la W$ \cite{BrunettiFredenhagenVerch:2001}. 

%========================================================
\begin{theorem}\label{theorem:ScatteringOperator}{\bf (Scattering Operator)}
	\begin{enumerate}
		\item The scattering operator $S_\la:\Sol_0\to\Sol_0$ \eqref{def:ScatteringOperator} is a linear bijection.
		\item $S_\la$ preserves the sesquilinear form $\rho_0$ \eqref{eq:Rho}, i.e.
		\begin{align}
			\rho_0(S_\la f,S_\la g)=\rho_0(f,g)
			\,,\qquad
			f,g\in\Sol_0\,.
		\end{align}
		\item Explicitly, $S_\la$ is given by
		\begin{align}\label{eq:SFormel}
			S_\la
			=
			1+\la RWN_{\tau,\la}^+
			=
			1+RW\sum_{k=0}^\infty \la^{k+1}(- R^+W)^k
			\,.
		\end{align}
		The sum converges in the norm of bounded operators on $\Ltwon(M_\tau)$.
		\item For any $f_0\in\Sol_0$,
		\begin{align}
			\la\longmapsto S_\la f_0
		\end{align}
		is analytic in the topology of $\Cin$ on a finite disc around $\la=0$. In particular, $f_0\in\Sol_0$,
		\begin{align}
			\left.\frac{d (S_\la f_0)}{d\la}\right|_{\la=0}
			=
			RWf_0
			\,.
		\end{align}
	\end{enumerate}
\end{theorem}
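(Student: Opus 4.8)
For parts a) and b), everything reduces to Proposition~\ref{proposition:MollerOperators}. Part a) is immediate: by Proposition~\ref{proposition:MollerOperators}~a) the M\o ller operators $\Om_{\la,\pm}:\Sol_\la\to\Sol_0$ are linear bijections, hence $S_\la=\Om_{\la,+}(\Om_{\la,-})^{-1}:\Sol_0\to\Sol_0$ is a composition of linear bijections. For b), given $f,g\in\Sol_0$ set $f_\la=(\Om_{\la,-})^{-1}f$, $g_\la=(\Om_{\la,-})^{-1}g$ and apply \eqref{eq:MollerIntertwiner} with both signs:
\begin{align*}
\rho_0(f,g)=\rho_0(\Om_{\la,-}f_\la,\Om_{\la,-}g_\la)=\rho_\la(f_\la,g_\la)=\rho_0(\Om_{\la,+}f_\la,\Om_{\la,+}g_\la)=\rho_0(S_\la f,S_\la g)\,.
\end{align*}

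For part c), fix $f_0\in\Sol_0$ and put $f_\la:=(\Om_{\la,-})^{-1}f_0\in\Sol_\la$; by Proposition~\ref{proposition:MollerOperators}~b) it agrees with $f_0$ on $\Sigma_{\tau_-}^-$ and with $S_\la f_0$ on $\Sigma_{\tau_+}^+$. The key step is an ``absorption'' trick turning $f_\la$ into solutions of the unperturbed $D$: since $Df_\la=-\la Wf_\la$ and $DR^\pm(Wf_\la)=Wf_\la$, the functions
\begin{align*}
\phi:=f_\la+\la R^+Wf_\la\,,\qquad \psi:=f_\la+\la R^-Wf_\la
\end{align*}
satisfy $D\phi=D\psi=0$ and lie in $\Sol_0$ (their supports are contained in unions of two translated light cones, as $\supp(R^\pm Wf_\la)\subset J^\pm(K)$). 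Because $R^+Wf_\la$ vanishes on $\Sigma_{\tau_-}^-$ and $R^-Wf_\la$ vanishes on $\Sigma_{\tau_+}^+$, $\phi$ agrees with $f_\la=f_0$ on $\Sigma_{\tau_-}^-$ and $\psi$ agrees with $f_\la=S_\la f_0$ on $\Sigma_{\tau_+}^+$; each of these regions contains a Cauchy hyperplane, so uniqueness of the Cauchy problem for $D$ (property D3) forces $\phi=f_0$ and $\psi=S_\la f_0$. Subtracting yields $S_\la f_0=f_0+\la(R^--R^+)Wf_\la=f_0+\la RWf_\la$. To express $Wf_\la$ through $f_0$, restrict the identity $f_0=f_\la+\la R^+Wf_\la$ to $M_\tau$: using that $Wf_\la$ is supported in $K\subset M_\tau$ and depends only on $f_\la|_{M_\tau}$, that $f_\la|_{M_\tau},f_0|_{M_\tau}\in\Ltwon(M_\tau)$ (solutions in $\Sol_\la,\Sol_0$ have bounded support on the time slab), and that $R^+$ restricts to $R_\tau^+$ on $M_\tau$ (property D7), one gets $(1+\la R_\tau^+W)(f_\la|_{M_\tau})=f_0|_{M_\tau}$ in $\Ltwon(M_\tau)$, hence $f_\la|_{M_\tau}=N_{\tau,\la}^+(f_0|_{M_\tau})$ by Proposition~\ref{proposition:NeumannSeries}~b) for $|\la|$ small. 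Therefore $Wf_\la=WN_{\tau,\la}^+f_0$ and $S_\la f_0=(1+\la RWN_{\tau,\la}^+)f_0$; expanding $N_{\tau,\la}^+=\sum_{k\ge0}(-\la R_\tau^+W)^k$ gives the stated series, convergent in $\B(\Ltwon(M_\tau))$ by Proposition~\ref{proposition:NeumannSeries}~a).

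For part d), by Proposition~\ref{proposition:WRbounded} and the bound $\|\la R_\tau^+W\|<1$, the map $\la\mapsto N_{\tau,\la}^+f_0=\sum_{k\ge0}(-\la)^k(R_\tau^+W)^kf_0$ is an $\Ltwon(M_\tau)$-valued power series converging locally uniformly on the disc $|\la|<\|R_\tau^+W\|^{-1}$. Composing with the continuous linear maps $W:\Ltwon(M_\tau)\to\Ccin(K)$ (property W2) and $R=R^--R^+:\Ccin\to\Cin$ (property D4) preserves this local uniform convergence, so $\la\mapsto RWN_{\tau,\la}^+f_0$, and hence $\la\mapsto S_\la f_0=f_0+\la RWN_{\tau,\la}^+f_0$, is an analytic $\Cin$-valued function on that disc. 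Differentiating term by term at $\la=0$, where $N_{\tau,0}^+=1$, gives $\left.\frac{d(S_\la f_0)}{d\la}\right|_{\la=0}=RWf_0$.

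The main obstacle is part c): one must pass from the global, non-local definition of $f_\la=(\Om_{\la,-})^{-1}f_0$ to a clean local equation on $\Ltwon(M_\tau)$. This needs the absorption trick (adding $\la R^\pm Wf_\la$ to convert a solution of $D_\la$ into one of $D$), careful bookkeeping of which fundamental solution is relevant in which region, and two uses of uniqueness (property D3) — once to obtain $f_0=f_\la+\la R^+Wf_\la$, once to identify $S_\la f_0$. Parts a), b), d) are then routine consequences of the earlier results.
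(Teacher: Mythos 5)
Your proof is correct. Parts a), b), and d) follow essentially the same route as the paper, and part d) in particular uses the identical mechanism (analyticity of the $\Ltwon(M_\tau)$-valued Neumann series composed with the continuous maps $W$ and $R$).

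Your part c), however, takes a genuinely different route. The paper works with the explicit gluing representations $f_\la = R_\la g^+ = R_\la g^-$ from Proposition~\ref{proposition:SolutionSpace}~{\em a)}, computes $f_\la|_{M_\tau} = N^-_{\tau,\la} R_\tau g^+ = N^+_{\tau,\la} R_\tau g^-$ by an explicit manipulation of the Neumann series, and then uses the algebraic identity $(1+X)(1+Y)^{-1} = 1 - (Y-X)(1+Y)^{-1}$ to compare the two Neumann series and extract $(Rg^+)|_{M_\tau} = (1 + \la R_\tau W N_{\tau,\la}^+) R_\tau g^-$. Your ``absorption'' argument instead modifies the solution $f_\la$ itself: you observe that $\phi = f_\la + \la R^+ W f_\la$ and $\psi = f_\la + \la R^- W f_\la$ are solutions of the unperturbed equation $D$, identify them with $f_0$ and $S_\la f_0$ respectively by comparing past/future asymptotics (via Proposition~\ref{proposition:MollerOperators}~{\em b)}) and invoking uniqueness of the unperturbed Cauchy problem ({\em D3)}), and obtain $S_\la f_0 = f_0 + \la R W f_\la$ by subtraction. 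Only then do you pass to the restriction $f_0|_{M_\tau} = (1 + \la R_\tau^+ W) f_\la|_{M_\tau}$ on the time slab and invoke the Neumann series. This is more conceptual and avoids both the intermediate identity $f_\la|_{M_\tau}=N^\pm_{\tau,\la}R_\tau g^\mp$ and the algebraic quotient-of-inverses trick; it yields the key intermediate relation $S_\la f_0 = f_0 + \la R W f_\la$ cleanly before any Neumann-series manipulation. The paper's version is more computational but stays entirely within the $\Ltwon(M_\tau)$-operator framework once the representatives $g^\pm$ are fixed. Both ultimately rely on the same inputs (Propositions~\ref{proposition:NeumannSeries}, \ref{proposition:SolutionSpace}, \ref{proposition:MollerOperators}), so there is no difference in generality.
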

%========================================================
\begin{proof}
	Part {\em a)} is clear from Proposition~\ref{proposition:MollerOperators}~{\em a)}. For {\em b)}, we use Proposition~\ref{proposition:MollerOperators}~{\em c)} and invertibility of the M\o ller operators to find, $f,g\in\Sol_0$,
	\begin{align*}
		\rho_0(S_\la f,S_\la g)
		=
		\rho_0(\Om_{\la,+}(\Om_{\la,-})^{-1}f,\Om_{\la,+}(\Om_{\la,-})^{-1}g)
		=
		\rho_\la((\Om_{\la,-})^{-1}f,(\Om_{\la,-})^{-1}g)
		=
		\rho_0(f,g)
		\,.
	\end{align*}
	{\em c)} Given a solution $f_0\in\Sol_0$,
	\begin{align*}
		\psi
		:=
		WN^+_{\tau,\la} f_0
		=
		W\sum_{k=0}^\infty(-\la R_\tau^+W)^kf_0
	\end{align*}
	is well-defined because each term is restricted to $K\subset M_\tau$ by the action of $W$. Since $W$ is smoothing, we have $\psi\in\Ccin(K)$. This implies that with $f_0$, also $f_0+\la R\psi$ is a solution of $D$, i.e $T_\la:=1+\la RWN_{\tau,\la}^+$ is a well-defined linear map $T_\la:\Sol_0\to\Sol_0$.
	
	Any solution of $D$ is uniquely determined by its restriction to $M_\tau$. To prove $T_\la=S_\la$, it is therefore sufficient to prove that the restrictions of $S_\la f_0$ and $T_\la f_0$ to $M_\tau$ coincide.

	To do so, we consider a solution $f_\la\in\Sol_\la$. Then we find $\eps>0$ and $g^\pm\in\Ccin(M_{\tau,\eps}^\pm)$ such that $f_\la=R_\la g^+=R_\la g^-$ and $f_0:=Rg^-=\Om_{\la,-}f_\la$ as well as $S_\la f_0=\Om_{\la,+}f_\la=Rg^+$, by definition of $\Om_{\la,\pm}$ and $S_\la$. As every solution $f_0\in\Sol_0$ arises in this way, what is left to prove is
	\begin{align}\label{eq:SvsT}
		(R g^+)|_{M_\tau}
		=
		(T_\la Rg^-)|_{M_\tau}
	\end{align}
	To this end, we compute 
	\begin{align*}
		f_\la|_{M_\tau}
		&=
		R_{\tau,\la} g^+
		\\
		&=
		R^-_{\tau,\la} g^+-R^+_{\tau,\la} g^+
		\\
		&=
		N_{\tau,\la}^-R^-_{\tau} g^+-R^+_{\tau} g^+
		\\
		&=
		(1-\la N_{\tau,\la}^-R^-_{\tau}W)
		R^-_{\tau} g^+-R^+_{\tau} g^+
		\\
		&=
		R_{\tau}g^+ - \la	N_{\tau,\la}^-R_{\tau}^-W R_{\tau}^-g^+
		\\
		&=
		R_{\tau}g^+ - \la	N_{\tau,\la}^-R_{\tau}^-W R_{\tau} g^+
		\\
		&=
		(1- \la	N_{\tau,\la}^-R_{\tau}^-W) R_{\tau} g^+
		\\
		&=
		N_{\tau,\la}^- R_{\tau} g^+
		\,,
	\end{align*}
	where we have used the definition of $R_{\tau,\la}^\pm$, equation \eqref{eq:N}, and the fact that because of the supports of $g^+$ and $W$, we have $WR_\tau^- g^+=WR_\tau g^+$. In complete analogy, one computes $f_\la|_{M_\tau} =N_{\tau,\la}^+R_\tau g^-$. 
	
	We thus have $(Rg^+)|_{M_\tau}=R_\tau g^+=(N_{\tau,\la}^-)^{-1}(f_\la|_{M_\tau})=(N_{\tau,\la}^-)^{-1}N_{\tau,\la}^+R_\tau g^-$. Using the equation $(1+X)(1+Y)^{-1}=1-(Y-X)(1+Y)^{-1}$, valid for operators $X,Y$ with $\|X\|,\|Y\|<1$, we find
	\begin{align*}
		(Rg^+)|_{M_\tau}
		&=
		(N_{\tau,\la}^-)^{-1}N_{\tau,\la}^+ R_\tau g^-
		\\
		&=
		(1+\la R_\tau^-W)(1+\la R_\tau^+W)^{-1} 
		R_\tau g^-
		\\
		&=
		(1+\la R_\tau W N_{\tau,\la}^+)R_\tau g^-
		\\
		&=
		(T_\la Rg^-)|_{M_\tau}
		\,.
	\end{align*}
	This shows \eqref{eq:SvsT} and thus $S_\la=T_\la$. The second equality in \eqref{eq:SFormel} follows by inserting the definition of $N_{\tau,\la}^+$.
	
	{\em d)} For $f_0\in\Sol_0$, we have $Wf_0\in\Ccin(K)\subset\Ltwon(M_\tau)$. As the Neumann series $N_{\tau,\la}^+$ converges in the norm of $\B(\Ltwon(M_\tau))$, the function
	\begin{align*}
		\la\longmapsto N_{\tau,\la}^+f_0
		=
		f_0-\la N_{\tau,\la}^+R_\tau^+ Wf_0
		\in\Ltwon(M_\tau)
	\end{align*}
	is analytic (in the norm topology of $\Ltwon(M_\tau)$) for sufficiently small $|\la|$. But $W:\Ltwon(M_\tau)\to\Ccin$ and $R:\Ccin\to\Cin$ are linear and continuous. Hence $\la\mapsto f_0+\la RWN_{\tau,\la}^+ f_0\in\Cin$ is analytic in the topology of $\Cin$. According to {\em c)}, this function coincides with $\la\mapsto S_\la f_0$.
	
	In view of this analyticity, we can differentiate under the sum in \eqref{eq:SFormel} and immediately obtain $\partial_\la S_\la f_0|_{\la=0}=RWf_0$, $f_0\in\Sol_0$.
\end{proof}
%========================================================

%========================================================
\section{Perturbations by star products}\label{Section:StarProducts}
%========================================================

In this section we discuss two examples of perturbations $W$ which are not $\Ccin$-kernel operators, but rather limits thereof. These examples arise in the context of (classical) field theory of noncommutative spaces, where one seeks to describe the dynamics in the presence of a noncommutatively coupled potential. For the case of a Dirac operator and a star product which is commutative in time, such an analysis was carried out in \cite{BorrisVerch:2010}. Here we can generalize to the case of noncommutative time.

We will present two examples, each of which violates one the important properties of $\Ccin$-kernel operators, namely either the smoothness or the compact support of the kernel. We will not fully analyze these perturbations here, but rather show how they fit in the framework described previously as limits of $\Ccin$-kernel operators, and that the important scattering derivation $\partial_\la S_\la|{\la=0}$ still exists here.

The basic structure we will be concerned with is that of Rieffel's product \cite{Rieffel:1992}. Thus the main ingredient is an action $\alpha$ of $\Rl^n$. In our context, $\alpha$ will act on various function spaces (for simplicity, we here take $N=1$, i.e. consider scalar functions) by pullback of an $\Rl^n$-action $\tau$ on $\Rl^n$, i.e. by $(\alpha_z f)(x)=f(\tau_z(x))$. Picking also an antisymmetric, invertible, real $(n\times n)$-matrix as deformation parameter, we consider products of the form
\begin{align}
	w\star f
	:=
	\int_{\Rl^n} dp\int_{\Rl^n} dz\,e^{2\pi i(p,z)}\,\alpha_{\te p}w\cdot\alpha_z f
	\,
	.
\end{align}
For our purposes, we will always take $w\in\Ccin$, and $f$ will be a smooth function on $\Rl^n$ with falloff properties depending on the choice of $\tau$.

The best known example is to take $\tau_z(x)=x+z$ and $f\in\Ss(\Rl^n)$ (Schwartz space). In this case $\star$ coincides with the Moyal product, and we have a continuous associative but noncommutative product $\star$.

Another class of examples has been discussed in \cite{LechnerWaldmann:2011}, see also \cite{BahnsWaldmann:2007,HellerNeumaierWaldmann:2006} for earlier related work. There the idea is to take $\tau$ of such a form that it leaves a compact set $K\subset\Rl^n$ invariant. In more detail, such an action $\tau$ can for example be constructed as follows \cite{HellerNeumaierWaldmann:2006,LechnerWaldmann:2011}. Let $\gamma:(-1,1)\to\Rl$ be a diffeomorphism, and define, $x=(x_1,...,x_n),z=(z_1,...,z_n)\in\Rl^n$,
\begin{align*}
	\tau_z(x)_k
	&:=
	\begin{cases}
		\gamma^{-1}(\gamma(x_k)+z_k) & |x_k|<1\\
		x_k & |x_k|\geq1
	\end{cases}
	\qquad.
\end{align*}
Clearly $\tau$ is an $\Rl^n$-action, and $K:=[-1,1]^n$ is invariant under $\tau$.
When $\gamma$ is appropriately chosen, $\tau$ is also smooth and polynomially bounded. We recall from \cite[Sect.~5]{LechnerWaldmann:2011} that this can be achieved by choosing $\gamma$ such that $\gamma$ is antisymmetric, $\gamma(x_k)=\exp(\frac{1}{1-x_k})$ for $x_k>\frac{1}{2}$, and $\gamma'(x_k)\geq\gamma'(0)>0$. In this case, one can take $f\in\Cin$, and again obtain a continuous associative but noncommutative product $\star$.
\\
\\
In both these situations, the one of the canonical translations $\tau$ and the the one just discussed, the integral has to understood as an oscillatory integral taking values in $\Ss(\Rl^n)$ and $\Cin(\Rl^n)$, respectively. Concretely, it can always be calculated according to 
\begin{align}\label{eq:w*f_eps}
	(w\star f)(x)
	=
	\lim_{\eps\to0}
	\int_{\Rl^n} dp\int_{\Rl^n} dz\,e^{2\pi i(p,z)}\,\chi(\eps p)\chi(\eps z)w(\tau_{\te p}(x)) f(\tau_z(x))
	\,,
\end{align}
where $\chi\in\Ccin(\Rl^n)$ is a cutoff function, equal to $1$ on an open neighborhood of $0$, and $w\star f$ is independent of the choice of $\chi$.

In the following, we will consider for $w\in\Ccin$ the perturbation term
\begin{align}\label{eq:W}
	Wf
	:=
	w\star f
	\,,
\end{align}
and denote by $W_\eps$ the integral operator in \eqref{eq:w*f_eps}, $\eps>0$. When the two cases need to be distinguished, we will also write $W_{(\eps)}^M$ and $W^K_{(\eps)}$, respectively. To simplify matters, we will also require that the support of $w$ is contained in the interior of $K$ in the case of $W^K$.

%========================================================
\begin{lemma}{\bf (Properties of star product kernels)}\\
	Let $w\in\Ccin$ and $W=W^M$ or $W=W^K$ defined as in \eqref{eq:W}.
	\begin{enumerate}
		\item Let $\eps>0$. Then $W_\eps$ is a $\Ccin$-kernel operator.
		\item The integral kernel of $W^M$ is smooth, but not of compact support.
		\item The integral kernel of $W^K$ is of compact support (in $K$), but not smooth.
	\end{enumerate}
\end{lemma}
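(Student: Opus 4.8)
The plan is to reduce each of the operators in question to an honest kernel operator by performing the $z$- and $p$-integrations through the substitutions $y=\tau_z(x)$ and $u=\tau_{\te p}(x)$, and then to read off smoothness and support of the kernel from the resulting closed formula. For a), fix $\eps>0$. Because of the cut-offs $\chi(\eps p),\chi(\eps z)$ the integral defining $W_\eps$ runs over the compact set $\{(p,z):\eps p,\eps z\in\supp\chi\}$, so Fubini applies, and the map $z\mapsto\tau_z(x)$ --- a diffeomorphism of $\Rl^n$ in the translation case, and, once one observes that $w(\tau_{\te p}(x))=0$ unless $x\in(-1,1)^n$ (since $\supp w\subset(-1,1)^n$ and $\tau_z(x)_k=x_k$ for $|x_k|\geq1$), a diffeomorphism of $\Rl^n$ onto $(-1,1)^n$ in the $K$-localised case --- turns $W_\eps$ into a kernel operator with
\[
    w_\eps(x,y)=J(x,y)\int dp\, e^{2\pi i(p,z(x,y))}\,\chi(\eps p)\,\chi(\eps z(x,y))\,w(\tau_{\te p}(x)),
\]
where $z(\cdot,\cdot)$ is the inverse substitution and $J$ its Jacobian ($J\equiv1$ for $W^M$, $J(x,y)=\prod_k\gamma'(y_k)$ for $W^K$), both smooth on the relevant domain. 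The remaining $p$-integral is over the compact set $\eps^{-1}\supp\chi$ with smooth integrand, so $w_\eps\in\Cin$ by differentiation under the integral. And $w_\eps$ has compact support: $w(\tau_{\te p}(x))\neq0$ for some $p$ with $\chi(\eps p)\neq0$ confines $x$ to a fixed compact set (properness of the translations, resp.\ the structure of $\tau$ near $\partial K$), and then $\chi(\eps z(x,y))\neq0$ confines $z(x,y)$, hence also $y$, to a compact set. Thus $w_\eps\in\Ccin(\Rl^n\times\Rl^n)$ and $W_\eps\in\W$.

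\textbf{Part b).} For $W^M$ one has $\tau_z(x)=x+z$, so $z(x,y)=y-x$ and $J\equiv1$; moreover $p\mapsto w(x+\te p)$ already has compact support because $\supp w$ is compact, so no oscillatory regularisation of the $p$-integral is needed. Performing the substitution $y=x+z$ together with $u=x+\te p$ (and using antisymmetry of $\te^{-1}$) gives $(W^Mf)(x)=\int w^M(x,y)f(y)\,dy$ with
\[
    w^M(x,y)=|\det\te|^{-1}\,e^{2\pi i(x,\,\te^{-1}(y-x))}\;\widehat w\!\big(\te^{-1}(y-x)\big),
\]
$\widehat w$ the Fourier transform of $w$. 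As $w\in\Ccin\subset\Ss(\Rl^n)$ we have $\widehat w\in\Ss(\Rl^n)\subset\Cin$, hence $w^M$ is smooth. It is not compactly supported: otherwise, for fixed $x$, $\widehat w$ would vanish outside a compact set, but $\widehat w$ extends to an entire function on $\Cl^n$ (Paley--Wiener), and a nonzero entire function cannot vanish on a nonempty open set --- a contradiction, since $w\neq0$.

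\textbf{Part c).} For $W^K$ write $\Gamma(x):=(\gamma(x_1),\dots,\gamma(x_n))$, a diffeomorphism of $(-1,1)^n$ onto $\Rl^n$. As in a), only $x\in(-1,1)^n$ contribute, and there $z(x,y)_k=\gamma(y_k)-\gamma(x_k)$ with $y\in(-1,1)^n$, $J(x,y)=\prod_k\gamma'(y_k)$; moreover $p\mapsto w(\tau_{\te p}(x))$ again has compact support, since $|(\te p)_k|\to\infty$ drives $\tau_{\te p}(x)$ towards $\partial K\notin\supp w$. Carrying out the analogous substitutions --- including $u=\tau_{\te p}(x)$, i.e.\ $\te p=\Gamma(u)-\Gamma(x)$ --- and passing to the limit $\eps\to0$ in the formula of a) (dominated convergence applies because integration by parts in $p$ gives the $p$-integral rapid decay in $z(x,y)$, which dominates the polynomial-in-$\gamma'$ growth of $J$) yields $(W^Kf)(x)=\int w^K(x,y)f(y)\,dy$ with
\[
    w^K(x,y)=|\det\te|^{-1}\,\mathbbm{1}_{(-1,1)^n}(x)\,\mathbbm{1}_{(-1,1)^n}(y)\Big(\prod_{k=1}^n\gamma'(y_k)\Big)\,
    e^{2\pi i(\Gamma(x),\,\te^{-1}(\Gamma(y)-\Gamma(x)))}\,\widehat{(w\circ\Gamma^{-1})}\!\big(\te^{-1}(\Gamma(y)-\Gamma(x))\big).
\]
Here $w\circ\Gamma^{-1}\in\Ccin(\Rl^n)$ (its support is the compact set $\Gamma(\supp w)$), so the last factor is bounded; since $w^K(x,y)=0$ unless $x,y\in(-1,1)^n$, we get $\supp w^K\subset\overline{(-1,1)^n}\times\overline{(-1,1)^n}=K\times K$. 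But $w^K$ is not smooth --- indeed not even bounded: on the diagonal $w^K(x,x)=|\det\te|^{-1}\big(\prod_k\gamma'(x_k)\big)\,\widehat{(w\circ\Gamma^{-1})}(0)$ with $\widehat{(w\circ\Gamma^{-1})}(0)=\int_{(-1,1)^n}w(u)\prod_k\gamma'(u_k)\,du$, which is nonzero for suitable $w$ (e.g.\ $0\leq w\not\equiv0$); since $\gamma'(x_k)\to\infty$ as $x_k\to1^-$ (recall $\gamma(x_k)=\exp(\frac{1}{1-x_k})$ near $x_k=1$), $w^K$ blows up near the boundary of $K\times K$, so it cannot be continuous. (For arbitrary $w\not\equiv0$ one argues in the same way along a curve with $x\to\partial K$ on which $\te^{-1}(\Gamma(y)-\Gamma(x))$ is pinned to a point where the entire function $\widehat{(w\circ\Gamma^{-1})}$ is nonzero.)

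\textbf{The main obstacle} is the rigorous treatment of the oscillatory integrals in b) and c) --- justifying the interchange of the $p$- and $z$-integrations and the changes of variables in them. What makes this routine is the observation, used above, that $p\mapsto w(\tau_{\te p}(x))$ in fact has compact support for each fixed $x$ (because $\supp w$ is a compact subset of the invariant region), so that no regularisation is needed in the $p$-variable and the only genuinely oscillatory integration is over $z$; this is absorbed exactly as in the definition of $w\star f$ in \cite{LechnerWaldmann:2011}, and for c) it is cleanest to do so through the limit $\eps\to0$ of the bona fide kernels $w_\eps$ from a). One should also bear in mind that the formula for $w^K$ is a priori valid only on $(-1,1)^n\times(-1,1)^n$; that $w^K$ has the claimed support uses that $w\star f$ depends on $f$ only through $f|_K$ and maps into functions supported in $K$, which is part of the construction in \cite{LechnerWaldmann:2011}.
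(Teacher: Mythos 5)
Your proof is correct and follows essentially the same route as the paper: the same change of variables $y=\tau_z(x)$, $u=\tau_{\te p}(x)$ reduces each operator to a kernel, the support statements hinge on the $\tau$-invariance of $K$ (resp.\ the compactness of $\supp w$ and $\supp\chi$), and the non-smoothness in c) is diagnosed from the divergence of $\gamma'$ near $\partial K$ along the diagonal. You are a bit more thorough on two points the paper glosses over --- the Paley--Wiener argument for non-compact support in b), and the curve argument covering arbitrary $w\neq0$ in c) --- but these are refinements of the same method, not a different one.
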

%========================================================
\begin{proof}
	We first consider the Moyal product. Then we have
	\begin{align*}
		(W_\eps^M f)(x)
		&=
		\int_{\Rl^n} dp\int_{\Rl^n} dz\,e^{2\pi i(p,z)}\,\chi(\eps p)\chi(\eps z)w(x+\te p) f(x+z)
		\\
		&=
		\int_{\Rl^n} dy
		\left(
			\int_{\Rl^n} dp\,e^{2\pi i(p,(y-x))}\,\chi(\eps p)\chi(\eps (y-x))w(x+\te p)\right) f(y)
		\,.
	\end{align*}
	Because both $w$ and $\chi$ have compact support, $\chi(\eps p)w(x+\te p)$ vanishes for all $p$ if $|x|$ is large enough. Furthermore, $\chi(\eps(y-x))=0$ if $|x-y|$ is large enough. Thus $W^M_\eps$ has a kernel of compact support. The smoothness of this kernel follows from well-known statements on the Fourier transform, i.e. we have shown {\em a)} for the Moyal product. 
	
	For fixed $x$, the limit $\eps\to0$ can be taken under the integral, and after a change of variables one finds
	\begin{align*}
		(W^Mf)(x)
		=
		\frac{1}{(2\pi)^{n/2}|\det\te|}\int dy\,e^{ix\cdot\te^{-1}y}\widetilde{w}(\te^{-1}(y-x))\,f(y)
		\,,
	\end{align*}
	where $\widetilde{w}$ denotes the Fourier transform of $w$. From this formula it is obvious that the kernel of $W^M$ is smooth and not compactly supported, i.e. we have shown {\em b)}.
	
	Now we consider the locally noncommutative product, and observe the following two properties of $w\star f$: First, if $x\notin K$, then also $\tau_{\te p}(x)\notin K$ for all $p\in\Rl^n$ since $K$ is invariant and $\tau$ is an action. In view of $\supp w\subset K$, we then have $w(\tau_{\te p}(x))=0$ for all $p$, and hence $(w\star f)(x)=0$ for $x\notin K$. Second, if $\supp f\cap K=\emptyset$, and $x\in K$, then $f(\tau_z(x))=0$ for all $z\in\Rl^n$, and thus again $(w\star f)(x)=0$. These remarks apply to both $W^K$ and $W^K_\eps$ and show that these operators have kernels supported in $K\times K$.
	
	Explicitly, we find after a change of variables, $x\notin K$,
	\begin{align*}
		&(W^K_\eps f)(x)
		=
		\int_{\Rl^n} dp\int_{\Rl^n} dz\,
		e^{2\pi i(p,z)}\,
		\chi(\eps p)\chi(\eps z)\,
		w(\tau_{\te p}(x)) f(\tau_z(x))
		\\
		&=
		\int_K \frac{dy\,\gamma'(y)}{|\det\te|}f(y)
		\int_{\Rl^n} dp\,
		e^{2\pi i(\gamma(x),\te^{-1}\gamma(y))}\,
		e^{2\pi i(\te p,(\gamma(y)-\gamma(x))}
		\chi(\eps p)\chi(\eps \gamma(y)-\eps\gamma(x))\,
		w(\tau_{\te p}(x))
		\,,
	\end{align*}
	where we used the shorthand notations $\gamma(x):=(\gamma(x_1),...,\gamma(x_n))$ and $\gamma'(x):=(\gamma'(x_1),...,\gamma'(x_n))$. If the distance of $x$ to the boundary of $K$ is smaller than some minimal distance $d$ (depending on $\eps$, $\te$, and the supports of $\chi$, $w$), then $\chi(\eps p)w(\tau_{\te p}(x))=0$ for all $p$ by the support properties of $\chi$ and $w$. Furthermore, if the distance $\eps|\gamma(y)-\gamma(x)|$ is large enough, then $\chi(\eps \gamma(y)-\eps\gamma(x))=0$. Since $\gamma$ diverges as $x$ or $y$ approach the boundary of $K$, this implies that the integral over $p$ vanishes for $(x,y)$ outside some compact set properly contained in the interior of $K$. Thus $W^K_\eps$ is a $\Ccin$-kernel operator.
	
	However, the kernel of $W^K$ is not smooth. In fact, by a calculation analogous to the one for the Moyal product, one can compute that $W^K$ has the integral kernel
	\begin{align*}
		(W^Kf)(x)
		&=
		\frac{1}{(2\pi)^{n/2}|\det\te|}
		\int dy\,k(x,y)f(y)\,,\\
		k(x,y)
		&:=
		\begin{cases}
			\gamma'(y)\,
			e^{i\gamma(x)\cdot \te^{-1}\gamma(y)}\,
			\widetilde{\varphi}(\te^{-1}(\gamma(y)-\gamma(x))\big) & x,y\in K\\
			0 & x\notin K \text{ or } y\notin K
		\end{cases}
		\,,
	\end{align*}
	where $\varphi:=w\circ\gamma^{-1}$ and the tilde denotes a Fourier transform. From this formula, one sees that $k$ is discontinuous at the boundary of $K$, for example by noting that $k(x,x)=\gamma'(x)$ diverges as $x$ approaches the boundary of $K$ from the inside.
\end{proof}
%========================================================

Both star product operators, $W^M$ and $W^K$, are thus limits of $\Ccin$-kernel operators which do not lie in this class themselves. In the case of the locally noncommutative star product, the smoothness of the kernel fails, but each $W^K_\eps$ has support in the same set $K$. Here it is conceivable that our methods can be generalized in such a way that also $W^K$ can be analyzed along the same lines as $\Ccin$-kernel operators, i.e. that its fundamental solutions, scattering operator, etc. can be constructed. 

The Moyal star product operator $W^M$ differs more drastically from the situation considered so far, as the supports of $W^M_\eps$ grow infinitely as $\eps\to0$ (albeit they have smooth kernels). Here one would need to pass to an asymptotic formulation of the scattering problem.

These matters will be discussed in more detail elsewhere. However, already at the present stage one can show that the derivative of the scattering operator at zero coupling, which is the essential quantity for the connection to Bogoliubov's formula, does exist also for the locally noncommutative multiplier.

%========================================================
\begin{proposition}
	Let $W=W^K$ be defined as in \eqref{eq:W}. Then $WR_\tau^\pm$ and $R_\tau^\pm W$ extend to bounded operators on $\Ltwon(M_\tau)$.
\end{proposition}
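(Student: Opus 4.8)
The plan is to reduce the claim to two statements, one about each factor: (i) $W:=W^K$ extends to a bounded operator on $\Ltwon(M_\tau)$, and (ii) $R_\tau^\pm$ extends to a bounded operator on $\Ltwon(M_\tau)$. Granting both, $R_\tau^\pm W$ and $WR_\tau^\pm$ are compositions of bounded operators on $\Ltwon(M_\tau)$ extending the originally defined maps. Unlike in the $\Ccin$-kernel case (Proposition~\ref{proposition:WRbounded}), the smoothing property {\em W2)} fails for $W^K$ --- its integral kernel is not even bounded near $\partial K$ --- so one cannot argue through a map $W:\Ltwon\to\Ccin(K)$ and genuinely needs the stronger input (ii). But (ii) is simply the classical energy estimate: since the time slab $M_\tau$ has finite extent in $x_0$, a Duhamel/Gronwall argument (or the construction of \cite{BarGinouxPfaffle:2007}, carried over to the prenormal case via $D'D$ as elsewhere in this paper) yields $\|R_\tau^\pm h\|_2\le C\|h\|_2$ for $h\in\Ccin(M_\tau)$, and hence a bounded extension.

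The substance of the proof is therefore (i). Since $\supp w$ lies in the interior of the $\tau$-invariant set $K$, the function $W^Kf=w\star f$ is supported in $K$ and depends only on $f|_K$, so it suffices to bound $W^K$ as an operator on $\Ltwon(K)$. For this I would apply Schur's test together with the explicit kernel from the preceding lemma,
\[
	(W^Kf)(x)=c\int_K k(x,y)\,f(y)\,dy,\qquad
	|k(x,y)|=c'\,J(y)\,\bigl|\widetilde\varphi\bigl(\te^{-1}(\gamma(y)-\gamma(x))\bigr)\bigr|
	\quad(x,y\in K),
\]
where $J(y):=\prod_k\gamma'(y_k)$ is the Jacobian of the componentwise map $\gamma$ and $\varphi=w\circ\gamma^{-1}$. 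Because $\supp w$ is a compact subset of $(-1,1)^n$, the function $\varphi$ lies in $\Ccin(\Rl^n)$, so $\widetilde\varphi$ is a Schwartz function. In the first Schur direction the substitution $\eta=\gamma(y)$, which turns $J(y)\,dy$ into $d\eta$, gives $\int_K|k(x,y)|\,dy=c'|\det\te|\,\|\widetilde\varphi\|_{L^1}$, uniformly in $x$. For the second direction, substituting $\xi=\gamma(x)$ turns $dx$ into $d\xi/\Gamma(\xi)$ with $\Gamma(\xi):=\prod_k\gamma'(\gamma^{-1}(\xi_k))$; using $J(y)=\Gamma(\gamma(y))$ and the shift $\zeta=\gamma(y)-\xi$ one is left with
\[
	\int_K|k(x,y)|\,dx
	=c'\int_{\Rl^n}\frac{\Gamma(\eta_0)}{\Gamma(\eta_0-\zeta)}\,\bigl|\widetilde\varphi(\te^{-1}\zeta)\bigr|\,d\zeta,
	\qquad\eta_0:=\gamma(y).
\]

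The main obstacle is to bound this last integral uniformly in $\eta_0$: the crude estimate $\Gamma(\eta_0)/\Gamma(\eta_0-\zeta)\le\Gamma(\eta_0)/\Gamma(0)$ diverges as $\eta_0\to\infty$ (i.e.\ as $y\to\partial K$), and this divergence has to be absorbed by the decay of $\widetilde\varphi$. The decisive point is that $g(t):=\gamma'(\gamma^{-1}(t))$ is even, bounded below by $\gamma'(0)>0$, increasing on $[0,\infty)$, and polynomially bounded --- with $\gamma(t)=\exp\frac1{1-t}$ near the boundary one has $g(t)\sim|t|(\log|t|)^2$, so $g(t)\le C(1+|t|)^{1+\delta}$ for every $\delta>0$. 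From these properties one obtains a Peetre-type inequality $g(a)/g(a-b)\le C(1+|b|)^{M}$ with $C,M$ independent of $a$ --- treat the regime $|a-b|\le1$ via $g(a-b)\ge\gamma'(0)$ and $g(a)\le g(|b|+1)$, and the regime $|a-b|>1$ via $g(a-b)\gtrsim|a-b|$. Taking the product over the $n$ coordinates gives $\Gamma(\eta_0)/\Gamma(\eta_0-\zeta)\le C(1+|\zeta|)^{Mn}$, and since $\widetilde\varphi$ is Schwartz the resulting $\zeta$-integral is finite and independent of $\eta_0$. Schur's test then yields $W^K\in\B(\Ltwon(K))$, hence $W^K\in\B(\Ltwon(M_\tau))$, and together with (ii) the proof is complete. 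I expect the only genuinely delicate steps to be the coordinate bookkeeping around $\gamma$ and $\Gamma$ and the case analysis in the Peetre-type inequality; everything else is a composition of bounded operators.
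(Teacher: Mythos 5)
Your proof is correct in outline but takes a genuinely different route, and it silently relies on a stronger fact than the paper uses. You decompose into (i) $\Ltwon$-boundedness of $W^K$ and (ii) $\Ltwon$-boundedness of $R_\tau^\pm$, and handle (i) by Schur's test on the explicit kernel with a Peetre-type estimate on $g=\gamma'\circ\gamma^{-1}$; that computation is sound, and the case split $|a-b|\le1$ versus $|a-b|>1$ (or equivalently $|a|\lesssim|b|$ versus $|a|\gg|b|$ with a doubling bound $g(2t)\lesssim g(t)$, which the explicit $\gamma$ satisfies) does the job. However (ii) is \emph{not} what the paper's Lemma~\ref{lemma:RestrictedGreenOperators} gives: that lemma only shows $R_\tau^\pm:\Ccin(M_\tau)\to\Ltwon(M_\tau)$ is continuous for the $\Ccin$-topology on the domain, which is strictly weaker than $\Ltwon$-boundedness. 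The paper deliberately avoids needing $R_\tau^\pm\in\B(\Ltwon)$: it instead invokes \cite[Prop.~4.6, Prop.~5.5]{LechnerWaldmann:2011} to get that $f\mapsto w\star f$ is continuous $\Cin(M_\tau)\to\Cin(K)$, composes with the continuous $R_\tau^\pm:\Ccin(M_\tau)\to\Cin(M_\tau)$, and uses that $\Cin(K)$ embeds into $\Ltwon(M_\tau)$. So the paper's argument is a soft functional-analytic one leaning on the cited continuity results, while yours is a hands-on kernel estimate; yours gives more (both factors separately bounded) at the price of an extra nontrivial input you should prove rather than assert, namely the energy estimate $\|R_\tau^\pm h\|_2\le C\|h\|_2$ on the slab, which is true (Duhamel plus Gronwall, and for the prenormal case via $D'D$) but nowhere in the paper.
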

%========================================================
\begin{proof}
	The pointwise product $f\mapsto w\cdot f$ is a continuous linear map $\Cin(M_\tau)\to\Cin(M_\tau)$. As $K$ is contained in $M_\tau$, the action $\alpha_x f:=f\circ\tau^K_x$ is a smooth polynomially bounded $\Rl^n$-action on $\Cin(M_\tau)$ \cite[Prop.~5.5]{LechnerWaldmann:2011}, and hence, $f\mapsto w\star f$ is also a continuous linear map $\Cin(M_\tau)\to\Cin(K)$ \cite[Prop.~4.6]{LechnerWaldmann:2011}. As $R_\tau^\pm:\Ccin(M_\tau)\to\Cin(M_\tau)$ are continuous, we see that $WR^\pm_\tau:\Ccin(M_\tau)\to\Cin(K)$ is continuous. But $\Cin(K)$ embeds continuously in $\Ltwon(M_\tau)$, and $\Ccin(M_\tau)\subset\Ltwon(M_\tau)$ is dense. Hence $WR^\pm_\tau$ extends continuously to $\Ltwon(M_\tau)\to\Ltwon(M_\tau)$. Similarly, $R^\pm_\tau W:\Ccin(M_\tau)\to\Cin(M_\tau)$ is continuous, and by the support properties of $R^\pm f$, we can also extend $R^\pm_\tau$ to $\Ltwon(M_\tau)$ as in Lemma~\ref{lemma:RestrictedGreenOperators}.
\end{proof}
%========================================================

In the case of a Moyal multiplier $W^M$, the space of functions on the time slice $M_\tau$ is not invariant under $W^M$. Rather, one needs to take $\tau\to\infty$ as $\eps\to0$ to guarantee that $W_\eps^M$ is always supported in $M_{\tau(\eps)}$. Some aspects of this limit have been studied in \cite{Borris:2011}.
%========================================================
\section{CCR and CAR Quantization}\label{Section:Quantization}
%========================================================

So far we have concentrated on the integro-differential equation $D_\la f=0$ for classical fields~$f$. However, as explained in the Introduction, most of our motivation comes from analyzing this equation for {\em quantum} fields. Thanks to the linearity of $D_\la$, a quantization of the solution spaces $\Sol_\la$ is possible in quite a straightforward manner, as we shall outline in this section. We will distinguish two cases: The case of a symmetric differential operator, which naturally leads to CCR quantization, and the case of a Dirac operator, which naturally leads to CAR quantization. For (pre-)normally differential operators $D$ without the non-local perturbation $W$, such an analysis has been carried out in~\cite{BarGinoux:2011}, and for Dirac operators with a perturbation which is local in time, see \cite{BorrisVerch:2010}.

We begin with the CCR case and assume that $D$ is a symmetric differential operator with symmetric perturbation $W=W^*$ (and real coupling $\la$). We also introduce a conjugation $C$ on $\Cl^N$, i.e. an antiunitary involution $C:\Cl^N\to\Cl^N$. By pointwise action, $C$ operates also on all function spaces appearing here, and will be denoted by the same symbol $C$ everywhere. Note that on $\Ltwon$, the so defined conjugation is antiunitary. To single out a real space of solutions, we assume that 
\begin{align}
	DCf=CDf
	\,,\qquad
	WCf=CWf
	\,,\qquad
	f\in\Ccin.
\end{align}
An example for this situation is given by the Klein-Gordon operator $D=\Box+V(x)$ and an integral operator $W_w$ \eqref{eq:CinftyIntegralOperator}, where $V\in C^\infty(\Rl^n,\Cl^{N\times N})$, $w\in C_0^\infty(\Rl^n\times\Rl^n,\Cl^{N\times N})$ are potentials which take only real values in some basis of $\Cl^N$, and $C$ is complex conjugation in that basis.

For real $\la$, then also $D_\la=D+\la W$ and $C$ commute, and we define
\begin{align}
	\Sol_{\la,C}
	:=
	\{f\in\Sol_\la\,:\,Cf=f\}
	\,.
\end{align}

%========================================================
\begin{proposition}{\bf (Symplectic structure of solution spaces)}\\
\vspace*{-8mm}
	\begin{enumerate}
		\item Under the assumptions made, the sesquilinear form $\rho_\la$ \eqref{eq:Rho} restricts to a real-valued real bilinear non-degenerate symplectic form $\sigma_\la:\Sol_{\la,C}\times\Sol_{\la,C}\to\Rl$.
		\item The M\o ller operators $\Om_{\la,\pm}$ and the scattering operator $S_\la$ restrict to symplectomorphisms $\Om_{\la,\pm}:(\Sol_{\la,C},\sigma_\la)\to(\Sol_{0,C},\sigma_0)$ and $S_\la:(\Sol_{0,C},\sigma_0)\to(\Sol_{0,C},\sigma_0)$, respectively.
	\end{enumerate}
\end{proposition}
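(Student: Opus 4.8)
\emph{Proof strategy.} The plan is to deduce both parts from a single structural fact — that the conjugation $C$ commutes with all propagators involved — and then feed this into the intertwining identities for $\rho_\la$ already established in Proposition~\ref{proposition:SolutionSpace}, Proposition~\ref{proposition:MollerOperators} and Theorem~\ref{theorem:ScatteringOperator}. So the first step is to show $C R_\la^\pm = R_\la^\pm C$. Since $C$ is an antilinear involution with $D_\la C = C D_\la$, and $\supp(Cf)=\supp f$ because $C$ acts pointwise, the operators $C R_\la^\pm C$ are $\Cl$-linear and satisfy $D_\la (C R_\la^\pm C) f = f = (C R_\la^\pm C) D_\la f$ together with $\supp(C R_\la^\pm C f)\subset J^\pm(\supp f)\cup J^\pm(K)$; hence by the uniqueness statement Proposition~\ref{proposition:Uniqueness}~b) one gets $C R_\la^\pm C = R_\la^\pm$, i.e.\ $C R_\la^\pm = R_\la^\pm C$ and $C R_\la = R_\la C$. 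The same reasoning at $\la=0$ gives $C R^\pm = R^\pm C$ and $CR=RC$. In particular $C$ preserves $\Sol_\la$, so $\Sol_{\la,C}$ is well defined; and from $\Om_{\la,\pm}(R_\la g)=Rg$ one reads off $\Om_{\la,\pm} C = C \Om_{\la,\pm}$ on $\Sol_\la$, hence $S_\la C = C S_\la$ on $\Sol_0$.

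Next I would record the elementary symmetrisation remark used throughout: if $f_\la = R_\la g \in \Sol_{\la,C}$ then $g' := \tfrac12(g+Cg)$ satisfies $C g' = g'$ and $R_\la g' = \tfrac12(f_\la + C f_\la) = f_\la$, and $g'$ lies in $\Ccin(\bSigma)$, respectively $\Ccin(\Sigma_{\tau_\pm}^\pm)$, whenever $g$ does. Thus every element of $\Sol_{\la,C}$ is of the form $R_\la g$ with a $C$-invariant, suitably localised $g\in\Ccin$. For part a), take $f_\la = R_\la g$, $g_\la = R_\la h$ with $Cg=g$, $Ch=h$; then $R_\la h = R_\la C h = C R_\la h$ is $C$-invariant, so antiunitarity of $C$ on $\Ltwon$ — applied to the literal, absolutely convergent integral $\langle g, R_\la h\rangle = \int_{\supp g}(g,R_\la h)$ — gives $\rho_\la(f_\la,g_\la) = \langle g, R_\la h\rangle = \langle C g, C R_\la h\rangle = \overline{\langle g, R_\la h\rangle}$, so $\rho_\la$ is real-valued on $\Sol_{\la,C}$. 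Being the restriction of a $\Cl$-sesquilinear form to a real vector space, $\sigma_\la := \rho_\la|_{\Sol_{\la,C}}$ is automatically $\Rl$-bilinear; antisymmetry follows from \eqref{eq:RhoSymmetry}, since for real representatives its left-hand side equals $\rho_\la(R_\la f, R_\la g)$ itself. Non-degeneracy I would get by contradiction: if $\sigma_\la(f_\la,\cdot)\equiv 0$ on $\Sol_{\la,C}$, write an arbitrary $\Sol_\la$-element as $R_\la g$ and split $g = g_1 + i g_2$ into its $C$-invariant parts $g_1 = \tfrac12(g+Cg)$, $g_2 = \tfrac1{2i}(g-Cg)$, so $R_\la g = R_\la g_1 + i R_\la g_2$ with $R_\la g_j \in \Sol_{\la,C}$; sesquilinearity gives $\rho_\la(f_\la, R_\la g) = 0$ for all $g$, and non-degeneracy of $\rho_\la$ (Proposition~\ref{proposition:SolutionSpace}~c)) forces $f_\la = 0$.

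For part b), since $\Om_{\la,\pm}$ is a bijection $\Sol_\la\to\Sol_0$ commuting with $C$, and $S_\la$ a bijection $\Sol_0\to\Sol_0$ commuting with $C$, each restricts to a linear bijection between the corresponding fixed-point subspaces, $\Om_{\la,\pm}:\Sol_{\la,C}\to\Sol_{0,C}$ and $S_\la:\Sol_{0,C}\to\Sol_{0,C}$. The symplectomorphism property is then immediate from the form-intertwinings already proven: $\sigma_0(\Om_{\la,\pm} f_\la, \Om_{\la,\pm} g_\la) = \rho_0(\Om_{\la,\pm} f_\la, \Om_{\la,\pm} g_\la) = \rho_\la(f_\la,g_\la) = \sigma_\la(f_\la,g_\la)$ by Proposition~\ref{proposition:MollerOperators}~c), and $\sigma_0(S_\la f, S_\la g) = \rho_0(S_\la f, S_\la g) = \rho_0(f,g) = \sigma_0(f,g)$ by Theorem~\ref{theorem:ScatteringOperator}~b).

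The only genuinely delicate step is the first one: one has to be careful that the antilinearity of $C$ composes correctly with the linear $R_\la^\pm$, so that $C R_\la^\pm C$ is again a \emph{linear} advanced/retarded fundamental solution of $D_\la$ to which Proposition~\ref{proposition:Uniqueness}~b) applies. Everything downstream is routine bookkeeping with the real/imaginary decomposition induced by $C$, together with the harmless observation that all the pairings $\langle g, R_\la h\rangle$ appearing are ordinary integrals over the compact support of $g$, so that the antiunitarity identity applies even though the solutions $R_\la h$ are not in $\Ltwon(\Rl^n)$ globally.
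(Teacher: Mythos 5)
Your proof is correct and follows essentially the same route as the paper's: establish $C R_\la^\pm C = R_\la^\pm$ by the uniqueness theorem, derive real-valuedness of $\rho_\la$ on $C$-invariant solutions from the antiunitarity of $C$, obtain antisymmetry from \eqref{eq:RhoSymmetry} and non-degeneracy via the real/imaginary decomposition, and then transport everything through the intertwining identities of Proposition~\ref{proposition:MollerOperators} and Theorem~\ref{theorem:ScatteringOperator}. You spell out a few points the paper leaves tacit (the $\Cl$-linearity of $C R_\la^\pm C$, the symmetrisation of representatives, and the remark that $\langle g, R_\la h\rangle$ is a genuine compactly supported integral), but the underlying argument is the same.
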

%========================================================
\begin{proof}
	{\em a)} It is clear that $\sigma_\la$ is real bilinear, and using a decomposition into real and imaginary parts w.r.t. $C$, one also sees that $\sigma_\la$ inherits non-degenerateness from $\rho_\la$.
	
	As the conjugation $C$ preserves supports, we have $\supp(CR_\la^\pm Cf)=\supp(R_\la^\pm Cf)\subset J^\pm(\supp f)\cup J^\pm(K)$, $f\in\Ccin$. Furthermore, $CR_\la^\pm CDf=CR_\la^\pm DCf=C^2f=f$ and similarly, $DCR^\pm_\la Cf=f$. In view of the uniqueness of the fundamental advanced and retarded solutions (Proposition~\ref{proposition:Uniqueness}), we conclude $CR_\la^\pm C=R_\la^\pm$. This implies that $\sigma_\la$ is real-valued: For any $f,g\in\Ccin$, $Cf=f$, $Cg=g$, we have
	\begin{align*}
		\sigma_\la(R_\la f,R_\la g)
		=
		\langle f,R_\la g\rangle
		=
		\langle Cf,R_\la Cg\rangle
		=
		\overline{\langle f,C R_\la Cg\rangle}
		=
		\overline{\langle f, R_\la g\rangle}
		=
		\overline{\sigma_\la(R_\la f, R_\la g)}
		\,.
	\end{align*}
	Antisymmetry of $\sigma_\la$ follows now from \eqref{eq:RhoSymmetry}. 
	
	{\em b)} The definition \eqref{def:Ompm} of $\Om_{\la,\pm}$ and the fact that $CR_\la C=R_\la$, $CRC=R$ implies that $\Om_{\la,\pm}$ maps $\Sol_{\la,C}$ onto $\Sol_{0,C}$. The fact that these operators are symplectic follows from Proposition~\ref{proposition:MollerOperators}~{\em c)}. The analogous statements for $S_\la$ are easily deduced from \eqref{def:ScatteringOperator} and Theorem~\ref{theorem:ScatteringOperator}~{\em b)}.
\end{proof}
%========================================================

We thus have a real linear space $\Sol_{\la,C}$ endowed with a real-valued real bilinear non-degenerate symplectic form $\sigma_\la$. These data can now be used to proceed to a quantum field $\phi_\la$ satisfying the differential equation $D_\la\phi_\la=0$ by considering the corresponding CCR algebra $\frA_\la:=\CCR(\Sol_{\la,C},\sigma_\la)$ over $(\Sol_{\la,C},\sigma_\la)$ in a canonical manner \cite{BratteliRobinson:1997}.

On the level of the $C^*$-algebras $\frA_\la$, $\frA_0$, we have Bogoliubov isomorphisms $\alpha_{\la,\pm}:\frA_\la\to\frA_0$, induced by the M\o ller operators, and a scattering automorphism $s_\la:\frA_0\to\frA_0$, induced by $S_\la$. 
\\
\\
The second case we want to consider is the more particular case of a Dirac operator $D$ as an example of a prenormally hyperbolic operator (with $D'=-D$). Thus we take $D=-i\gamma^\mu\partial_\mu+V(x)$, where the $\gamma^\mu$ satisfy the Clifford relations \cite{Coquereaux:1988}, in particular, $(\gamma^0)^*=\gamma^0=(\gamma^0)^{-1}$ and $(\gamma^k)^*=-\gamma^k=\gamma^0\gamma^k\gamma^0$, $k=1,...,s$. We restrict to dimension $n$ even or $n=3\mod 8$ or $n=9\mod 8$. Setting $N:=2^{n/2}$ for $n$ even and $N:=2^{(n-1)/2}$ for $n$ odd, one can then also find a charge conjugation for the Dirac matrices $\gamma^\mu$, that is an antiunitary involution $C:\Cl^N\to\Cl^N$ satisfying $C\gamma^\mu C=-\gamma^\mu$, $\mu=0,1,...,s$. As before, we will use the same symbol $C$ to denote its pointwise action on functions taking values in $\Cl^N$.

We then have $C(-i\gamma^\mu\partial_\mu)C=-i\gamma^\mu\partial_\mu$, and upon requiring $CV(x)C=V(x)$, $CWC=W$, also $CD_\la C=D_\la$ (for $\la$ real). As before, this implies $CR_\la^\pm C=R_\la^\pm$.

Furthermore, the potential $V$ and the perturbation $W$ are required to satisfy $\gamma^0 V\gamma^0=V^*$, $\gamma^0 W\gamma^0=W^*$. In that case, we have ${D_\la}^*=\gamma^0 D_\la \gamma^0$, and thus ${R_\la}^*=-\gamma^0 R_\la \gamma^0$. Now we define
\begin{align}
	\delta_\la :\Sol_\la\times\Sol_\la &\to\Cl
	\,,\\
	\delta_\la(R_\la f,R_\la g)
	&:=
	i\rho_\la(R_\la\gamma^0 f,R_\la g)
	=
	i\langle f,\gamma^0 R_\la g\rangle
	\,.
	\label{eq:ScalarProductDirac}
\end{align}
%========================================================
\begin{proposition}{\bf (Hilbert space structure of Dirac field solution spaces)}
	\begin{enumerate}
		\item Under the assumptions made, $(\Sol_\la,\delta_\la)$ is a pre Hilbert space, with Hilbert space completion denoted $\K_\la$. If $\Sigma_t$ is a Cauchy hyperplane such that $K\subset\Sigma_t^+$ or $K\subset\Sigma_t^-$, then 
		\begin{align}\label{eq:DeltaPositive}
			\delta_\la(R_\la f,R_\la g)
			=
			\int_{\Sigma_t} ((R_\la f)_t,(R_\la g)_t)
			\,.
		\end{align}
		\item The M\o ller operators $\Om_{\la,\pm}$ and the scattering operator $S_\la$ extend to unitaries $\Om_{\la,\pm}:\K_\la\to\K_0$ and $S_\la:\K_0\to\K_0$, respectively.
		\item The conjugation $C$ induces an antiunitary operator $C_\la$ on $\K_\la$ by $C_\la R_\la f:=R_\la Cf$. We have
		\begin{align}\label{eq:CCommutingWithMollerAndS}
			\Om_{\la,\pm}C_\la
			=
			C_0\Om_{\la,\pm}
			\,,\qquad
			S_\la C_0=C_0S_\la\,.
		\end{align}
	\end{enumerate}
\end{proposition}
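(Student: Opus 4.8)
The plan is to reduce everything to the integral representation \eqref{eq:DeltaPositive}, which is the one genuinely new computation; the remaining assertions then follow formally.

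\textbf{Part a).} First I would check that $\delta_\la$ is a well-defined sesquilinear form on $\Sol_\la\times\Sol_\la$. By Proposition~\ref{proposition:SolutionSpace}~b) we have $\ker R_\la=D_\la\Ccin$, so independence of the choice of representatives $f,g$ amounts to $\langle D_\la h,\gamma^0 R_\la g\rangle=0$ for all $h\in\Ccin$, which holds since $D_\la^*\gamma^0 R_\la g=\gamma^0 D_\la R_\la g=0$ (using ${D_\la}^*=\gamma^0 D_\la\gamma^0$ and $D_\la R_\la=0$), and symmetrically in the second slot; Hermiticity $\overline{\delta_\la(R_\la f,R_\la g)}=\delta_\la(R_\la g,R_\la f)$ follows from $(\gamma^0)^*=\gamma^0$ and ${R_\la}^*=-\gamma^0 R_\la\gamma^0$. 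The heart of the matter is \eqref{eq:DeltaPositive}. I would prove it by a Green's/Gauss identity analogous to the derivation of \eqref{eq:GaussMitRandtermen}, but now carried out for the first-order operator $D_\la$: fix a Cauchy hyperplane $\Sigma_t$ with, say, $K\subset\Sigma_t^+$, split $\langle f,\gamma^0 R_\la g\rangle$ into integrals over $\Sigma_t^-$ and $\Sigma_t^+$, use $f=D_\la R_\la^+ f$ on $\Sigma_t^-$ and $f=D_\la R_\la^- f$ on $\Sigma_t^+$ (the resulting integrands have compact support in the respective half-space because $\supp(R_\la^\pm f)\subset J^\pm(\supp f)\cup J^\pm(K)$ and $K$ lies strictly on one side of $\Sigma_t$), and integrate by parts. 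The key point is that the potential contributions drop out: over $\Sigma_t^-$ they vanish outright since $K$ is disjoint from the closure of $\Sigma_t^-$; over $\Sigma_t^+$ the term stemming from $\gamma^0 D R_\la g=-\la\,\gamma^0 WR_\la g$ and the term $\la\,WR_\la^- f$ cancel against each other, both being supported in $K\subset\Sigma_t^+$ and related by $W^*\gamma^0=\gamma^0 W$. What survives are exactly the boundary terms on $\Sigma_t$, which assemble into $-i\int_{\Sigma_t}((R_\la f)_t,(R_\la g)_t)$; the case $K\subset\Sigma_t^-$ is symmetric. Positivity and non-degeneracy are then immediate: $\delta_\la(R_\la f,R_\la f)=\int_{\Sigma_t}((R_\la f)_t,(R_\la f)_t)\ge0$, and if this vanishes then $R_\la f$ has vanishing Cauchy data on $\Sigma_t$, hence $R_\la f=0$ by uniqueness of the Cauchy problem for $D_\la$ on hyperplanes of this type (Corollary following Proposition~\ref{proposition:SolutionSpace}). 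Thus $(\Sol_\la,\delta_\la)$ is a pre-Hilbert space; let $\K_\la$ be its completion.

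\textbf{Parts b) and c).} For $\Om_{\la,+}$ I would pick $t>\tau_+$, so that $K\subset\Sigma_t^-$ and, by Proposition~\ref{proposition:MollerOperators}~b), $R_\la f$ and $\Om_{\la,+}(R_\la f)$ carry the same Cauchy data on $\Sigma_t$; applying \eqref{eq:DeltaPositive} to both $\delta_\la$ (with this $K$) and $\delta_0$ (for which every Cauchy hyperplane is admissible) gives $\delta_\la(\phi,\phi)=\delta_0(\Om_{\la,+}\phi,\Om_{\la,+}\phi)$ for all $\phi\in\Sol_\la$, hence $\delta_\la(\phi,\psi)=\delta_0(\Om_{\la,+}\phi,\Om_{\la,+}\psi)$ by polarization. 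So $\Om_{\la,+}$ is an isometry, and being a linear bijection (Proposition~\ref{proposition:MollerOperators}~a)) it extends to a unitary $\K_\la\to\K_0$; the same argument with $t<\tau_-$ handles $\Om_{\la,-}$, and $S_\la=\Om_{\la,+}\Om_{\la,-}^{-1}$ is unitary as a composition of unitaries. For c), note that $CR_\la^\pm C=R_\la^\pm$ (established in the text) means $C$ commutes with $R_\la$, so $C_\la R_\la f:=R_\la Cf=C(R_\la f)$ is just the restriction of $C$ to $\Sol_\la$; it is well-defined (alternatively, $R_\la f=0$ gives $f=D_\la h$, hence $Cf=D_\la(Ch)\in D_\la\Ccin=\ker R_\la$) and antilinear, and by \eqref{eq:DeltaPositive} together with the antiunitarity of $C$ on $\Cl^N$ one gets $\delta_\la(C\phi,C\psi)=\int_{\Sigma_t}(C\phi_t,C\psi_t)=\overline{\int_{\Sigma_t}(\phi_t,\psi_t)}=\overline{\delta_\la(\phi,\psi)}$, so $C_\la$ extends to an antiunitary involution on $\K_\la$. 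Since $g\in\Ccin(\Sigma_{\tau_\pm}^\pm)$ implies $Cg\in\Ccin(\Sigma_{\tau_\pm}^\pm)$, one finds $\Om_{\la,\pm}C_\la(R_\la g)=\Om_{\la,\pm}(R_\la Cg)=RCg=C(Rg)=C_0\Om_{\la,\pm}(R_\la g)$ using $CRC=R$ for the unperturbed operator, which is $\Om_{\la,\pm}C_\la=C_0\Om_{\la,\pm}$; and $S_\la C_0=C_0S_\la$ follows by composition.

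The main obstacle is \eqref{eq:DeltaPositive}: one must set up the first-order Green identity with enough care to track the support conditions that make the boundary integral over $\Sigma_t$ the only surviving contribution, and to see that on the side of $\Sigma_t$ containing $K$ the two potential terms cancel — by virtue of $\gamma^0 W\gamma^0=W^*$ — rather than accumulate. Everything else (well-definedness, Hermiticity, positivity, unitarity of the Møller and scattering maps, antiunitarity of $C_\la$) is then routine.
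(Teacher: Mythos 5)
Your proposal is correct and follows essentially the same strategy as the paper's proof. In part a) you integrate by parts directly while the paper packages the same computation as a divergence of the current $X_\pm^\mu(x)=-((R_\la^\pm f)(x),\gamma^0\gamma^\mu(R_\la g)(x))$ and applies Gauss' theorem — the substance (the $V$-terms cancel by $\gamma^0 V\gamma^0=V^*$, the $W$-terms either vanish on the $K$-free side or cancel pairwise via $\gamma^0 W\gamma^0=W^*$, leaving only the $\Sigma_t$-boundary term) is identical. In part b) you obtain unitarity of $\Om_{\la,\pm}$ directly from the Cauchy-data formula \eqref{eq:DeltaPositive} together with the fact that $\phi$ and $\Om_{\la,\pm}\phi$ share Cauchy data on $\Sigma_t$ past the perturbation (Proposition~\ref{proposition:MollerOperators}~b)), whereas the paper routes the same facts through the intertwining $\rho_0(\Om_{\la,\pm}\cdot,\Om_{\la,\pm}\cdot)=\rho_\la(\cdot,\cdot)$ of Proposition~\ref{proposition:MollerOperators}~c); these are equivalent, and your version is marginally more self-contained once \eqref{eq:DeltaPositive} is in hand. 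Part c) matches the paper's argument.
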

%========================================================
\begin{proof}
	{\em a)} As ${R_\la}^*=-\gamma^0 R_\la \gamma^0$, the mapping $\delta_\la(R_\la f,R_\la g)=-i\langle f,\gamma^0 R_\la g\rangle=i\langle R_\la f,\gamma^0g\rangle$ \eqref{eq:ScalarProductDirac} is well-defined, and clearly sesquilinear. 
	
	Furthermore, once \eqref{eq:DeltaPositive} is established, it is clear from this form that $\delta_\la$ is positive semidefinite. Actually, it is then definite: For if $\delta_\la(R_\la f,R_\la f)=0$, then the solution $R_\la f\in\Sol_\la$ vanishes on each Cauchy hyperplane $\Sigma_t$ such that $K\subset\Sigma_t$. Thus $R_\la f$ is supported in a light cone, which implies $R_\la f=0$.

	So it remains to show \eqref{eq:DeltaPositive}. To this end, we proceed in an analogous fashion as in the proof of Proposition~\ref{proposition:Uniqueness}, see also \cite[Prop.~1.2, Prop.~2.2]{Dimock:1982},  \cite[Prop.~2.1]{BorrisVerch:2010}, \cite[Lemma~3.17]{BarGinoux:2011}, for similar arguments for Dirac operators without non-local perturbation.
	
	Let $\Sigma_t$ be a Cauchy hyperplane such that $K\subset\Sigma_t^+$ (the case $K\subset\Sigma_t^-$ is analogous), $f,g\in\Ccin$, and consider the vector fields $X_\pm^\mu(x):=-((R_\la^{\pm}f)(x),\gamma^0\gamma^\mu (R_\la g)(x))$. Using the relations of the Clifford algebra and $V(x)^*=\gamma^0V(x)\gamma^0$, we compute
	\begin{align*}
		\partial_\mu X^\mu_\pm(x)
		&=
		-i((R_\la^\pm f)(x),\gamma^0(-i\gamma^\mu\partial_\mu (R_\la g)(x))
		+i
		((-i\gamma^\mu\partial_\mu(R_\la^\pm f)(x),\gamma^0(R_\la g)(x))
		\\
		&=
		-i((R_\la^\pm f)(x),\gamma^0(DR_\la g)(x))
		+
		i((DR_\la^\pm f)(x),\gamma^0(R_\la g)(x))
		\\
		&=
		i\la((R_\la^\pm f)(x),\gamma^0(W R_\la g)(x))
		-i\la ((WR_\la^\pm f)(x),\gamma^0(R_\la g)(x))
		+
		i(f(x),\gamma^0(R_\la g)(x))
		.
	\end{align*}
	The support of $X_\pm^\mu$ has compact intersection with $\Sigma_t^\mp$. Making use of Gauss' theorem, $\pm\int_{\Sigma_t} (X^0_\pm)_t=\int_{{\Sigma_t}^\mp}\partial_\mu X^\mu_\pm$, and taking into account $(\gamma^0)^2=1$, we get
	\begin{align*}
		\mp \int_{\Sigma_t} ((R_\la^\pm f)_t,(R_\la g)_t)
		&=
		i\int_{\Sigma_t^\pm}\,(f,\gamma^0 R_\la g)
		+
		i\la\int_{\Sigma_t^\pm}\,(R_\la^\pm f,\gamma^0 WR_\la g)
		-
		i\la\int_{\Sigma_t^\pm}\,(WR_\la^\pm f,\gamma^0 R_\la g)
		\\
		&=
		i\int_{\Sigma_t^\pm}\,(f,\gamma^0 R_\la g)
		\,,
	\end{align*}
	where in the second step, we have used $K\subset\Sigma^+$ and $\gamma^0W\gamma^0=W^*$. 
	
	Adding the equations for both choices of ``$\pm$'' gives the claimed equation \eqref{eq:DeltaPositive}:
	\begin{align*}
		\int_{\Sigma_t} ((R_\la f)_t,(R_\la g)_t)
		=
		i\langle f,\gamma^0 R_\la g\rangle
		=
		\delta_\la(R_\la f,R_\la g)
		\,.
	\end{align*}
	{\em b)} Let $f^\pm\in\Ccin(\Sigma_{\tau^\pm}^\pm)$. Then $\Om_{\la,\pm}R_\la f^\pm=R f^\pm$, and with Proposition~\ref{proposition:MollerOperators}~{\em c)}, we get
	\begin{align*}
		\delta_0(\Om_{\la,\pm} R_\la f^\pm,\Om_{\la,\pm} R_\la g^\pm)
		&=
		\delta_0(R f^\pm, R g^\pm)
		=
		i\rho_0(R\gamma^0 f^\pm,Rg^\pm)
		\\
		&=
		i\rho_\la(R_\la\gamma^0 f^\pm,R_\la g^\pm)
		=
		\delta_\la(R_\la f^\pm,R_\la g^\pm)
		\,.
	\end{align*}
	As the spaces $R_\la \Ccin\Sigma_{\tau^\pm}^\pm)\subset\K_\la$ and $R_0\Ccin(\Sigma_{\tau^\pm}^\pm)\subset\K_0$ are dense by construction of $\K_\la$, $\K_0$, this shows that $\Om_{\la,\pm}$ are linear isometries with dense domains and ranges, and therefore extend to unitaries. The scattering operator is also unitary as the product of $\Om_{\la,+}$ and ${\Om_{\la,-}}^{-1}$.
	
	{\em c)} As $R_\la$ and $C$ commute, $C_\la$ is well defined, and in view of
	\begin{align*}
		\delta_\la(C_\la R_\la f,C_\la R_\la g)
		=
		i\langle Cf,\gamma^0 R_\la Cg\rangle
		=
		\overline{-i\langle f,C\gamma^0 CR_\la g\rangle}
		=
		\overline{i\langle f,\gamma^0 R_\la g\rangle}
		=
		\overline{\delta_\la(R_\la f,R_\la g)}
		,
	\end{align*}
	also antiunitary. The commutation relations \eqref{eq:CCommutingWithMollerAndS} follow directly from the definition of the M\o ller operators and $CR_\la C=R_\la$.
\end{proof}
%========================================================

In the case of a perturbed Dirac operator, we have thus constructed a family of Hilbert spaces $\K_\la$ with antiunitary involutions $C_\la$, and can use these data to proceed to the CAR algebras $\frF_\la:=\CAR(\K_\la,C_\la)$ over $(\K_\la,C_\la)$ \cite{BratteliRobinson:1997}, similarly to the CCR case for symmetric $D_\la$. Again we have induced Bogoliubov isomorphisms $\alpha_{\la,\pm}:\frF_\la\to\frF_0$ on the level of the $C^*$-algebras $\frF_\la$ and $\frF_0$, induced by the M\o ller operators, and a scattering automorphism $s_\la:\frF_0\to\frF_0$, induced by $S_\la$. These structures form the prerequisites for a systematic study of the corresponding quantum Dirac fields.

In quantum field theory, the essential physical information is contained in the local structure of the algebra $\frA_\la$ (or $\frF_\la)$, i.e. a net $\Oold\mapsto\frA_\la(\Oold)$ mapping sub spacetimes $\Oold$ of $\Rl^n$ to sub algebras $\frA_\la(\Oold)$ of $\frA_\la$ \cite{Haag:1996}. Note that although the algebras $\frA_\la$ and $\frA_0$ (or $\frF_\la,\frF_0)$ are isomorphic ``globally'', i.e. as $C^*$-algebras, they are not ``locally isomorphic'' in the sense that the isomorphisms $\alpha_{\la,\pm}$ do not carry the subalgebras $\frA_\la(\Oold)$, $\frA_0(\Oold)$ into each other. In fact, one expects that the effects of the noncommutative perturbation $W$ manifest themselves precisely on the level of these nets by deviations from the usual local and covariant QFT setting. These matters will be studied in more detail in a forthcoming publication.
% \input{Conclusions}
%%%%%%%%%%%%%%%%%%%%%%%%%%%%%%%%%%%%

\small
%%===================================================
\newcommand{\etalchar}[1]{$^{#1}$}

%%===================================================


\begin{thebibliography}{GGBI{\etalchar{+}}04}
\providecommand{\url}[1]{\texttt{#1}}
\providecommand{\urlprefix}{ }

\bibitem[BDH13]{BeniniDappiaggiHack:2013}
M.~Benini, C.~Dappiaggi and T.-P. Hack.
\newblock {Quantum Field Theory on Curved Backgrounds -- A Primer}.
\newblock \emph{Int. Jour. Mod. Phys. A} 28 (2013) 1330023
\newline\urlprefix\url{http://arxiv.org/abs/1306.0527}

\bibitem[BFV03]{BrunettiFredenhagenVerch:2001}
R.~Brunetti, K.~Fredenhagen and R.~Verch.
\newblock {The generally covariant locality principle -- A new paradigm for
  local quantum physics}.
\newblock \emph{Commun. Math. Phys.} 237 (2003) 31--68
\newline\urlprefix\url{http://arxiv.org/abs/math-ph/0112041}

\bibitem[BG11]{BarGinoux:2011}
C.~B\"{a}r and N.~Ginoux.
\newblock {Classical and Quantum Fields on Lorentzian Manifolds}.
\newblock \emph{Global Differential Geometry, Springer Proceedings in Math.} 17
  (2011) 359--400
\newline\urlprefix\url{http://arxiv.org/abs/1104.1158v2}

\bibitem[BGP07]{BarGinouxPfaffle:2007}
C.~B\"{a}r, N.~Ginoux and F.~Pf\"{a}ffle.
\newblock \emph{{Wave Equations on Lorentzian Manifolds and Quantization}}.
\newblock European Mathematical Society (2007)

\bibitem[BLOT90]{BogolubovLogunovOksakTodorov:1990}
N.~Bogolubov, A.~A. Logunov, A.~I. Oksak and I.~T. Todorov.
\newblock \emph{{General principles of quantum field theory}}.
\newblock Kluwer Academic Publishers, Dordrecht, Boston, London (1990)

\bibitem[BLS11]{BuchholzLechnerSummers:2011}
D.~Buchholz, G.~Lechner and S.~J. Summers.
\newblock {Warped Convolutions, Rieffel Deformations and the Construction of
  Quantum Field Theories}.
\newblock \emph{Commun. Math. Phys.} 304 (2011) 95--123
\newline\urlprefix\url{http://arxiv.org/abs/1005.2656}

\bibitem[Bor11]{Borris:2011}
M.~Borris.
\newblock \emph{{Quantum Field Theory on Non-commutative Spacetimes}}.
\newblock Ph.D. thesis, Leipzig (2011)
\newline\urlprefix\url{http://www.qucosa.de/recherche/frontdoor/?tx_slubopus4frontend[id]=6778}

\bibitem[BR87]{BratteliRobinson:1987}
O.~Bratteli and D.~W. Robinson.
\newblock \emph{{Operator Algebras and Quantum Statistical Mechanics I}}.
\newblock Springer (1987)

\bibitem[BR97]{BratteliRobinson:1997}
O.~Bratteli and D.~W. Robinson.
\newblock \emph{{Operator Algebras and Quantum Statistical Mechanics II}}.
\newblock Springer (1997)

\bibitem[BS80]{BogolubovShirkov:1980}
N.~N. Bogolubov and D.~V. Shirkov.
\newblock \emph{{Introduction to the Theory of Quantized Fields}}.
\newblock Wiley, third edition (1980)

\bibitem[BS08]{BuchholzSummers:2008}
D.~Buchholz and S.~J. Summers.
\newblock {Warped Convolutions: A Novel Tool in the Construction of Quantum
  Field Theories}.
\newblock In E.~Seiler and K.~Sibold (eds.), \emph{{Quantum Field Theory and
  Beyond: Essays in Honor of Wolfhart Zimmermann}}, 107--121. World Scientific
  (2008)
\newline\urlprefix\url{http://arxiv.org/abs/0806.0349}

\bibitem[BV10]{BorrisVerch:2010}
M.~Borris and R.~Verch.
\newblock {Dirac field on Moyal-Minkowski spacetime and non-commutative
  potential scattering}.
\newblock \emph{Commun. Math. Phys.} 293 (2010) 399--448
\newline\urlprefix\url{http://arxiv.org/abs/0812.0786}

\bibitem[BW07]{BahnsWaldmann:2007}
D.~Bahns and S.~Waldmann.
\newblock {Locally Noncommutative Space-Times}.
\newblock \emph{Rev. Math. Phys.} 19 (2007) 273--306
\newline\urlprefix\url{http://arxiv.org/abs/math/0607745}

\bibitem[Con90]{Connes:1990}
A.~Connes.
\newblock \emph{{Noncommutative Geometry}}.
\newblock InterEditions Paris (1990)

\bibitem[Coq88]{Coquereaux:1988}
R.~Coquereaux.
\newblock {Spinors, Reflections, and Clifford Algebras: A Review}.
\newblock In \emph{{Spinors in Physics and Geometry}}. World Scientific,
  Singapore (1988)

\bibitem[DFR95]{DoplicherFredenhagenRoberts:1995}
S.~Doplicher, K.~Fredenhagen and J.~E. Roberts.
\newblock {The Quantum structure of space-time at the {P}lanck scale and
  quantum fields}.
\newblock \emph{Commun. Math. Phys.} 172 (1995) 187--220
\newline\urlprefix\url{http://arxiv.org/abs/hep-th/0303037}

\bibitem[DG10]{DurhuusGayral:2010}
B.~Durhuus and V.~Gayral.
\newblock {The Scattering problem for a noncommutative nonlinear Schrodinger
  equation}.
\newblock \emph{SIGMA} 6 (2010) 046

\bibitem[Dim82]{Dimock:1982}
J.~Dimock.
\newblock {Dirac quantum fields on a manifold}.
\newblock \emph{Trans. Amer. Math. Soc.} 269 (1982) 133--147

\bibitem[DN01]{DouglasNekrasov:2001}
M.~R. Douglas and N.~A. Nekrasov.
\newblock {Noncommutative field theory}.
\newblock \emph{Rev. Mod. Phys.} 73 (2001) 977--1029
\newline\urlprefix\url{http://arxiv.org/abs/hep-th/0106048}

\bibitem[FV12]{FewsterVerch:2012}
C.~J. Fewster and R.~Verch.
\newblock {Dynamical locality and covariance: What makes a physical theory the
  same in all spacetimes?}
\newblock \emph{Annales Henri Poincar\'{e}} 13 (2012) 1613--1674
\newline\urlprefix\url{http://arxiv.org/abs/1106.4785}

\bibitem[GBVF01]{GraciaBondiaVrillyFigueroa:2001}
J.~M. Gracia-Bondia, J.~C. V\'{a}rilly and H.~Figueroa.
\newblock \emph{{Elements of Noncommutative Geometry}}.
\newblock {Birkh\"{a}user Advanced Texts}. Birkh\"{a}user, Boston, Basel,
  Berlin (2001)

\bibitem[GGBI{\etalchar{+}}04]{GayralGraciaBondiaIochumSchuckerVarilly:2004}
V.~Gayral, J.~M. Gracia-Bondia, B.~Iochum, T.~Schucker and J.~C. Varilly.
\newblock {Moyal planes are spectral triples}.
\newblock \emph{Commun. Math. Phys.} 246 (2004) 569--623
\newline\urlprefix\url{http://arxiv.org/abs/hep-th/0307241}

\bibitem[GL07]{GrosseLechner:2007}
H.~Grosse and G.~Lechner.
\newblock {Wedge-Local Quantum Fields and Noncommutative {M}inkowski Space}.
\newblock \emph{JHEP} 11 (2007) 012
\newline\urlprefix\url{http://arxiv.org/abs/0706.3992}

\bibitem[GL08]{GrosseLechner:2008}
H.~Grosse and G.~Lechner.
\newblock {Noncommutative Deformations of Wightman Quantum Field Theories}.
\newblock \emph{JHEP} 09 (2008) 131
\newline\urlprefix\url{http://arxiv.org/abs/0808.3459}

\bibitem[GW05]{GrosseWulkenhaar:2005}
H.~Grosse and R.~Wulkenhaar.
\newblock {Renormalisation of phi**4 theory on noncommutative R**4 in the
  matrix base}.
\newblock \emph{Commun. Math. Phys.} 256 (2005) 305--374
\newline\urlprefix\url{http://arxiv.org/abs/hep-th/0401128}

\bibitem[Haa96]{Haag:1996}
R.~Haag.
\newblock \emph{{Local Quantum Physics - Fields, Particles, Algebras}}.
\newblock Springer, second edition (1996)

\bibitem[HNW07]{HellerNeumaierWaldmann:2006}
J.~G. Heller, N.~Neumaier and S.~Waldmann.
\newblock {A C*-Algebraic Model for Locally Noncommutative Spacetimes}.
\newblock \emph{Lett. Math. Phys.} 80 (2007) 257--272
\newline\urlprefix\url{http://arxiv.org/abs/math/0609850}

\bibitem[HW01]{HollandsWald:2001}
S.~Hollands and R.~M. Wald.
\newblock {Local Wick Polynomials and Time Ordered Products of Quantum Fields
  in Curved Spacetime}.
\newblock \emph{Commun.Math.Phys.} 223 (2001) 289--326
\newline\urlprefix\url{http://arxiv.org/abs/gr-qc/0103074v2}

\bibitem[HW10]{HollandsWald:2010}
S.~Hollands and R.~M. Wald.
\newblock {Axiomatic quantum field theory in curved spacetime}.
\newblock \emph{Commun.Math.Phys.} 293 (2010) 85--125
\newline\urlprefix\url{http://arxiv.org/abs/0803.2003}

\bibitem[LW11]{LechnerWaldmann:2011}
G.~Lechner and S.~Waldmann.
\newblock {Strict deformation quantization of locally convex algebras and
  modules}.
\newblock \emph{Preprint}  (2011)
\newline\urlprefix\url{http://arxiv.org/abs/1109.5950}

\bibitem[M{\"{u}}h11]{Muhlhoff:2011}
R.~M{\"{u}}hlhoff.
\newblock {Cauchy Problem and Green's Functions for First Order Differential
  Operators and Algebraic Quantization}.
\newblock \emph{J.Math.Phys.} 52 (2011) 022303
\newline\urlprefix\url{http://arxiv.org/pdf/1001.4091v2}

\bibitem[Rie92]{Rieffel:1992}
M.~A. Rieffel.
\newblock \emph{{Deformation Quantization for Actions of $R^d$}}, volume 106 of
  \emph{{Memoirs of the Amerian Mathematical Society}}.
\newblock American Mathematical Society, Providence, Rhode Island (1992)

\bibitem[Riv07]{Rivasseau:2007}
V.~Rivasseau.
\newblock {Non-commutative renormalization}.
\newblock In \emph{{S\'{e}minaire Poincar\'{e} X}}, 15--95 (2007)
\newline\urlprefix\url{http://arxiv.org/abs/0705.0705}

\bibitem[San10]{Sanders:2009-2}
K.~Sanders.
\newblock {The locally covariant Dirac field}.
\newblock \emph{Rev. Math. Phys.} 22 (2010) 381--430
\newline\urlprefix\url{http://arxiv.org/abs/0911.1304}

\bibitem[Sol08]{Soloviev:2008}
M.~A. Soloviev.
\newblock {On the failure of microcausality in noncommutative field theories}.
\newblock \emph{Phys. Rev.} D77 (2008) 125013
\newline\urlprefix\url{http://arxiv.org/abs/0802.0997}

\bibitem[SW64]{StreaterWightman:1964}
R.~F. Streater and A.~Wightman.
\newblock \emph{{PCT, Spin and Statistics, and All That}}.
\newblock Benjamin-Cummings, Reading, MA (1964)

\bibitem[SY02]{SellYou:2002}
G.~R. Sell and Y.~You.
\newblock \emph{{Dynamics of Evolutionary Equations}}.
\newblock Springer (2002)

\bibitem[Sza03]{Szabo:2003}
R.~J. Szabo.
\newblock {Quantum field theory on noncommutative spaces}.
\newblock \emph{Phys. Rept.} 378 (2003) 207--299
\newline\urlprefix\url{http://arxiv.org/abs/hep-th/0109162}

\bibitem[Tha91]{Thaller:1991}
B.~Thaller.
\newblock \emph{{The Dirac Equation}}.
\newblock Springer (1991)

\bibitem[Tre67]{Treves:1967}
F.~Treves.
\newblock \emph{{Topological vector spaces, distributions, and kernels}}.
\newblock Academic Press (1967)

\bibitem[Ver11]{Verch:2011}
R.~Verch.
\newblock {Quantum Dirac Field on Moyal-Minkowski Spacetime - Illustrating
  Quantum Field Theory over Lorentzian Spectral Geometry}.
\newblock \emph{Acta Phys. Pol. B Proc. Suppl.} 4 (2011) 267--292
\newline\urlprefix\url{http://arxiv.org/abs/1106.1138v1}

\bibitem[Wal94]{Wald:1994}
R.~M. Wald.
\newblock \emph{{Quantum field theory in curved space-time and black hole
  thermodynamics}}.
\newblock Chicago University Press (1994)

\bibitem[Wal07]{Waldmann:2007}
S.~Waldmann.
\newblock \emph{{Poisson-Geometrie und Deformationsquantisierung}}.
\newblock Springer (2007)

\bibitem[Wal12]{Waldmann:2012}
S.~Waldmann.
\newblock \emph{{Geometric Wave Equations}}.
\newblock Lecture Notes (2012)
\newline\urlprefix\url{http://arxiv.org/pdf/1208.4706v1}

\end{thebibliography}
\end{document}